\documentclass{article}
\usepackage{helvet}
\usepackage{courier}
\usepackage[T1]{fontenc}
\usepackage[a4paper]{geometry}
\usepackage[active]{srcltx}
\usepackage{color}
\usepackage{verbatim}
\usepackage{mathrsfs}
\usepackage{mathtools}
\usepackage{dsfont}
\usepackage{amsmath}
\usepackage{amsthm}
\usepackage{amssymb}
\usepackage{makeidx}
\makeindex
\usepackage[pdfusetitle,
 bookmarks=true,bookmarksnumbered=true,bookmarksopen=true,bookmarksopenlevel=2,
 breaklinks=false,pdfborder={0 0 1},backref=page,colorlinks=false]
 {hyperref}
\hypersetup{
 unicode=false, linkcolor=black, citecolor=black, urlcolor=blue, filecolor=blue, pdfpagelayout=OneColumn, pdfnewwindow=true, pdfstartview=XYZ, plainpages=false}

\makeatletter
\numberwithin{equation}{section}
\theoremstyle{plain}
\newtheorem{thm}{\protect\theoremname}[section]
\theoremstyle{remark}
\newtheorem{claim}[thm]{\protect\claimname}
\theoremstyle{definition}
\newtheorem{defn}[thm]{\protect\definitionname}
\theoremstyle{plain}
\newtheorem{prop}[thm]{\protect\propositionname}
\theoremstyle{plain}
\newtheorem{lyxalgorithm}[thm]{\protect\algorithmname}
\theoremstyle{plain}
\newtheorem{assumption}[thm]{\protect\assumptionname}
\theoremstyle{plain}
\newtheorem{lem}[thm]{\protect\lemmaname}
\theoremstyle{plain}
\newtheorem{cor}[thm]{\protect\corollaryname}
\theoremstyle{remark}
\newtheorem{rem}[thm]{\protect\remarkname}
\theoremstyle{plain}
\newtheorem{question}[thm]{\protect\questionname}

\usepackage{color}
\usepackage{slashed}

 \let\myTOC\tableofcontents
 \renewcommand\tableofcontents{%
   \pdfbookmark[1]{Contents}{}
   \myTOC
   \cleardoublepage
   \pagenumbering{arabic} }

\global\long\def\foreignlanguage#1#2{#2}%
\global\long\def\selectlanguage#1{}%

\allowdisplaybreaks

\DeclareMathOperator\supp{\mathrm{supp}}
\DeclareMathOperator\Tr{\mathrm{Tr}}

\makeatother

\providecommand{\algorithmname}{Algorithm}
\providecommand{\assumptionname}{Assumption}
\providecommand{\claimname}{Claim}
\providecommand{\corollaryname}{Corollary}
\providecommand{\definitionname}{Definition}
\providecommand{\lemmaname}{Lemma}
\providecommand{\propositionname}{Proposition}
\providecommand{\questionname}{Question}
\providecommand{\remarkname}{Remark}
\providecommand{\theoremname}{Theorem}

\begin{document}
\title{A skepticism on the concept of quantum state related to quantum field
theory on curved spacetime}
\author{Hideyasu Yamashita\\
{\small Division of Liberal Arts and Sciences, Aichi-Gakuin University}\\
\texttt{\small yamasita@dpc.aichi-gakuin.ac.jp}
}

\date{\today}

\maketitle
\newcommand{\dcolor}{\definecolor{note_fontcolor}{rgb}{0.1, 0.0, 0.8}}
\definecolor{HYnote_fontcolor}{rgb}{0.2, 0.0, 0.2}
\definecolor{hycolor}{rgb}{0.3, 0.0, 0.3}
\newcommand{\hyc}{\color{hycolor}}
\newenvironment{HYnote}
 {\textcolor{note_fontcolor}\bgroup\ignorespaces}
  {\ignorespacesafterend\egroup} 

\newenvironment{trivenv}
  {\bgroup\ignorespaces}
  {\ignorespacesafterend\egroup}

\newcommand{\displabel}[1]{}

\newcommand{\hidable}[3]{#2}
\newcommand{\hidea}[1]{{#1}}
\newcommand{\hideb}[1]{{#1}}
\newcommand{\hidec}[1]{{#1}}
\newcommand{\hidep}[1]{{#1}}
\renewcommand{\hidec}[1]{}
\renewcommand{\hidep}[1]{}

\newcommand{\thlab}[1]{{\tt [#1]}}

\newcommand{\black}{\color{black}}

\global\long\def\N{\mathbb{N}}%
\global\long\def\C{\mathbb{C}}%
\global\long\def\Z{\mathbb{Z}}%
 
\global\long\def\R{\mathbb{R}}%
 
\global\long\def\im{\mathrm{i}}%

\global\long\def\di{\partial}%
 
\global\long\def\d{{\rm d}}%

\global\long\def\ol#1{\overline{#1}}%
\global\long\def\ul#1{\underline{#1}}%
\global\long\def\ob#1{\overbrace{#1}}%

\global\long\def\ov#1{\overline{#1}}%

\global\long\def\then{\Rightarrow}%
 
\global\long\def\Then{\Longrightarrow}%

\global\long\def\N{\mathbb{N}}%
\global\long\def\C{\mathbb{C}}%
\global\long\def\Z{\mathbb{Z}}%
 
\global\long\def\R{\mathbb{R}}%
 
\global\long\def\im{\mathrm{i}}%

\global\long\def\di{\partial}%
 
\global\long\def\d{{\rm d}}%

\global\long\def\ol#1{\overline{#1}}%
\global\long\def\ul#1{\underline{#1}}%
\global\long\def\ob#1{\overbrace{#1}}%

\global\long\def\ov#1{\overline{#1}}%

\global\long\def\then{\Rightarrow}%
 
\global\long\def\Then{\Longrightarrow}%

\global\long\def\cA{\mathcal{A}}%
\global\long\def\cB{\mathcal{B}}%
 
\global\long\def\cC{\mathcal{C}}%
 
\global\long\def\cD{\mathcal{D}}%
\global\long\def\cE{\mathcal{E}}%
 
\global\long\def\cF{\mathcal{F}}%
 
\global\long\def\cG{{\cal G}}%
 
\global\long\def\cH{\mathcal{H}}%
 
\global\long\def\cI{\mathcal{I}}%
 
\global\long\def\cJ{\mathcal{J}}%
\global\long\def\cK{\mathcal{K}}%
 
\global\long\def\cL{\mathcal{L}}%
 
\global\long\def\cM{\mathcal{M}}%
 
\global\long\def\cN{\mathcal{N}}%
 
\global\long\def\cO{\mathcal{O}}%
 
\global\long\def\cP{\mathcal{P}}%
 
\global\long\def\cQ{\mathcal{Q}}%
 
\global\long\def\cR{\mathcal{R}}%
 
\global\long\def\cS{\mathcal{S}}%
 
\global\long\def\cT{\mathcal{T}}%
 
\global\long\def\cU{\mathcal{U}}%
 
\global\long\def\cV{\mathcal{V}}%
 
\global\long\def\cW{\mathcal{W}}%
\global\long\def\cX{\mathcal{X}}%
 
\global\long\def\cY{\mathcal{Y}}%
 
\global\long\def\cZ{\mathcal{Z}}%

\global\long\def\scA{\mathscr{A}}%
\global\long\def\scB{\mathscr{B}}%
\global\long\def\scC{\mathscr{C}}%
\global\long\def\scD{\mathscr{D}}%
 
\global\long\def\scE{\mathscr{E}}%
 
\global\long\def\scF{\mathscr{F}}%
 
\global\long\def\scG{\mathscr{G}}%
 
\global\long\def\scH{\mathscr{H}}%
 
\global\long\def\scI{\mathscr{I}}%
 
\global\long\def\scJ{\mathscr{J}}%
 
\global\long\def\scK{\mathscr{K}}%
 
\global\long\def\scL{\mathscr{L}}%
 
\global\long\def\scM{\mathscr{M}}%
 
\global\long\def\scN{\mathscr{N}}%
 
\global\long\def\scO{\mathscr{O}}%
 
\global\long\def\scP{\mathscr{P}}%
 
\global\long\def\scR{\mathscr{R}}%
\global\long\def\scS{\mathscr{S}}%
 
\global\long\def\scT{\mathscr{T}}%
 
\global\long\def\scU{\mathscr{U}}%
 
\global\long\def\scW{\mathscr{W}}%
\global\long\def\scZ{\mathscr{Z}}%

\global\long\def\bbA{\mathbb{A}}%
 
\global\long\def\bbB{\mathbb{B}}%
 
\global\long\def\bbD{\mathbb{D}}%
 
\global\long\def\bbE{\mathbb{E}}%
 
\global\long\def\bbF{\mathbb{F}}%
 
\global\long\def\bbG{\mathbb{G}}%
 
\global\long\def\bbI{\mathbb{I}}%
 
\global\long\def\bbJ{\mathbb{J}}%
 
\global\long\def\bbK{\mathbb{K}}%
 
\global\long\def\bbL{\mathbb{L}}%
 
\global\long\def\bbM{\mathbb{M}}%
 
\global\long\def\bbP{\mathbb{P}}%
 
\global\long\def\bbQ{\mathbb{Q}}%
 
\global\long\def\bbT{\mathbb{T}}%
 
\global\long\def\bbU{\mathbb{U}}%
 
\global\long\def\bbX{\mathbb{X}}%
 
\global\long\def\bbY{\mathbb{Y}}%
\global\long\def\bbW{\mathbb{W}}%

\global\long\def\bbOne{1\kern-0.7ex  1}%

\renewcommand{\bbOne}{\mathbbm{1}}

\global\long\def\bB{\mathbf{B}}%
 
\global\long\def\bG{\mathbf{G}}%
 
\global\long\def\bH{\mathbf{H}}%
\global\long\def\bS{\boldsymbol{S}}%
 
\global\long\def\bT{\mathbf{T}}%
 
\global\long\def\bX{\mathbf{X}}%
\global\long\def\bY{\mathbf{Y}}%
\global\long\def\bW{\mathbf{W}}%
 
\global\long\def\boT{\boldsymbol{T}}%

\global\long\def\fraka{\mathfrak{a}}%
 
\global\long\def\frakb{\mathfrak{b}}%
 
\global\long\def\frakc{\mathfrak{c}}%
 
\global\long\def\frake{\mathfrak{e}}%
 
\global\long\def\frakf{\mathfrak{f}}%
 
\global\long\def\fg{\mathfrak{g}}%
 
\global\long\def\frakh{\mathfrak{h}}%
 
\global\long\def\fraki{\mathfrak{i}}%
\global\long\def\frakk{\mathfrak{k}}%
 
\global\long\def\frakl{\mathfrak{l}}%
 
\global\long\def\frakm{\mathfrak{m}}%
 
\global\long\def\frakn{\mathfrak{n}}%
 
\global\long\def\frako{\mathfrak{o}}%
 
\global\long\def\frakp{\mathfrak{p}}%
 
\global\long\def\frakq{\mathfrak{q}}%
 
\global\long\def\fraks{\mathfrak{s}}%
 
\global\long\def\fs{\mathfrak{s}}%
 
\global\long\def\fraku{\mathfrak{u}}%
\global\long\def\frakz{\mathfrak{z}}%

\global\long\def\fA{\mathfrak{A}}%
 
\global\long\def\fB{\mathfrak{B}}%
 
\global\long\def\fC{\mathfrak{C}}%
 
\global\long\def\fD{\mathfrak{D}}%
 
\global\long\def\fF{\mathfrak{F}}%
 
\global\long\def\fG{\mathfrak{G}}%
 
\global\long\def\fK{\mathfrak{K}}%
 
\global\long\def\fL{\mathfrak{L}}%
 
\global\long\def\fM{\mathfrak{M}}%
 
\global\long\def\fP{\mathfrak{P}}%
 
\global\long\def\fR{\mathfrak{R}}%
 
\global\long\def\fS{\mathfrak{S}}%
\global\long\def\fT{\mathfrak{T}}%
 
\global\long\def\fU{\mathfrak{U}}%
\global\long\def\fV{\mathfrak{V}}%
 
\global\long\def\fW{\mathfrak{W}}%
\global\long\def\fX{\mathfrak{X}}%
\global\long\def\fZ{\mathfrak{Z}}%

\global\long\def\ssS{\mathsf{S}}%
\global\long\def\ssT{\mathsf{T}}%
 
\global\long\def\ssW{\mathsf{W}}%

\global\long\def\rM{\mathrm{M}}%
\global\long\def\prj{\mathfrak{P}}%

{} 
\global\long\def\sy#1{{\color{blue}#1}}%

\global\long\def\magenta#1{{\color{magenta}#1}}%

\global\long\def\symb#1{#1}%

\global\long\def\emhrb#1{\text{{\color{red}{\huge {\bf #1}}}}}%

\newcommand{\symbi}[1]{\index{$ #1$}{\color{red}#1}} 

{} 
\global\long\def\SYM#1#2{#1}%

\renewcommand{\SYM}[2]{\symb{#1}}

\newcommand{\usuji}{\color[rgb]{0.7,0.4,0.4}} \newcommand{\usu}{\color[rgb]{0.5,0.2,0.1}}
\newenvironment{Usuji} {\begin{trivlist}   \item \usuji }  {\end{trivlist}}
\newenvironment{Usu} {\begin{trivlist}   \item \usu }  {\end{trivlist}} 

\newcommand{\term}[1]{\textcolor[rgb]{0, 0, 1}{\bf #1}}
\newcommand{\termi}[1]{{\bf #1}}

\newcommand{\slim}{\mathop{\mbox{s-lim}}} %

\newcommand{\wlim}{\mathop{\mbox{w-lim}}}

\newcommand{\limsub}{\mathop{\mbox{\rm lim-sub}}}

\global\long\def\bboxplus{\boxplus}%

\renewcommand{\bboxplus}{\mathop{\raisebox{-0.8ex}{\text{\begin{trivenv}\LARGE{}$\boxplus$\end{trivenv}}}}}

\global\long\def\shuff{\sqcup\kern-0.3ex  \sqcup}%

\renewcommand{\shuff}{\shuffle}

\global\long\def\upha{\upharpoonright}%

\global\long\def\ket#1{|#1\rangle}%
 
\global\long\def\bra#1{\langle#1|}%

{} 
\global\long\def\lll{\vert\kern-0.25ex  \vert\kern-0.25ex  \vert}%
 \renewcommand{\lll}{{\vert\kern-0.25ex  \vert\kern-0.25ex  \vert}}

\global\long\def\biglll{\big\vert\kern-0.25ex  \big\vert\kern-0.25ex  \big\vert\kern-0.25ex  }%
 
\global\long\def\Biglll{\Big\vert\kern-0.25ex  \Big\vert\kern-0.25ex  \Big\vert}%

\newcommand{\iiia}[1]{{\left\vert\kern-0.25ex\left\vert\kern-0.25ex\left\vert #1
  \right\vert\kern-0.25ex\right\vert\kern-0.25ex\right\vert}}

\global\long\def\iii#1{\iiia{#1}}%

\global\long\def\Upa{\Uparrow}%
 
\global\long\def\Nor{\Uparrow}%

\newcommand{\vertt}{\kern-0.6ex\vert}
\renewcommand{\Nor}{[\kern-0.16ex ]}

\global\long\def\Prob{\mathbb{P}}%
\global\long\def\Var{\mathrm{Var}}%
\global\long\def\Cov{\mathrm{Cov}}%
\global\long\def\Ex{\mathbb{E}}%
{} 
\global\long\def\Ae{{\rm a.e.}}%
 
\global\long\def\samples{\bOm}%


\global\long\def\bOne{{\bf 1}}%

\global\long\def\Ten{\bullet}%
{} %

\global\long\def\TT{\intercal}%
 \renewcommand{\TT}{\mathsf{T}}

\global\long\def\trit{\vartriangle\!\! t}%

\global\long\def\bProj{\mathfrak{P}}%
\global\long\def\bE{\mathbf{E}}%

\global\long\def\grad{\mathrm{grad}}%
 
\global\long\def\Hom{\mathrm{Hom}}%

\global\long\def\Inv{{\rm Inv}}%

\global\long\def\Lie{{\rm Lie}}%
 
\global\long\def\leng{\text{{\rm leng}}}%
 
\global\long\def\meas{\text{{\rm meas}}}%

\global\long\def\GreenOp{\mathsf{E}}%
\global\long\def\Sol{\mathsf{Sol}}%
\global\long\def\GL{{\rm GL}}%

\global\long\def\Pow{\mathsf{P}}%

\global\long\def\p{\mathbf{p}}%
 
\global\long\def\q{\mathbf{q}}%

\global\long\def\rF{\mathrm{F}}%
 
\global\long\def\rE{\mathrm{E}}%

\global\long\def\sfS{\mathsf{S}}%
\global\long\def\sfQ{\mathsf{Q}}%

\global\long\def\spec{{\rm spec}}%
 
\global\long\def\Sp{{\rm Sp}}%
\global\long\def\sfD{\mathsf{D}}%
 
\global\long\def\sperp{\angle}%
{} 

{} %

\global\long\def\WICK#1{{\rm w}\{#1\}}%
 \renewcommand{\WICK}[1]{{:}#1{:}}

\global\long\def\AProj{{\rm AProj}}%
\global\long\def\Ad{{\rm Ad}}%
\global\long\def\ad{{\rm ad}}%
\global\long\def\Borel{{\rm Borel}}%
\global\long\def\area{{\bf S}}%
\global\long\def\bbS{\mathbb{S}}%
\global\long\def\bbOne{\mathds{1}}%
\global\long\def\Bdd{\mathscr{B}}%
\global\long\def\Borel{{\rm Borel}}%
\global\long\def\bP{{\bf P}}%

\global\long\def\Cl{{\rm C}\ell}%
\global\long\def\cconj{\blacklozenge}%
\global\long\def\cpt{{\rm c}}%
\global\long\def\cc{{\bf c}}%
\global\long\def\curve{\mathsf{C}}%

\global\long\def\CCRW{\mathcal{CCR}}%
\global\long\def\ccr{{\rm Ccr}}%
\global\long\def\CCR{{\rm CCR}}%
\global\long\def\CAR{{\rm CAR}}%
\global\long\def\ComplexS{J}%

\global\long\def\diam{\text{{\rm diam}}}%
\global\long\def\dom{\mathrm{dom}}%
\global\long\def\End{{\rm End}}%
\global\long\def\ex{{\rm ex}}%

\global\long\def\Fin{{\rm Fin}}%

\global\long\def\DiracOpm{{\bf \mathsf{D}}}%
\global\long\def\Dconj{\mathfrak{d}}%
\global\long\def\DSpinors{\mathscr{C}}%
\global\long\def\even{{\rm even}}%

\global\long\def\GreenOp{\mathsf{S}}%
\global\long\def\hol{{\rm hol}}%

\global\long\def\Id{{\rm Id}}%
 
\global\long\def\id{{\rm id}}%

\newcommand{\Kahler}{K{\"a}hler}

\global\long\def\LBundle{\cL}%
 
\global\long\def\Leb{\text{{\rm Leb}}}%

\global\long\def\Manifold{\mathscr{X}}%
 
\global\long\def\Mat{{\rm Mat}}%

\global\long\def\nablaslash{\slashed{\nabla}}%

\global\long\def\ON{{\rm ON}}%
 
\global\long\def\OCpl{{\rm OC}}%
\global\long\def\Odd{{\rm Odd}}%

\global\long\def\Proj{{\rm Proj}}%
\global\long\def\cauindep{\pitchfork}%

\global\long\def\qdev{{\rm qd}}%
\global\long\def\Paths{\scE}%
 
\global\long\def\Pin{{\rm Pin}}%
 
\global\long\def\POVM{\mathsf{M}}%
\global\long\def\STime{\mathcal{M}}%

\global\long\def\quantity{\mathsf{q}}%
\global\long\def\Quantities{\mathscr{Q}}%
\global\long\def\ran{\mathrm{ran}}%

\global\long\def\rD{{\rm D}}%
\global\long\def\re{{\bf r}}%
\global\long\def\RDSPinors{\scR}%

\global\long\def\subset{\subseteq}%
\global\long\def\sgn{{\rm sgn}}%
\global\long\def\Span{{\rm span}}%
\global\long\def\Spin{{\rm Spin}}%
\global\long\def\spin{\mathfrak{spin}}%
\global\long\def\Sol{\mathsf{Sol}_{{\rm sc}}}%
\global\long\def\vac{{\rm vac}}%

\global\long\def\weyl{\mathsf{W}}%
\global\long\def\x{\mathbf{x}}%
 
\global\long\def\y{\mathbf{y}}%

\def\foreignlanguage#1#2{#2}


\let\ruleorig=\rule
\renewcommand{\rule}{\noindent\ruleorig}

\global\long\def\labelenumi{(\arabic{enumi})}%

\newcommand{\reff}{\ref}

\begin{abstract}
Some skeptical arguments on the physical reality of quantum states
are given. First, I argue that the algebraic formalism of quantum
field theory in curved spacetime (algebraic QFTCS, AQFTCS) leads to
such a skepticism.%
{} Of course we have the purely mathematical notion of states on a $C^{*}$-algebra
$\mathfrak{A}$, but usually in non-relativistic quantum mechanics
and quantum field theory in Minkowski spacetime (QFTM), not all of
them are considered to be physically real; Some of them are physically
real (or realizable) states, but others are non-physical ``fictional''
states. Only the states which can be expressed as a density matrix
on a fixed ``physical Hilbert space'' (the GNS representation space
of $\mathfrak{A}$ w.r.t.~the vacuum) are viewed to be physically
real. On the other hand, in QFTCS, there is no distinguished physical
Hilbert space; no distinguished vacuum state. Thus we cannot distinguish
physically real states from fictional states. The second part of my
argument is a counterargument to what I call ``pragmatic realism
on quantum states'', which insists as follows: ``We are permitted
to regard a quantum state as a physical reality, because the concept
of quantum state is indispensable in quantum physics.'' I argue that
the concept of quantum state is indeed dispensable in non-relativistic
QM, and hence this pragmatic realist thesis is vacuous there. I give
a conjecture that it is also dispensable in QFTM and QFTCS, and some
preliminary considerations on it.
\end{abstract}

\section{Introduction}

There are several skeptical stances regarding the physical reality
of quantum states. Two major examples of such stances will be the
following: \emph{relational quantum mechanics} (RQM) (Rovelli \cite{Rov1996})
and \emph{quantum Bayesianism} (QBism) (e.g., \cite{Fuc2002,DFPS2021}).

Although Rovelli \cite{Rov1996} would not go deeply into the ontological
arguments, the following theses of him indicate that he takes an anti-realist
stance on quantum states:

\begin{quote}
The notion rejected here is the notion of absolute, or observer-independent,
state of a system; equivalently, the notion of observer-independent
values of physical quantities. \cite[introduction]{Rov1996}
\end{quote}
\begin{quote}
If different observers give different descriptions of the state of
the same system, this means that the notion of state is observer dependent.
\cite[Section 4]{Rov1996}
\end{quote}
While also QBism emphasizes the observer-dependence of quantum states,
it is known to be more radical, extremist and disputatious (e.g.,
see the bombast of Fuchs \cite[Sec.9]{Fuc2002}: ``QUANTUM STATES
DO NOT EXIST'').

In this paper, I will give some skeptical arguments on the physical
reality of quantum states. Although I am more or less sympathetic
to both RQM and QBism, I avoid any direct evaluation of them in this
article, so that my arguments are logically independent from the the
debates on RQM and QBism. However it should be noted that some motivations
and ideas in this article are inspired by them. On the other hand,
also note that there is a basic difference between my stance and theirs
in the following point: I think that until the clear answer to the
question ``what is an observer?'' is known, it is desirable to avoid
referring to an observer as well as possible, in quantum physics;
Instead I would like to emphasize the ``scope-dependence'' rather
than the observer-dependence. While I shall not define the precise
meaning of ``scope'' in this paper, I use this term as an analogue
to the term ``scope'' commonly used in logic and linguistics.

To formalize the notion of scope-dependence, I suggest that the mathematical
notion of causal net of the algebras \cite{BR87,Haa96,Araki99} of
local observables serves as a common algebraic framework for the three
types of theories: non-relativistic quantum mechanics, quantum field
theory in Minkowski spacetime (QFTM) and quantum field theory in curved
spacetime (QFTCS) in Section \ref{sec:The-causal-net}.%

In Sections \ref{sec:Global-state} and \ref{sec:Local-state}, I
argue that the algebraic formalism of quantum physics, especially
that of algebraic QFTCS (AQFTCS, \cite{Wal1994,BFV2003,BF2009,HW2015,KM2015,FR2016})
leads to a skepticism on the physical reality of quantum states.%
{} Of course we have the purely mathematical notion of states on a $C^{*}$-algebra
$\mathfrak{A}$, but usually in non-relativistic quantum mechanics
and QFTM, not all of them are considered to be physically real; Some
of them are physically real (or realizable) states, but others are
non-physical ``fictional'' states. Only the states which can be
expressed as a density matrix on a fixed ``physical Hilbert space''
(the GNS representation space of $\mathfrak{A}$ w.r.t.~the vacuum)
are viewed to be physically real. On the other hand, in QFTCS, there
is no distinguished physical Hilbert space; no distinguished vacuum
state. Thus we cannot distinguish physically real states from fictional
states.

However it might be logically possible for a ``quantum state realist''
to insist as follows: ``There should be the physically real states
distinguished from the fictional states, even if we cannot know how
to distinguish them''. This view may be partially supported by what
I call ``pragmatic realism on quantum states''.

{} First consider the analogous problem on the physical reality of vector
potentials. In the classical theory of electromagnetic fields, the
notion of vector potential is a very useful tool, but considered to
be dispensable. On the other hand, in the quantum theory concerning
electromagnetic fields, that notion appears to become almost indispensable
(e.g., to explain the Aharonov\textendash Bohm effect).

The argument of pragmatic realists can be expressed as follows.
\begin{claim}
\textbf{\label{claim:Pragmatic-realism-for-vector}Pragmatic realism
for vector potentials}: We are permitted to regard a vector potential
as a physical \emph{reality}, because the concept of vector potential
is \emph{indispensable} for us in quantum physics.
\end{claim}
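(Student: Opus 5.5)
Claim \ref{claim:Pragmatic-realism-for-vector} is not a mathematical proposition. It is the pragmatic realist's thesis, which the paper states only in order to examine it. So what I can offer is a reconstruction of the argument as explicit premises and an inference, not a deductive proof. I would present it as an instance of the Quine--Putnam indispensability schema: (P1) we are permitted to regard as physically real whatever entities are indispensable to our best physical theories; (P2) the vector potential $A$ is indispensable to quantum physics; hence (C) we are permitted to regard $A$ as physically real. The work then splits into giving (P2) concrete content and making the bridge premise (P1) explicit.

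For (P2) I would use the Aharonov--Bohm configuration. Take a charged particle on a non-simply-connected region $\Omega\subset\R^{3}$ outside a solenoid, so that $F=\d A=0$ on $\Omega$ but the flux $\Phi$ through the hole is non-zero. The Schr\"odinger Hamiltonian $H=\frac{1}{2m}(-\im\nabla-eA)^{2}$ on $L^{2}(\Omega)$ is not unitarily equivalent to the free one when $e\Phi\notin2\pi\Z$. The observable interference shift is the phase $e\oint_{\gamma}A$ around a loop $\gamma$ encircling the solenoid. Here I would check two things. First, no gauge transformation $A\mapsto A+\nabla\chi$ with single-valued $\chi$ removes this phase. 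Second, since $F$ vanishes on $\Omega$, no local functional of $F$ restricted to $\Omega$ can reproduce the observed spectrum or scattering data. Together these show that the field strength alone, used locally, is insufficient, in contrast to the classical case, where $A$ is dispensable because the Lorentz force depends only on $F$.

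The main obstacle is that this establishes indispensability only of the gauge-equivalence class of $A$, equivalently of the holonomies $\exp(\im e\oint_{\gamma}A)$, that is, of the flat connection up to gauge. It does not establish indispensability of $A$ itself. I would try first to reformulate (P2) as ``the connection modulo gauge is indispensable'' and to see whether (C) survives with that weaker object. That revision already concedes that the concrete mathematical representative is dispensable and only an invariant derived structure is needed. The same gap must be flagged for (P1): indispensability licenses at most realism about what is invariantly encoded. This is exactly the point the paper will exploit against the analogous pragmatic realism for quantum states.
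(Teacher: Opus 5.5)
Your reconstruction is sound and matches the paper's own treatment, which also gives no proof: the paper presents the claim as the pragmatic realist's argument, calls it ``fairly persuasive by its simple logic'' while doubting that ``indispensable $\Rightarrow$ real'' holds universally, and cites the Aharonov--Bohm effect as the reason $A$ seems indispensable. It also observes that the argument becomes vacuous if indispensability fails, conjecturing in a footnote that loop-wise observables such as Wilson loops (holonomies) make vector potentials dispensable, which is exactly the gap you locate between $A$ and its gauge class.

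One small correction: on the unbounded exterior region the Aharonov--Bohm and free Hamiltonians are in fact abstractly unitarily equivalent, since both are purely absolutely continuous on $[0,\infty)$ with infinite multiplicity. The statement you need, and which your gauge argument essentially supplies, is that for $e\Phi\notin2\pi\Z$ no unitary commuting with the position observables intertwines them, i.e.\ no gauge transformation $e^{\im e\chi}$ with $e^{\im e\chi}$ single-valued.
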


Although probably the implication ``indispensable $\Rightarrow$
real'' will not be justified universally, Claim \ref{claim:Pragmatic-realism-for-vector}
is fairly persuasive by its simple logic. However, if the premise
``vector potentials are indispensable'' is disproved, this argument
immediately becomes vacuous. \footnote{I conjecture that the ``loop-wise'' observables such as the Wilson
loops, which are related with the geometric notion of \emph{holonomy},
can make vector potentials dispensable.}

Similarly, a pragmatic realist for quantum states would say as follows:
\begin{claim}
\textbf{\label{claim:Pragmatic-realism-for-states}Pragmatic realism
for quantum states}: We are permitted to regard a quantum state as
a physical \emph{reality}, because the concept of quantum state is
\emph{indispensable} in quantum physics.
\end{claim}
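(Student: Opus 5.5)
The claim is a conditional argument of the form ``indispensable $\Rightarrow$ permitted to regard as real''. I would establish it exactly as Claim \ref{claim:Pragmatic-realism-for-vector} was made persuasive for vector potentials, by separating it into two pieces: a bridge principle, and a premise specific to quantum states. The bridge principle is the indispensability principle in the Quine--Putnam style: whatever our best physical theory must quantify over to deliver its empirical content, we are entitled to treat as real. I would take this principle as the methodological assumption the pragmatic realist already grants in Claim \ref{claim:Pragmatic-realism-for-vector}. I would not attempt to justify it universally, since the text itself concedes that this cannot be done. All the real work then goes into the premise: that the concept of quantum state is indispensable in quantum physics.

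To support the premise, I would proceed through the three frameworks the paper singles out. In non-relativistic QM, every empirical prediction takes the form of a probability $\omega(E)$, or an expectation value $\omega(A)$, for an observable $A$ or a projection $E$ in the algebra $\mathfrak{A}$. Here $\omega$ is a state, represented as a density matrix on the physical Hilbert space. The Born rule, the dynamics $\omega\mapsto\omega\circ\alpha_t$, and the update under measurement are all formulated as operations on $\omega$. So the plan is to show that the predictive map
\[
(\text{preparation},\ \text{observable})\ \longmapsto\ \text{probability}
\]
factors through the state space. The state is then exactly the equivalence class of preparations that no observable can distinguish, and this is what makes the notion unavoidable: removing it leaves the theory with no object that carries the preparation-dependent content.

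For QFTM I would repeat the argument with the vacuum $\omega_{\vac}$ and its GNS representation. Normal states relative to $\omega_{\vac}$ encode particle content, scattering data, and correlations. For QFTCS I would appeal to Hadamard states, which are needed to define renormalized quantities such as the stress-energy expectation value $\omega(\WICK{T_{\mu\nu}})$ entering semiclassical gravity. Once indispensability is in hand for all three frameworks, applying the bridge principle yields the claim.

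The main obstacle is the premise. Showing that states are used everywhere does not show that they cannot be eliminated. One would have to argue that no state-free reformulation, for instance one working only with preparation procedures and transition probabilities, or only with algebraic relations among observables, reproduces the same predictions. My first attempt would be to argue that any such reformulation must contain a positive normalized linear functional on $\mathfrak{A}$ under another name, so that it reintroduces states rather than dispensing with them. I expect this step to be exactly where the argument is vulnerable, and it is the point the paper goes on to contest.
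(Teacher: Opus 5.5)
This statement is not a result the paper establishes. It is the pragmatic realist's thesis, set up in parallel with Claim~\ref{claim:Pragmatic-realism-for-vector} so that it can be undercut, and the paper's whole argument is that its premise is false. Like the paper, you leave the bridge principle ``indispensable $\Rightarrow$ real'' unargued; that is not where things break. The gap is the indispensability premise, and the paper refutes it explicitly for non-relativistic QM.

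Your factorization argument shows only that the map (preparation, observable) $\mapsto$ probability passes through a quotient of the preparations. That makes a state a \emph{defined} notion, which is the opposite of indispensable. Your further claim that removing the state ``leaves no object carrying the preparation-dependent content'' is false, because the preparation itself carries that content. This is exactly the point of Section~\ref{sec:Local-state}: on a type I (or atomic) local algebra, a law ``if the system is prepared in the state $\omega$, then \dots'' can be rewritten as \eqref{eq:law-prepare-local-op}. So it is the operation, not the state, that is indispensable.

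Section~\ref{sec:NRQM-withoutStates} then states the laws as the prior conditional probability \eqref{eq:priorCondProb}, $\Prob(B|A)=\Tr(AB)^{*}AB/\Tr A^{*}A$. This uses only the canonical trace $\Tr_{\alpha}$ of the type I factor. That trace is fixed by the algebra, involves no choice, and in infinite dimensions cannot be normalized, so it is not a state. Your planned rebuttal, that any reformulation secretly contains a positive normalized functional, therefore fails. The only such functional in sight is $X\mapsto\Tr(A^{*}AX)/\Tr A^{*}A$, i.e.\ the density matrix $A^{*}A/\Tr A^{*}A$ built from the operation $A$. Exhibiting it shows that states are eliminable, not that they are needed.

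The paper goes further. Proposition~\ref{prop:kappa24} and Assumption~\ref{assu:kappa24} express these probabilities through the Galilei-invariant functions $\kappa_{2},\kappa_{4}$, in a formula that mentions no vector, no operator and no density matrix. The POVM picture likewise writes the laws through state-free quantities such as $\zeta_{3}$, the Sorkin densities and holonomies.

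Your extension to relativistic theories also runs against the paper. It concedes that the vacuum looks almost indispensable in QFTM. But Section~\ref{sec:Global-state} argues that this support disappears in QFTCS: there is no distinguished vacuum or physical Hilbert space, so physically realizable states cannot be told apart from fictional ones. Appealing to Hadamard states does not answer this. They form a class rather than one preferred state, and so leave open which state the realist should call real.

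For free fields, Section~\ref{sec:QFTM-and-QFTCS} builds causal nets of type I factors from the CCR and CAR algebras and again writes the laws with $\Tr_{Z}$ or $\Tr_{W}$ alone, e.g.\ $\Prob_{W}(u',v'|u,v)=\frac{1}{2}\bigl(1+(u|u')(v|v')-(u|v')(v|u')\bigr)$.

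Your stress-energy example does hit a point the paper leaves open: whether laws about Wick-type observables such as $T_{\mu\nu}$ can be stated without states. An open question there cannot carry a claim about quantum physics in general, and that claim has already failed in non-relativistic QM. So the proposal cannot be completed as a proof. The paper's conclusion is that the claim is vacuous in non-relativistic QM, and conjecturally in QFTM and QFTCS as well.
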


I shall argue that the concept of quantum state is indeed dispensable
in non-relativistic QM, and hence this thesis of pragmatic realists
is vacuous there, in Section \ref{sec:NRQM-withoutStates}. Furthermore
I give a conjecture that it is also dispensable in QFTM and QFTCS,
and some preliminary considerations on it in Section \ref{sec:QFTM-and-QFTCS}.
Here I do not discuss the (in)dispensability of states in the purely
mathematical theory of $C^{*}$- and $W^{*}$-algebras. When I refer
to an ``(in)dispensability'' in this paper, it refers to a ``\emph{physical
(in)dispensability}'', or more precisely an ``\emph{empirical (in)dispensability}'',
that is, whether the concept of state is (in)dispensable to describe
the physical/empirical laws which can be verified or falsified experimentally.
I expect that also the concept of observer is empirically dispensable,
but that is not the main topic in this paper.

My skepticism on the concept of state is one of motivations of the
papers \cite{Yam2025,Yam2026b}. The detailed explanation of this
motivation would be somewhat lengthy, so I have written it here as
a separate paper.

\section{The causal net of scope-dependent algebras of observables}

\label{sec:The-causal-net}

In this section, let us consider a common algebraic framework for
the three types of theories: non-relativistic quantum mechanics, QFTM
and QFTCS. This framework shall be called causal nets (Definition
\ref{def:causalNet} below). Although the notion of causal net has
been employed mainly in algebraic QFT in Minkowski spacetime (AQFTM)
\cite{Haa96,Araki99}, it is used also in non-relativistic quantum
physics (e.g., \cite[Sec.2.6]{BR87}).

Let $\fA$ be a unital {*}-algebra. A \termi{state} on $\fA$ is
a linear functional $\omega:\fA\to\C$ such that
\[
\omega(\bbOne)=1,\qquad\omega(x^{*}x)\ge0,\quad\text{for all }x\in\fA.
\]
Since a generic {*}-algebra is not given any topology, probably this
is not suitable to develop a general abstract theory. In many practical
examples, $\fA$ is generated by the elements $\{e_{\lambda}|\lambda\in\Lambda\}$,
where $\Lambda$ is a topological space with some other structures.
In this case, we can use the topology (and other structures) of $\Lambda$,
rather than that of $\fA$ itself. Wightman's formalism of QFTM \cite{SW64}
can be understood as an example of the theory of this type.

If $\fA$ is a $C^{*}$-algebra, ``a state on $\fA$'' refers to
a state on the {*}-algebra $\fA$ which is continuous w.r.t.~the
norm $\|\cdot\|$. Furthermore, if $\fA$ is a $W^{*}$-algebra, usually
only the \emph{normal} states (defined below) on $\fA$ are considered,
but ``a state on $\fA$'' refers to a (possibly non-normal) state
on $\fA$ as a $C^{*}$-algebra. 

Assume that $\fA$ can be interpreted as an algebra of observables.
Roughly speaking, if a state $\omega$ on a $W^{*}$-algebra $\fA$
is physically realizable, $\omega$ should be normal. Also the converse
is considered to hold if $\fA$ is a factor of type I (see below).
However, it is not clear whether the converse holds in general.

If $\fA$ is a $C^{*}$-algebra which is {*}-isomorphic with a von
Neumann algebra, it is called a \termi{$W^{*}$-algebra}. Alternatively,
we can define it by \termi{Kadison's characterization of $W^{*}$-algebras},
stated as follows.

Let $\fA_{{\rm h}}$ denote the set of all self-adjoint elements of
$\fA$.
\begin{defn}
\cite[Definition 3.13, p.137]{Tak2002} %
A $C^{*}$-algebra $\fA$ is said to be \termi{monotone closed} if
every bounded increasing net in $\fA_{{\rm h}}$ has the least upper
bound in $\fA_{{\rm h}}$. A positive linear functional $\omega$
on a monotone closed $C^{*}$-algebra $A$ is called \termi{normal}
if $\omega(\sup x_{i})=\sup\omega(x_{i})$ for every bounded increasing
net $\{x_{i}\}$ in $\fA_{{\rm h}}$, where $\sup x_{i}$ means the
least upper bound of $\{{x_{i}}\}$ in $\fA_{{\rm h}}$.
\end{defn}

\begin{thm}[{Kadison. See \cite[Thm.3.16, p.138]{Tak2002}}]
{} Let $\mathfrak{A}$ be a $C^{*}$-algebra. Then $\fA$ is a $W^{*}$-algebra
if and only if
\begin{enumerate}
\item $\fA$ is monotone closed, and
\item for any positive nonzero element $A\in\mathfrak{A}$ there exists
a normal state $\omega$, over $\mathfrak{A}$, such that $\omega(A)\neq0.$
\end{enumerate}
\end{thm}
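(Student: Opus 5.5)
The forward direction is the easy one, so I would dispose of it first. Suppose $\fA$ is $*$-isomorphic to a von Neumann algebra $\fM\subset\Bdd(\cH)$; since a $*$-isomorphism is an order isomorphism, I may work in $\fM$ directly. For monotone closedness, take a bounded increasing net $\{x_{i}\}$ in $\fM_{{\rm h}}$. By Vigier's theorem it converges strongly to some $x\in\Bdd(\cH)_{{\rm h}}$. This $x$ lies in $\fM$, because $\fM$ is strongly closed, and $x$ is the least upper bound of the net in $\fM_{{\rm h}}$, indeed even in $\Bdd(\cH)_{{\rm h}}$. For the second condition, take positive $A\ne0$. Pick a vector $\xi$ with $\langle A\xi,\xi\rangle\neq0$ and set $\omega=\langle\cdot\,\xi,\xi\rangle/\|\xi\|^{2}$. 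Strong convergence of increasing nets gives $\omega(\sup x_{i})=\sup\omega(x_{i})$, so $\omega$ is normal in the order-theoretic sense of the Definition above.

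For the converse I would build a faithful normal representation and show that its image is a von Neumann algebra.

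\textbf{Construction.} Let $\{\omega_{\alpha}\}$ be the family of all normal states and put $\pi=\bigoplus_{\alpha}\pi_{\omega_{\alpha}}$, the direct sum of their GNS representations.

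\textbf{Faithfulness.} If $\pi(x)=0$, then $\omega(x^{*}x)=0$ for every normal state $\omega$, so condition (2) gives $x^{*}x=0$ and hence $x=0$. Thus $\pi$ is isometric.

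\textbf{Normality of $\pi$.} For a bounded increasing net $x_{i}\uparrow x=\sup x_{i}$ in $\fA_{{\rm h}}$, I need $\pi(x_{i})\to\pi(x)$ strongly. By Vigier, $\pi(x_{i})\uparrow y\le\pi(x)$ strongly for some $y$. Each vector functional $\langle\pi(\cdot)\eta,\eta\rangle$ with $\eta=\pi_{\omega}(a)\xi_{\omega}$ is $\omega(a^{*}\cdot\,a)$. This is normal, because $z\mapsto a^{*}za$ preserves suprema of increasing nets; to see this, reduce to invertible $a$ and use perturbation and order-preservation arguments. So $\langle\pi(x)\eta,\eta\rangle=\sup_{i}\langle\pi(x_{i})\eta,\eta\rangle=\langle y\eta,\eta\rangle$ on a dense set of vectors, and since $\pi(x)-y\ge0$ this gives $y=\pi(x)$.

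\textbf{Strong closure of the image.} This is the main obstacle: showing $\pi(\fA)=\pi(\fA)''$. My plan is first to show that $\pi(\fA)$ is closed under strong limits of bounded increasing nets, which follows from the previous step. Then I would use monotone closedness to prove that every maximal abelian $*$-subalgebra is monotone closed and hence generated by its projections. Next I would show that the projections of $\pi(\fA)$ form a complete lattice equal to the projection lattice of $\pi(\fA)''$. The hard part is this last equality: for a projection $p\in\pi(\fA)''$, the approach I would try first is to write $p$ as the supremum of the range projections of elements of the form $\pi(a)p\ldots$ and approximate via Kaplansky density. Finally, since a von Neumann algebra is the norm-closed span of its projections, this yields $\pi(\fA)=\pi(\fA)''$.

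If that step resists, my fallback is Kadison's original route. I would show that the unit ball of $\pi(\fA)$ is weakly closed by a Kaplansky-density argument combined with monotone closedness applied to self-adjoint parts, as in Takesaki's proof \cite[Thm.3.16]{Tak2002}.
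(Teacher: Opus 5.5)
The paper gives no proof of this theorem beyond citing Takesaki, so your argument has to stand on its own. Several parts are fine as written: the forward direction, the construction of $\pi$, its faithfulness, and the observation that $\pi(\fA)_{{\rm h}}$ is closed under strong limits of bounded increasing nets.

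One small repair is needed in the normality step. Invertible elements are not norm-dense in a general monotone closed $C^{*}$-algebra: a non-unitary isometry in $\Bdd(\cH)$ is at distance at least $1$ from the invertibles. So ``reduce to invertible $a$ and perturb'' does not handle arbitrary $a$. You do not need arbitrary $a$, though. Since $\pi(x)-y\ge0$, it is enough that $\langle(\pi(x)-y)\eta,\eta\rangle=0$ for all $\eta$ in a set with dense linear span. The vectors $\pi_{\omega}(u)\xi_{\omega}$ with $u$ unitary qualify, because every element of $\fA$ is a linear combination of four unitaries and $z\mapsto u^{*}zu$ is an order automorphism of $\fA_{{\rm h}}$.

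The genuine gap is the step you flag as the main obstacle: showing that a unital $C^{*}$-algebra $B=\pi(\fA)\subset\Bdd(\cH)$ whose self-adjoint part is closed under strong limits of bounded monotone nets satisfies $B=B''$. This is the whole non-trivial content of Kadison's theorem, and neither of your sketches supplies it.

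\emph{Kaplansky density.} For $y\in(B'')_{{\rm h}}$ it gives bounded nets in $B_{{\rm h}}$ converging strongly to $y$. These nets are not monotone, and because $B_{{\rm h}}$ has no lattice operations you cannot pass to running suprema or infima. So monotone closedness never gets applied. Your fallback (``Kaplansky density combined with monotone closedness'') restates the goal rather than providing this mechanism.

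\emph{The projection route.} This is circular. For a projection $p\in B''$, range projections of elements such as $\pi(a)p$ or $\pi(a)p\pi(a)^{*}$ lie in $B''$, but nothing you have shown places them in $B$; that is exactly what is to be proved. The facts you can establish do not imply $\Proj(B)=\Proj(B'')$. These are: $\Proj(B)$ is a complete lattice with the same suprema as in $\Bdd(\cH)$, and maximal abelian subalgebras of $B$ are monotone closed.

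What is missing is a result converting strong density into iterated monotone approximation. One example is Pedersen's up--down theorem: every element of $(B'')_{{\rm h}}$ is the strong limit of a bounded decreasing net of operators, each of which is the strong limit of a bounded increasing net from $B_{{\rm h}}$. Since $B_{{\rm h}}$ is closed under both increasing and decreasing monotone limits, this gives $B''=B$ at once. Without it, or an argument of comparable strength, the converse direction remains unproved.
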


\begin{defn}
Let $(\cI,\le)$ be a directed set. (Sometimes we assume that each
pair $\alpha,\beta$ in \textit{$\cI$} has a least upper bound $\SYM{\alpha\vee\beta}{\alpha\vee\beta}$.)
A symmetric binary relation $\cauindep$ on $\cI$ is said to be an
\termi{orthogonality relation} if
\begin{enumerate}
\item if $\alpha\in\cI$ then there is a $\beta\in\cI$ with $\alpha\cauindep\beta,$
\item if $\alpha\leq\beta$ and $\beta\cauindep\gamma$ then $\alpha\cauindep\gamma,$
\item if $\alpha\cauindep\beta$ and $\alpha\cauindep\gamma$ then there
exists a $\delta\in\cI$ such that $\alpha\cauindep\delta$ and $\delta\geq\beta,\gamma.$
\end{enumerate}
\end{defn}

The orthogonality relation $\cauindep$ is intended to mean the causal
independence. %
For example, let \textit{$\cI$} be the set of bounded open subsets
of the Minkowski spacetime $\R^{1,3}$, together with
\begin{itemize}
\item $\alpha\le\beta$ iff $\alpha\subset\beta$,
\item $\alpha\cauindep\beta$ iff $\alpha$ and $\beta$ are spacelike separated.
\end{itemize}
Then we see that $\cauindep$ is an orthogonality relation of $(\cI,\le)$,
and $\alpha\vee\beta=\alpha\cup\beta$.

However, I would like to allow also some non-spacetime interpretations
of $\cI$. For example, consider the non-relativistic QM of an electron
on the space $\R^{3}$, formulated on the Hilbert space $L^{2}(\R^{3})\otimes\C^{2}$,
where $\C^{2}$ describes the spin states of it. Let $\alpha$ denote
this physical system. If we focus solely on the spin of the electron,
and ignore the position (and the momentum) of it, the Hilbert space
becomes $\C^{2}$. Let $\beta$ denote the latter system. %
If we ignore the spin, the suitable Hilbert space will be $L^{2}(\R^{3})$.
Let $\gamma$ denote the last system. Then we may write $\beta\le\alpha$
and $\gamma\le\alpha$. Perhaps, in this case, it would be more appropriate
to call $\alpha$, $\beta$ and $\gamma$ as \termi{scopes} (of information,
description, or observation, etc), rather than physical systems. Let%
\[
\cH_{\alpha}:=L^{2}(\R^{3})\otimes\C^{2},\qquad\cH_{\beta}:=\C^{2},\qquad\cH_{\gamma}:=L^{2}(\R^{3}),\qquad\cK_{\alpha\beta}:=L^{2}(\R^{3}),\qquad\cK_{\alpha\gamma}:=\C^{2},
\]
and $\fA_{\iota}:=\Bdd(\cH_{\iota}),$ $\iota=\alpha,\beta,\gamma$.
We see $\cH_{\alpha}=\cH_{\beta}\otimes\cK_{\alpha\beta}=\cH_{\gamma}\otimes\cK_{\alpha\gamma}$.
Thus $\fA_{\beta}$ is canonically embedded into $\fA_{\alpha}$ by
$x\mapsto x\otimes\bbOne$, $x\in\fA_{\beta}$, and hence $\fA_{\beta}$
can be viewed as a subalgebra of $\fA_{\alpha}$. Similarly, $\fA_{\gamma}$
can be viewed as a subalgebra of $\fA_{\alpha}$. This suggests the
concept of scope-dependent algebra of observables.
\begin{defn}[{See \cite[Sec.2.6]{BR87}}]
\label{def:causalNet} Let $(\cI,\le)$ be a directed set with the
orthogonality relation $\cauindep$. A net of {*}-algebras $\{\mathfrak{A}_{\alpha}\}_{\alpha\in\cI}$
is a \termi{causal net} (of algebras of local observables) if
\begin{enumerate}
\item if $\alpha\geq\beta$ then $\mathfrak{A}_{\alpha}\supset\mathfrak{A}_{\beta}$;
\item the algebras $\mathfrak{A}_{\alpha}$ have a common identity $\bbOne$;
\item $[\fA_{\alpha},\fA_{\beta}]=\{0\}$ if $\alpha\cauindep\beta$.
\end{enumerate}
\end{defn}

In the following, assume that each local algebra $\fA_{\alpha}$ is
a $C^{*}$-algebra. 

Each $\fA_{\alpha}$ is called a \termi{local algebra} or an \termi{algebra of local observables},
and $\mathfrak{A}:=\bigcup_{\alpha\in\cI}\mathfrak{A}_{\alpha}$ is
called the \termi{quasi-local algebra}. If each $\fA_{\alpha}$ is
a $C^{*}$-algebra, define the quasi-local algebra by $\mathfrak{A}=\ol{\bigcup_{\alpha\in\cI}\mathfrak{A}_{\alpha}}$,
where the bar denotes the uniform closure. In this case, $\fA$ is
also called the \termi{algebra of quasi-local observables}. (Note
that every self-adjoint element of $\bigcup_{\alpha\in\cI}\mathfrak{A}_{\alpha}$
is called a local observable, but $\bigcup_{\alpha\in\cI}\mathfrak{A}_{\alpha}$
itself is not called ``the algebra of local observables''.)

In non-relativistic QM, usually we can set $\fA_{\alpha}:=\Bdd(\cH_{\alpha})$,
where $\{\cH_{\alpha}|\alpha\in\cI\}$ is a net of Hilbert spaces.
However, in this case the inclusion relation $\alpha\le\beta\then\fA_{\alpha}\subset\fA_{\beta}$
is not evident, until the relation between $\cH_{\alpha}$ and $\cH_{\beta}$
is specified. Instead it is more convenient to consider a single Hilbert
space $\cH$ and the subalgebras $\fA_{\alpha}$ of $\Bdd(\cH)$ satisfying
$\alpha\le\beta\then\fA_{\alpha}\subset\fA_{\beta}$, and suppose
that each $\fA_{\alpha}$ is a von Neumann algebra, and further a
factor of type I.

In AQFTM \cite{Haa96,Araki99}, where each element of $\cI$ is a
bounded spacetime region, there seems to be a consensus that we can
assume that every $\fA_{\alpha}$ is a $W^{*}$-algebra, and furthermore
it is an approximately finite-dimensional (AFD, also called hyperfinite)
${\rm III}_{1}$ factor, which is unique up to equivalence. A good
review on this observation is found in \cite{Hal06}. See also \cite{Hor1990,Sum1990,Yng2005}.
We can define the canonical representation $(\pi_{\vac},\cH_{\vac})$
of the quasi-local algebra $\mathfrak{A}=\ol{\bigcup_{\alpha\in\cI}\mathfrak{A}_{\alpha}}$,
that is, the GNS representation with respect to the vacuum state $\omega_{\vac}$
on $\fA$. Therefore, each $\fA_{\alpha}$ can be viewed as a subalgebra
of $\Bdd(\cH_{0})$.

Yngvason \cite{Yng2005} emphasizes the conceptual importance of the
fact that the local algebra $\fA_{\alpha}$ is of type III. On the
other hand, the fact that $\fA_{\alpha}$ is AFD implies that it can
be well approximated by its subalgebras of type I. That is, within
some ``scopes of information'', approximately $\fA_{\alpha}$ can
be viewed as of type I. An actual/practical situation is as follows:
Sometimes we focus solely on a finite number of quantities in the
local system $\fA_{\alpha}$ to describe a physical law. In this case,
we consider a finite number of projection-valued measures (PVM's)
$E_{k}:\Borel(\R)\to\Proj(\fA_{\alpha})$, $k=1,...,n$ (or more generally
positive-operator-valued measures), and the $W^{*}$-subalgebra of
$\fA_{\alpha}$ generated by $\{E_{k}(X)|X\in\Borel(\R),k=1,...,n\}$,
which often is of type I.

For example, consider a neutral free scalar field (Klein\textendash Gordon
(KG) field) $\phi(f)$ ($f\in C_{\cpt}^{\infty}(\R^{1,3})$, the space
of compactly supported smooth functions on the Minkowski space $\R^{1,3}$),
as symmetric operators defined on a dense subspace of $\cH$. Let
$\cI$ be the set of all finite subsets of $C_{\cpt}^{\infty}(\R^{1,3})\setminus\{0\}$.
For any $\alpha,\beta\in\cI$, require that (1) $\alpha\le\beta\Leftrightarrow\alpha\subset\beta$,
and (2) $\alpha\cauindep\beta\ \Leftrightarrow$ $\supp(\alpha)$
and $\supp(\beta)$ are spacelike separated, where $\supp(\alpha):=\bigcup_{f\in\alpha}\supp(f)$.
For $f\in C_{\cpt}^{\infty}(\R^{1,3})$, let $E_{f}:\Borel(\R)\to\Proj(\cH)$
denote the spectral PVM of (the self-adjoint extension of) $\phi(f)$.
For $\alpha\in\cI$, let $\fA_{\alpha}$ be the von Neumann algebra
generated by $\{E_{f}(X)|X\in\Borel(\R),\,f\in\alpha\}$. Then $\fA_{\alpha}$
is always of type I, and often a factor.

Let $\SYM{\Proj(\fA)}{Proj}$ denote the set of projections in $\fA$.
$\Proj(\fA)$ is ordered by $e\le f$ iff $ef=e=fe$, $e,f\in\Proj(\fA)$.
\begin{defn}
Let $\fA$ be a $W^{*}$-algebra. A projection $e\in\fA$ is called
\termi{atomic} if it is minimal in $\Proj(\fA)\setminus\{0\}$. Let
$\SYM{\AProj(\fA)}{AProj}$ denote the set of atomic projections in
$\fA$. The $\fA$ is called \termi{atomic} if every nonzero projection
in $\fA$ has an atomic subprojection, that is, $\forall e\in\Proj(\fA)\setminus\{0\},$
$\exists f\in\AProj(\fA),$ $ef=f$. The $\fA$ is said to be \termi{$\sigma$-finite}
(or \termi{countably decomposable}) if it admits at most countably
many orthogonal projections.
\end{defn}

Every atomic $W^{*}$-algebra is of type I \cite[p.298]{Tak2002}.
Although the following Proposition \ref{prop:sigma-finite-atomic}
is a standard fact, its proof seems to be found only rarely in the
literature. Hence I will give a proof in Appendix.
\begin{prop}
\label{prop:sigma-finite-atomic}Let $\fR$ be a $W^{*}$-algebra.
Then, the following conditions are equivalent.
\begin{enumerate}
\item $\fR$ is $\sigma$-finite and atomic.
\item $\fR\cong\bigoplus_{i=1}^{n}\Bdd(\cH_{i}),$ where $n\in\N\cup\{\infty\}$,
and each $\cH_{i}$ is a separable Hilbert space.
\end{enumerate}
\end{prop}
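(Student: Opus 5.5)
The implication (2)$\Rightarrow$(1) is a direct verification, and (1)$\Rightarrow$(2) goes through the decomposition of the unit into central projections that are themselves sums of atoms.

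\emph{(2)$\Rightarrow$(1).} Work in $\fR=\bigoplus_{i=1}^{n}\Bdd(\cH_{i})$.

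A projection is $e=(e_{i})_{i}$. If $e\neq0$, some $e_{i}\neq0$, and any unit vector $\xi\in e_{i}\cH_{i}$ gives the rank-one projection $|\xi\rangle\langle\xi|$, placed in the $i$-th summand. This is a minimal nonzero projection and lies under $e$, so $\fR$ is atomic.

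For $\sigma$-finiteness, let $\{p_{\lambda}\}$ be a family of mutually orthogonal nonzero projections. Each $p_{\lambda}$ has a nonzero component in some $\Bdd(\cH_{i})$, and the ranges $p_{\lambda,i}\cH_{i}$ are orthogonal subspaces of the separable space $\cH_{i}$. Hence only countably many $\lambda$ occur for each $i$, and since $n$ is countable, the family is countable.

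\emph{(1)$\Rightarrow$(2).} First I claim that for an atomic projection $e$, the central support $z(e)$ is a minimal central projection and $z(e)\fR$ is a type I factor.

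Minimality: if $0\neq z\le z(e)$ is central, then $ze\neq0$ (otherwise $z(e)\le1-z$), and $ze\le e$. By atomicity of $e$ we get $ze=e$, so $z\ge z(e)$. Thus $z(e)\fR$ is a factor, and it contains the minimal projection $e$, so it is of type I. In fact the relative commutant $e\fR e=\C e$ because $e$ is minimal.

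Next take a maximal family $\{z_{i}\}$ of mutually orthogonal central projections of the form $z(e)$ with $e$ atomic. Suppose $1-\sum z_{i}\neq0$. Then atomicity gives an atom $f\le1-\sum z_{i}$, and $z(f)\le 1-\sum z_{i}$, which contradicts maximality. Hence $\sum z_{i}=1$, and $\fR\cong\bigoplus_{i}z_{i}\fR$. The index set is countable by $\sigma$-finiteness.

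It remains to identify each summand. In the factor $\fM=z_{i}\fR$, take a maximal orthogonal family of atoms $\{e_{k}\}$. Its sum is $z_{i}$: otherwise, by atomicity, the remainder would contain a further atom. This family is countable by $\sigma$-finiteness. Any two atoms in a factor are equivalent, by comparability together with minimality. So we get matrix units $v_{k}$ with $v_{k}^{*}v_{k}=e_{1}$ and $v_{k}v_{k}^{*}=e_{k}$.

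The standard matrix-unit argument then gives $\fM\cong\Bdd(\ell^{2}(K))$: an element $x$ corresponds to the matrix with entries $v_{k}^{*}xv_{l}\in e_{1}\fM e_{1}=\C e_{1}$, and $\ell^{2}(K)$ is separable since $K$ is countable. This is the step I expect to need care. One has to verify that the map is a $*$-isomorphism onto all of $\Bdd(\ell^{2}(K))$, not only onto finite matrices. I would do this by realizing $\fM$ on the Hilbert space $\overline{\fM e_{1}\xi}$, or more cleanly by invoking the known fact that a type I factor with a minimal projection is $\Bdd(\cH)$ (see \cite{Tak2002}), with $\dim\cH$ equal to the cardinality of a maximal orthogonal family of minimal projections.
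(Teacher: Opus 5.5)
Your proof is correct, but it runs in the opposite direction from the paper's.

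\textbf{The paper's route (top-down).} It treats (2)$\Rightarrow$(1) as clear. For (1)$\Rightarrow$(2) it argues from general structure theorems:
\begin{enumerate}
\item atomic implies type~I, so the type~I structure theorem \cite[Chapter V, Theorem 1.27]{Tak2002} gives $\fR\cong\bigoplus_{\lambda}\fA_{\lambda}\,\overline{\otimes}\,\Bdd(\cK_{\lambda})$ with each $\fA_\lambda$ abelian;
\item $\sigma$-finiteness makes $\Lambda$ countable and each $\cK_\lambda$ separable;
\item a Boolean-algebra lemma (a complete atomic Boolean algebra is the power set of its atoms) identifies each $\sigma$-finite atomic abelian $\fA_\lambda$ with $\C^{X}$, $X$ countable.
\end{enumerate}

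\textbf{Your route (bottom-up).} You build the decomposition directly from the atoms. Minimality of the central support $z(e)$ of an atom $e$ gives the factor summands. A maximal orthogonal family of mutually equivalent atoms in each summand gives matrix units.

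\textbf{Comparison.} Your argument avoids both the structure theorem and the Boolean-algebra lemma. It also supplies a step the paper leaves implicit: why the abelian parts $\fA_\lambda$ are atomic. That is essentially your observation that $z(e)$ is a minimal central projection. The paper's version is shorter only because it hands the work to standard theorems.

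\textbf{The step you flagged.} It goes through as you suggest. Take $v_1=e_1$ and a unit vector $\xi\in e_1\cH$. Then $e_kxe_1\xi=v_k(v_k^{*}xv_1)\xi\in\C v_k\xi$, so $\{v_k\xi\}$ is an orthonormal basis of $\overline{\fM\xi}$. Restriction to this invariant subspace is injective, because its kernel is a weakly closed ideal of a factor that does not contain $e_1$. The image is a von Neumann algebra containing all matrix units, hence equal to $\Bdd(\ell^{2}(K))$.

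One minor slip: $e\fR e$ is the reduced algebra, not a relative commutant.
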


The fact that every local algebra in the sense of Haag\textendash Araki\textendash Kastler
is AFD%
{} suggests the following

\def\algorithmname{Working Hypothesis}
\begin{lyxalgorithm}
[finitist version]\label{hyp:causal-finitist}In each of three quantum
theories (non-relativistic QM, QFTM and QFTCS), there exist a directed
set of suitable ``scopes'' $(\cI,\le)$ with an orthogonality relation
$\cauindep$ on $\cI$, and a causal net $\{\fA_{\alpha}|\alpha\in\cI\}$
which can describe the quantum physics, such that each local algebra
$\fA_{\alpha}$ is a finite-dimensional factor $W^{*}$-algebra, that
is, $\fA_{\alpha}\cong\Mat(n_{\alpha},\C)$ ($n_{\alpha}\times n_{\alpha}$
matrix algebra) for some $n_{\alpha}\in\N$.
\end{lyxalgorithm}

The above hypothesis \ref{hyp:causal-finitist} is conceptually intriguing,
and also more practically it might relate to computational physics
and lattice field theory. However, even the canonical commutation
relation (CCR) $[Q,P]=\im\hbar$ (precisely, the Weyl form of the
CCR) cannot be represented in a finite-dimensional $W^{*}$-algebra,
and so one is forced to consider some kind of finite-dimensional approximation
of the CCR. Thus instead I shall consider the following weaker hypotheses:
\begin{lyxalgorithm}
[type I factor version]\label{hyp:causal-typeIfactor}Similar to
Working Hypothesis \ref{hyp:causal-finitist}, but each $\fA_{\alpha}$
is a $\sigma$-finite type I factor $W^{*}$-algebra, that is, $\fA_{\alpha}\cong\Bdd(\cH_{\alpha})$
for some separable Hilbert space $\cH_{\alpha}$.
\end{lyxalgorithm}

\begin{lyxalgorithm}
[atomic version]\label{hyp:causal-atomic}Similar to Working Hypothesis
\ref{hyp:causal-finitist}, but each $\fA_{\alpha}$ is a $\sigma$-finite
atomic $W^{*}$-algebra. 
\end{lyxalgorithm}

Clearly, these Working Hypotheses satisfy \ref{hyp:causal-finitist}
$\then$ \ref{hyp:causal-typeIfactor} $\then$ \ref{hyp:causal-atomic}.
In all three Working Hypotheses \ref{hyp:causal-finitist}-\ref{hyp:causal-atomic},
each local algebra $\fA_{\alpha}$ has the canonical trace $\Tr_{\alpha}$,
which is faithful, semifinite and normal, satisfying $\Tr E=1$ for
all $E\in\AProj(\fA_{\alpha})$. Note that a $W^{*}$-algebra of type
III has no faithful semifinite normal trace \cite[Chapter V, Theorem 2.15]{Tak2002}.

In all these three hypotheses, every local algebra is assumed to be
$\sigma$-finite, and so I use ``a $W^{*}$-algebra'' to refer to
a $\sigma$-finite $W^{*}$ algebra in the following, unless otherwise
stated.

\section{Global state}

\label{sec:Global-state}

\subsection{QFTCS and the vacuum state}

A state on the quasi-local algebra $\fA$ is called a \termi{global state}.
Typically, a global state is understood as a state of the whole Universe,
and so practically we cannot prepare any global state. Thus the concept
of global state has little empirical meaning; Probably, a ``physical
law'' of the form
\begin{equation}
\text{If the Universe is prepared in the global state }\omega\text{, then ...}\label{eq:ifuniv}
\end{equation}
itself cannot be verified/falsified experimentally. Nevertheless,
the concept of global state plays a fundamental role in QFTM, because
any QFTM has a distinguished or ``privileged'' global state, namely,
the vacuum state. (Here I exclude thermal QFTM (at nonzero temperature),
which has no vacuum. I abbreviate ``thermal QFTM at temperature $T$''
to \termi{QFTM$_{T}$}, and ``thermal QFTM at a positive temperature''
to \termi{QFTM$_{>0}$}. For the rigorous nontrivial models of $(1+1)$-dimensional
QFTM$_{>0}$, see \cite{GJ2005,GJ2005b}. The usual QFTM with a (unique)
vacuum is sometimes called \emph{QFTM at zero temperature}. I abbreviate
it to QFTM$_{0}$, but ``QFTM'' always refers to QFTM$_{0}$ unless
otherwise stated, in this paper.)%
{} Thus the concept of %
global state appears to be almost indispensable in QFTM. Notice that
even if a statement of the type (\ref{eq:ifuniv}) itself might be
empirically meaningless, the statement can have some empirical consequences.
This situation is somewhat analogous to that of the Maxwell equation
in terms of the 4-vector potential $A_{\mu}$:
\begin{equation}
\di^{\nu}\di_{\nu}A_{\mu}-\di_{\mu}\di^{\nu}A_{\nu}=0.\label{eq:Maxwell}
\end{equation}
Since $A_{\mu}$ is unobservable, one has some reason to think that
(\ref{eq:Maxwell}) itself is empirically meaningless. Nevertheless,
extremely rich empirical consequences are deduced from (\ref{eq:Maxwell}).
In QFTM, the \emph{Wightman functions} (i.e., the vacuum expectation
values) play an analogous role to (\ref{eq:Maxwell}).

From a point of view of ``pragmatic realism'', one might argue as
follows:
\begin{claim}
\textbf{Pragmatic realism for the (global) vacuum state}: We are permitted
to regard the (global) vacuum state as a physical \emph{reality} (even
though the actual universe in which we live is not in a vacuum state),
because the concept of vacuum is \emph{indispensable} in QFTM.
\end{claim}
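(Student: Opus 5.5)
The plan is to treat the Claim as an instance of the same argument form as Claim~\ref{claim:Pragmatic-realism-for-vector}. The argument has a bridge premise, ``indispensable $\then$ permitted to regard as real'', which the pragmatic realist takes as given. It also has a substantive premise, ``the global vacuum state is indispensable in QFTM''. So the real work is to support the second premise for QFTM$_{0}$ as the paper defines it, and then apply the bridge premise. The parenthetical caveat, that the actual universe is not in the vacuum state, will be handled separately through the analogy with the potential $A_{\mu}$ in (\ref{eq:Maxwell}).

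First, I will show that the vacuum $\omega_{\vac}$ is canonically determined by the causal net itself, so it is not an arbitrary choice. Given the quasi-local algebra $\fA$ with its Poincaré covariance, $\omega_{\vac}$ is characterized as the state that is Poincaré invariant and satisfies the spectrum condition (positive energy). I would invoke the standard uniqueness results here: uniqueness of the vacuum under the cluster property, or when the GNS representation is irreducible.

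Second, I will show that the vacuum carries the empirical content of the theory. By Wightman's reconstruction theorem, the Wightman functions (the vacuum expectation values) determine the field theory up to unitary equivalence. Moreover, the GNS representation $(\pi_{\vac},\cH_{\vac})$ is exactly what fixes the ``physical Hilbert space''. The physically realizable states, including the actual non-vacuum state of our universe, are then singled out as the density matrices on $\cH_{\vac}$, i.e.\ the $\pi_{\vac}$-normal states. Scattering theory (Haag--Ruelle, LSZ) and particle interpretation are also built on $\omega_{\vac}$. Hence removing the vacuum would remove the criterion that separates real from fictional states, and with it the empirical predictions. This gives indispensability, even though a statement of type (\ref{eq:ifuniv}) about preparing the vacuum is not itself testable. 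The situation parallels (\ref{eq:Maxwell}): the vacuum, like $A_{\mu}$, is a theoretical posit with rich empirical consequences.

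The main obstacle is that indispensability is a negative claim: it asserts that no vacuum-free reformulation of QFTM exists with the same empirical content, and this is not a theorem. My first move would be to argue from canonicity. Because $\omega_{\vac}$ is uniquely fixed by the net together with its symmetry and spectral data, any empirically equivalent reformulation must encode the same data. Such a reformulation therefore implicitly reconstructs $\omega_{\vac}$, so it does not genuinely dispense with the vacuum.

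Granting this, the bridge premise yields the Claim. It does so explicitly in the paper's restricted sense: the argument holds for QFTM$_{0}$, and it holds only as long as the indispensability premise stands.
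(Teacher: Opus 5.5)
The paper gives no proof of this Claim. It states the Claim as the pragmatic realist's position and argues against it. It says at once that the argument ``loses all ground in QFTCS'', where there is no covariant vacuum and no privileged representation of $\fA$, so no way to separate physical from fictional global states. Section~\ref{sec:Global-state} also questions the empirical significance of global states even in QFTM. The Introduction conjectures, and Section~\ref{sec:QFTM-and-QFTCS} begins to support, that the concept of state is dispensable in QFTM and QFTCS. The paper also doubts that ``indispensable $\then$ real'' holds universally. So your final step is just modus ponens, and everything rests on the indispensability premise, which is exactly the premise the paper contests. Your ``second'' step (reconstruction theorem, the $\pi_{\vac}$-normal states, Haag--Ruelle/LSZ) shows only that the \emph{conventional} formulation relies on $\omega_{\vac}$. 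The paper grants that much (the vacuum ``appears to be almost indispensable''); its point is that reliance within one formalism is not indispensability.

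The step that actually fails is your canonicity argument for the negative claim. You argue that because $\omega_{\vac}$ is fixed by the net with its Poincaré action and spectrum condition, any empirically equivalent reformulation implicitly reconstructs $\omega_{\vac}$ and so does not dispense with it. This conflates reconstructibility with indispensability. The paper's notion is empirical: whether the laws can be \emph{stated} without the concept. Unique definability from other data rather shows that the vacuum is a derived, eliminable notion. Your criterion would also make every definable construct indispensable, which trivializes the bridge premise.

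The paper's own examples cut against you. Holonomies determine a vector potential up to gauge, yet the paper conjectures that Wilson loops make vector potentials dispensable. For the free scalar field, the laws are stated as the prior conditional probabilities $\Prob_{Z}(B|A)$ of (\ref{eq:PZ}). These are built only from the canonical trace $\Tr_{Z}$ on the type I factors $\fA_{Z}$, $Z\in\Fin(V)$, and no state enters, even though in Minkowski spacetime a vacuum could be constructed from the same data.

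Your canonicity premise is also shaky on its own terms. Clustering or irreducibility gives uniqueness of the vacuum \emph{vector} within a vacuum representation, not a unique invariant positive-energy state on the net; the latter fails under spontaneous symmetry breaking. Finally, you claim the $\pi_{\vac}$-normal folium is needed to carry empirical content. That is undercut by QFTCS, which has no such folium yet still has local empirical laws, and by the observer-dependence of the QFTM$_{0}$/QFTM$_{>0}$ distinction that the paper notes.
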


However, this argument based on the indispensability of the concept
of vacuum loses all ground in QFTCS, where the universal concept of
vacuum does not exist. More precisely, in QFTCS, we cannot even define
the concept of vacuum%
{} in a generic spacetime, in a covariant (or observer-independent)
manner. It is not that the vacuum is definable but does not exist.
In non-relativistic QM and QFTM, a ground state (vector) can be defined
as a unit vector $\psi$ in $\cH$ such that $H\psi=\lambda\psi$,
where $H$ is the Hamiltonian and $\lambda\in\R$ is the infimum of
the spectrum of $H$, whereas such vector $\psi$ can be nonexistent.
In QFTM, the ground state is called the vacuum; Here the notion of
Hamiltonian is not Lorentz invariant (i.e., it is inertial-observer-dependent),
but the vacuum becomes Lorentz invariant. In QFTCS, the notion of
inertial observer does not make sense in general, and so one might
wish to search for an alternative definition of vacuum which is observer-independent
in some other sense. However, such a definition seems to be impossible
(see Fredenhagen \& Rejzner \cite{FR2016}).

The negative implication of QFTCS on the concept of global state is
even stronger. As mentioned above, in QFTM, there exists a canonical
(``privileged'') representation of the quasi-local algebra $\fA$,
i.e., the GNS representation $(\pi_{\vac},\cH_{\vac})$ with respect
to the vacuum state $\omega_{\vac}$ on $\fA$. It is called the \termi{vacuum representation}.
A possible interpretation of this fact is as follows: A state $\omega$
on $\fA$ is physically realizable if and only if there exists a density
matrix $\rho$ on $\cH_{\vac}$ such that $\omega(A)=\Tr\rho\pi_{\vac}(A)$
for all $A\in\fA$. %
{} Thus in QFTM, there is a clear distinction between the physical states
(i.e., physically realizable states) and the non-physical (or ``fictional'')
states. (Exactly speaking, rather we should say that usual QFTM (i.e.,
QFTM$_{0}$) has the vacuum \emph{as a physical state}. Similarly,
we should say that thermal QFTM (i.e., QFTM$_{>0}$) does not have
the vacuum \emph{as a physical state}; Some kind of QFTM$_{>0}$ might
have the vacuum \emph{as a non-physical state}.)

On the other hand, in QFTCS, there is no such clear distinction; There
is no privileged representation of $\fA$. Hence the concept of global
state becomes more ``fictional'', because we cannot distinguish
the physical states from the fictional states. In fact, already in
Minkowski spacetime, the distinction between QFTM$_{0}$ and QFTM$_{>0}$
seems to be observer-dependent. See the considerations on the conceptual
problems on the Unruh effect \cite{Ear2011,Yam2026a}.

\subsection{Heisenberg picture in QFTM}

Even if we confine our discussion to QFTM, the empirical significance
of global states remains to be questionable. Consider the ``physical
law'' in QFTM of the following form:
\begin{equation}
\text{If the system }\text{is prepared in the state }\omega\text{ at time }t_{0},\text{ then ...}\label{eq:prepare-t0}
\end{equation}
I will argue that the meaning of (\ref{eq:prepare-t0}) is far from
clear, both empirically and mathematically.

Recall that a QFTM is usually formulated in the Heisenberg picture,
where the notion of quantum state is time-independent. However, in
non-relativistic QM, it might be possible to understand the Heisenberg
picture to be a mere rephrasing of the Schr\"{o}dinger picture, and
so any notion defined in the latter can be used also in the former
with a slight modification. On the other hand, the Heisenberg picture
appears to be more essential in QFTM. If one wish to consider some
kind of ``Schr{\"o}dinger picture version'' of QFTM, one might
be inclined to think that ``the state at time $t$'' refers to the
state of ``the field at time $t$''. However, this will not be the
correct understanding, because it is conjectured that the \emph{sharp-time
field} (also called the \emph{time-sliced field}) at each time $t$
cannot be defined in $(3+1)$-dimensional spacetime when the field
is interacting. Thus ``the state at time $t$'' in a generic QFTM
cannot be the state of ``the field at time $t$''. But in this case,
what object's state is it?

However, I would not go so far as to say that the statement (\ref{eq:prepare-t0})
is entirely meaningless. For example, some fragments of QBism might
be employed to salvage the concept of state at a specific time $t$.
(This is ironic, since basically QBism is skeptical of the concept
of state.) That is, we can understand ``the state at time $t$''
as the ``subjective state at $t$'' of an ``agent'', i.e., the
state of an agent's belief or knowledge at time $t$. If one dislikes
the subjectivism, instead one could regard the agent as a computer
(with an enough memory), and consider the ``informational state''
of the agent at time $t$.

In any way, I do not believe that it is necessary to salvage the concept
of state at $t$, since I consider the concept to be dispensable.

\section{Local state}

\label{sec:Local-state}

Assume Hypothesis \ref{hyp:causal-typeIfactor} (type I factor version)
rather than \ref{hyp:causal-atomic} (atomic version) for simplicity,%
{} so that $\fA_{\alpha}\cong\Bdd(\cH_{\alpha})$ for some separable
Hilbert space $\cH_{\alpha}$.

A normal state on a local algebra $\fA_{\alpha}$ ($\alpha\in\cI$)
is called a \termi{local state} on $\fA_{\alpha}$, or a \termi{$\alpha$-local state}.
If $\omega:\fA\to\C$ is a global state on $\fA$, and if the restriction
$\omega_{\alpha}:=\omega|_{\fA_{\alpha}}$ is normal, then $\omega_{\alpha}$
is a $\alpha$-local state. However, there are doubts as to how much
empirical significance the concept of global state holds, as mentioned
above. Hence we examine the empirical significance of the concept
of local state \emph{without relying on that of global state}.

Let $\Tr_{\alpha}$ denote the canonical trace on $\fA_{\alpha}$,
that is, if $\fA_{\alpha}$ is identified with $\Bdd(\cH_{\alpha})$,
$\Tr_{\alpha}$ is the usual trace on $\Bdd(\cH_{\alpha})$. %
In this case, $\omega_{P}(A):=\Tr_{\alpha}PA$, $A\in\fA_{\alpha}$
defines a local pure state $\omega_{P}$ on $\fA_{\alpha}$. Note
that the concept of local state is ``scope-dependent'', since $\fA_{\alpha}$
depends on $\alpha\in\cI$.\footnote{Although Rovelli's relational quantum mechanics \cite{Rov1996} stress
the \emph{observer}-dependence of the concept of state, rather I would
like to understand his arguments as a manifestation of the \emph{scope}-dependence.} Especially, if $\beta\le\alpha$, then the restricted local state
$\omega_{P}|_{\fA_{\beta}}$ on $\fA_{\beta}$ is not a pure local
state, in general. Thus the notion of ``pureness'' of local states
is highly scope-dependent. 

Many authors have been arguing that although a mixed state might be
merely a measure of one's ignorance, a pure state is a physical reality
(especially when it is expressed as a wave function); roughly speaking,
that pure states are physically real, but non-pure states are not.
I call such an argument a ``\termi{pure-state-specific realism}''.
However, in the framework of causal net, there is no absolute distinction
between pure local states and non-pure local states. Thus any pure-state-specific
realism seems to be logically weak and fragile, because it presupposes
the existence of such an absolute distinction. Moreover, although
some kind of absolute notion of pure \emph{global} state might be
defined mathematically, we cannot distinguish physically realizable
(pure or mixed) global states from non-physical ``fictional'' global
states in QFTCS, as I argued in Section \ref{sec:Global-state}. No
pure-state-specific realist would argue that such a possibly fictional
state should be a physically reality; it is almost a self-contradiction.

By the assumption that $\fA_{\alpha}$ is an algebra of observables,
any projection in $\fA_{\alpha}$ can be realized as an experimental
procedure (yes-no type measurement), in principle. Thus we can prepare
the pure $\alpha$-local state $\omega_{P}$ for any $P\in\AProj(\fA_{\alpha})$.
Furthermore, any mixed local state on $\fA_{\alpha}$ can be prepared
by the mixing operation of pure $\alpha$-local states. Thus the concept
of local state is much less problematic than that of global state,
from the empirical point of view.

However the key points of my argument can be summarised as follows.

(1) If, contrary to Hypothesis \ref{hyp:causal-atomic}, $\fA_{\alpha}$
was a factor of type III, the empirical meaning of a normal state
$\omega$ on $\fA_{\alpha}$ becomes much more unclear. Namely, contrary
to the type I cases, it is not immediately clear how to describe/construct
the preparing procedure of a given state $\omega$, and also it is
not clear whether we can experimentally verify/falsify a ``physical
law'' of the form:
\begin{equation}
\text{If the system is prepared in the state }\omega,\text{ then ...}\label{eq:law-prepare-local}
\end{equation}

(2) When $\fA_{\alpha}$ is a type I factor by Hypothesis \ref{hyp:causal-typeIfactor}
(or an atomic $W^{*}$-algebra by Hypothesis \ref{hyp:causal-atomic}),
the concept of $\alpha$-local state can be defined with the concept
of (experimental) operation in $\fA_{\alpha}$, as mentioned above.
Hence, for example, (\ref{eq:law-prepare-local}) can be rephrased
as
\begin{equation}
\text{If the system is prepared by the operation }X,\text{ then ...}\label{eq:law-prepare-local-op}
\end{equation}
Thus, any argument that the concept of local state is indispensable
does not seem particularly convincing. Rather, it will be the concept
of \emph{operation} that is indispensable.

From (1) and (2) above, whether each $\fA_{\alpha}$ is a type I factor
or a type III factor, I doubt the ``empirical indispensability''
of the concept of local state; If $\fA_{\alpha}$ is of type III,
the empirical significance of the concept of local state is rather
doubtful, even if it is indispensable in the purely mathematical theory
of von Neumann algebras. If $\fA_{\alpha}$ is of type I, the concept
is much less problematic empirically, but can be replaced with the
more general concept of (local) operation.

\section{Non-relativistic QM without states}

\label{sec:NRQM-withoutStates}

If one wish to demonstrate that the concept of state is dispensable
in quantum physics, the most persuasive way will be to actually develop
quantum physics without employing the concept of state at all. If
the local algebras are assumed to be $\sigma$-finite atomic $W^{*}$-algebras
by Hypothesis \ref{hyp:causal-atomic}, the concept of local state
can be replaced with that of local operation in principle. Hence,
at first glance, this task does not seem too difficult.

However the conventional formulation of QFTM largely depends on a
privileged global state, i.e., the vacuum. Especially in the Wightman
formalism, all the physical laws are derived from the vacuum expectation
values. In QFTCS, we will need to fundamentally rethink this conventional
Wightman approach, and to construct an alternative language to describe
the physical laws. This might turn out to be an extremely hard task.
Thus the following considerations on QFTM and QFTCS will remain at
a preliminary level.

Although I shall not go into quantum statistical mechanics (QSM) in
this paper, here I will give a brief comment on it. It should be noted
that the concept of thermal equilibrium state (Gibbs state or more
general KMS state) plays a central role in QSM, similarly to the concept
of vacuum state in QFTM. Hence also it might be a very hard task to
develop a ``stateless formalism'' of QSM. However, recall that there
are some conceptual problems already in classical statistical mechanics,
stated as follows. Thermodynamics describes the macroscopic thermal
phenomena with which we are all familiar. In this sense, the concept
of ``thermodynamic state'' seems to be physically real. On the other
hand, (classical or quantum) statistical mechanics aims to give the
microscopic explanations of thermodynamical phenomena, including phase
transitions. However, the concept of thermal equilibrium state in
statistical mechanics, which is employed in such explanations, is
considerably idealized and unrealistic, in that it is justified only
through a procedure called a \emph{thermodynamic limit} (e.g., infinite
volume limit and/or infinite particle number limit). Thus, it is doubted
that macroscopic thermodynamical states can be identified with microscopic
thermal equilibrium states.%
{} Also note that thermal equilibrium states are rigorously defined
even in curved spacetime (\cite{HW2015}, see also \cite[Definition 9.11]{Ger2019}),
but such states seem to become conceptually far more problematic than
in non-relativistic QSM. See Yamashita \cite{Yam2026a}.%

Temporarily assume Hypothesis \ref{hyp:causal-typeIfactor} (type
I factor version) rather than \ref{hyp:causal-atomic} (atomic version)
for simplicity. Fix a ``scope'' $\alpha\in\cI$, and let $\fA_{\alpha}$
be a local $W^{*}$-algebra. Hence $\fA_{\alpha}$ can be identified
with $\Bdd(\cH_{\alpha})$ for some separable Hilbert space $\cH_{\alpha}$.
However note that if $\alpha\le\beta$, we have the relation $\fA_{\alpha}\subset\fA_{\beta}$,
but the relation between $\cH_{\alpha}$ and $\cH_{\beta}$ is not
specified \emph{a priori}. Hence the identification $\fA_{\alpha}\cong\Bdd(\cH_{\alpha})$
will become less convenient when $\alpha\in\cI$ is not fixed.

First let us consider non-relativistic QM. Causal nets of local algebras
are often used in so-called quantum physics with infinite degrees
of freedom, that is, (relativistic or non-relativistic) QFT and quantum
statistical mechanics \cite{Haa96,Araki99,BR87,BR97}. Furthermore,
if one wishes to understand the philosophy behind Rovelli's RQM through
the notion of causal net, ultimately all quantum physics should be
formulated with that notion. However, for most of practical purposes,
it is true that we can fix a ``scope'' $\alpha\in\cI$ and a local
algebra $\fA_{\alpha}$ to describe a given non-relativistic quantum
system of finite degrees of freedom. Thus fix $\alpha\in\cI$, and
for simplicity, assume $\fA_{\alpha}\cong\Bdd(\cH_{\alpha})$ for
some separable Hilbert space $\cH_{\alpha}$. Let $\Tr=\Tr_{\alpha}$
denote the canonical trace on $\fA_{\alpha}$. 

\subsection{\protect\label{subsec:A-spinless-particle}A spinless particle on
a line}

For each $t\in\R$, let $\quantity_{t}$ be a real-valued physical
quantity.%
{} We can see $\quantity_{t}$ as referring to the physically meaningful
noun phrase such as ``the position (of a particle) at time $t$'',
rather than a kind of mathematically well-defined object (such as
a real number or a set of real numbers).

Let $\Quantities=\{\quantity_{t}|t\in\R\}$. It is reasonable to assume
that the map $\R\to\Quantities$, $t\mapsto\quantity_{t}$ is injective
(and so bijective). For example, as to a harmonic oscillator with
the period $T>0$, the following will be a nontrivial physical law:
``the position at $t$ is equal to the position at $t+T$'', in
other words, ``the value of $\quantity_{t}$ is equal to the value
of $\quantity_{t+T}$''. On the other hand, if we assume $\quantity_{t}=\quantity_{t+T}$,
the above physical law reduces to the tautology ``the value of $\quantity_{t}$
is equal to the value of $\quantity_{t}$'', which is physically
vacuous. This is absurd, and hence we should assume $\quantity_{t}\neq\quantity_{t+T}$
to describe the nontrivial physical/empirical laws.%

For each $\R$-valued quantity $\quantity_{t}$, we assign a corresponding
PVM $E_{t}:\Borel(\R)\to\Proj(\fA_{\alpha})$, and moreover assume
that the time evolution $E_{0}\mapsto E_{t}$ ($t\in\R$) is unitary,
that is, there exists a one-parameter unitary group $\{U(t)|t\in\R\}\subset\fA_{\alpha}$
such that $E_{t}(X)=U(-t)E_{0}(X)U(t)$ for all $t\in\R$ and $X\in\Borel(\R)$.

In most of practical examples, the following holds:
\begin{assumption}
\label{assu:trace-class2}If both $X$ and $X'$ are bounded Borel
subsets of $\R$, then $E_{t}(X)E_{t'}(X')$ is trace-class for almost
all $(t,t')\in\R^{2}$.
\end{assumption}

Let $X,X'\in\Borel(\R)$ be bounded and non-null. For example, in
the case of a free particle, where $\quantity_{t}$ denotes the position
of the particle at $t$, we see $E_{t}(X)E_{t'}(X')$ is trace-class
if and only if $t\neq t'$. In the case of a harmonic oscillator with
period $T$, we see $E_{t}(X)E_{t'}(X')$ is trace-class if and only
if $t-t'\notin T\Z$.

Let $n\ge2$, $1\le k<n$, and $X_{1},...,X_{n}\in\Borel(\R)$ be
bounded. If $E_{t_{1}}(X_{1})E_{t_{2}}(X_{2})$ is trace-class by
Assumption \ref{assu:trace-class2}, then $E_{t_{1}}(X_{1})\cdots E_{t_{n}}(X_{n})$
is also trace-class. Define the \termi{prior conditional probability}
by
\begin{align}
 & \Prob(X_{k+1},...,X_{n};t_{k+1},...,t_{n}|X_{1},...,X_{k};t_{1},...,t_{k}):=\frac{\Tr(AB)^{*}AB}{\Tr A^{*}A},\label{eq:priorCondProb}\\
 & \qquad\qquad A:=E_{t_{1}}(X_{1})\cdots E_{t_{k}}(X_{k}),\quad B:=E_{t_{k+1}}(X_{k+1})\cdots E_{t_{n}}(X_{n}),\nonumber 
\end{align}
if the denominator is nonzero, i.e. $A\neq0$. Notice that this probability
is defined without the notion of quantum state. The empirical meaning
of (\ref{eq:priorCondProb}) appears to be clear only when $t_{i}$'s
are (anti-)time-ordered, i.e., $t_{1}<\cdots<t_{n}$ or $t_{1}>\cdots>t_{n}$.
If $t_{1}<\cdots<t_{n}$, the prior conditional probability (\ref{eq:priorCondProb})
is a \emph{predictive} probability; That is, in the sequential (composite)
measurement of the yes-no type measurements $E_{t_{1}}(X_{1}),...,E_{t_{n}}(X_{n})$
in this order, (\ref{eq:priorCondProb}) means the probability that
all $E_{t_{k+1}}(X_{k+1}),...,E_{t_{n}}(X_{n})$ give ``yes'', conditioned
that all $E_{t_{1}}(X_{1}),...,E_{t_{k}}(X_{k})$ gave ``yes''.
In other words, this is the probability that the value of $\quantity_{t_{i}}$
is measured to be in $X_{i}$ for all $i=k+1,...,n$, conditioned
that the value of $\quantity_{t_{j}}$ was measured to be in $X_{j}$
for all $j=1,...,k$.

Contrarily, if $t_{1}>\cdots>t_{n}$, the prior conditional probability
(\ref{eq:priorCondProb}) is a \emph{retrodictive} probability. Note
that although we do not frequently employ the retrodictive reasoning
in physics, it is indispensable in some situations, e.g., in cosmology,
especially the Big Bang theory.

Actually, the following much stronger property is (usually implicitly)
assumed for many cases in the physics literature, instead of Assumption
\ref{assu:trace-class2}:
\begin{assumption}
\label{assu:propagator}There exists a $\C$-valued function $K_{t,t'}(x,x')\equiv K(x,t;x',t')$,
called a \termi{propagator}, defined for all $x,x'\in\R$ and almost
all $(t,t')\in\R^{2}$,%
{} such that
\begin{enumerate}
\item $K_{t,t'}(x,x')$ is continuous in $x,x'$, if $K_{t,t'}$ is defined.
\item for any $n\ge2$, for almost all $(t_{1},...,t_{n})\in\R^{n}$, and
for all bounded Borel subsets $X_{1},...,X_{n}$ of $\R$, the following
holds:
\begin{equation}
\Tr E_{t_{1}}(X_{1})\cdots E_{t_{n}}(X_{n})=\int_{X_{1}}\d x_{1}\cdots\int_{X_{n}}\d x_{n}\ \prod_{i=1}^{n}K(x_{i},t_{i};x_{i+1},t_{i+1}),\qquad x_{n+1}:=x_{1},\ t_{n+1}:=t_{1}.\label{eq:TrEE=00003DintintK}
\end{equation}
\item $K(x,t;x',t')=\ol{K(x',t';x,t)},$
\item $\int_{\R}\d x'\ K_{t,t'}(x,x')K_{t',t''}(x',x'')=K_{t,t''}(x,x'').$ 
\end{enumerate}
\end{assumption}

Especially, we see
\[
\Tr E_{t_{1}}(X_{1})E_{t_{2}}(X_{2})=\int_{X_{1}}\d x_{1}\int_{X_{2}}\d x_{2}\ \left|K(x_{1},t_{1};x_{2},t_{2})\right|^{2}.
\]
Note that the conditions (1), (3) and (4) in Assumption \ref{assu:propagator}
are rather optional, since all the observable consequences are derived
from (2) only. If we set $\cH_{\alpha}=L^{2}(\R)$, and if the time
evolution is ``uniform'', that is, if the Hamiltonian is time-independent,
then $K_{t,t'}$ is nothing but the integral kernel of the time evolution
operator $U(t'-t)$.

As is well-known, a propagator for a free particle of mass $m$ is
given by
\begin{equation}
K_{t,t'}(x,x')=\left(\frac{m}{2\pi\im\hbar T}\right)^{1/2}\exp\left[\frac{\im m}{2\hbar T}(x'-x)^{2}\right],\qquad T:=t'-t,\label{eq:K-free}
\end{equation}
and a propagator for a harmonic oscillator is given by
\begin{equation}
K_{t,t'}(x,x')=\left(\frac{m\omega}{2\pi\im\hbar\sin\omega T}\right)^{1/2}\exp\left\{ \frac{\im m\omega}{2\hbar}\left[(x^{2}+x^{\prime2})\cot\omega T-2\frac{xx'}{\sin\omega T}\right]\right\} ,\qquad T:=t'-t.\label{eq:K-oscill}
\end{equation}
Assumption \ref{assu:propagator} suggests that the concept of quantum
state is dispensable to describe the empirical laws in non-relativistic
quantum mechanics; Instead the notion of propagator seems sufficient
to describe the empirical laws. However, it should be noted that the
notion of propagator itself, as well as the examples (\ref{eq:K-free})\textendash (\ref{eq:K-oscill}),
are not immediately given an empirical meaning. For example, we expect
that any empirical law for a free particle should be invariant under
the Galilei transformation $(x,t)\mapsto(x+vt,t)$. However, (\ref{eq:K-free})
is not Galilei invariant, and hence (\ref{eq:K-free}) itself cannot
be viewed as an empirical law, as I argued in Yamashita \cite{Yam2025}.
Eq.~(\ref{eq:K-free}) is understood to have some empirically redundant
information, something like a gauge fixing for the Galilei group.
This suggests that Assumption \ref{assu:propagator} is not satisfactory
as a fundamental axiom of quantum physics, though it has a considerable
practical/computational advantage.

On the other hand, (\ref{eq:TrEE=00003DintintK}) indicates that the
function
\begin{equation}
\SYM{\kappa(\vec{x},\vec{t})}{kappa}\equiv\kappa_{n}(\vec{x},\vec{t}):=\prod_{i=1}^{n}K(x_{i},t_{i};x_{i+1},t_{i+1}),\qquad\vec{x}:=(x_{1},...,x_{n}),\quad\vec{t}:=(t_{1},...,t_{n}),\label{eq:def:kappa}
\end{equation}
shares all the symmetries (such as the Galilei covariance for a free
particle) with the l.h.s.~$\Tr E_{t_{1}}(X_{1})\cdots E_{t_{n}}(X_{n})$,
which is directly related to the observable physical laws in terms
of prior conditional probability. Hence the function $\kappa(\vec{x},\vec{t})$
is expected to be ``more empirical'' than the propagator $K(x,t;x',t')$,
that is, $\kappa(\vec{x},\vec{t})$ has very little empirical redundancy.
(But note that $\kappa(\vec{x},\vec{t})$ and its complex conjugate
$\ol{\kappa(\vec{x},\vec{t})}$ seem to lead to the same empirical
consequences, and hence the imaginary part of $\kappa(\vec{x},\vec{t})$
will not be observable.) Precisely, we see from (\ref{eq:priorCondProb})
that it is sufficient to know $\kappa(\vec{x},\vec{t})$ only when
$\vec{t}$ is of the form
\begin{equation}
\vec{t}=(t_{1},...,t_{2k-2})=(t_{1},...,t_{k},t_{k-1},...,t_{2}),\quad\text{i.e., }(t_{k+1},...,t_{2k-2})=(t_{k-1},...,t_{2}),\qquad k\in\N,\label{eq:t(2k-2)}
\end{equation}
to calculate the prior conditional probabilities, because
\[
\Tr(F_{1}\cdots F_{k})^{*}(F_{1}\cdots F_{k})=\Tr F_{1}\cdots F_{k}F_{k-1}\cdots F_{2},\qquad F_{i}:=E_{t_{i}}(X_{i}).
\]
For example, when $k=2$, let $\vec{x}:=(x_{1},x_{2})$ and $\vec{t}:=(t_{1},t_{2})$.
Then we have
\[
\kappa_{2}(\vec{x},\vec{t})=\prod_{i=1}^{2}K(x_{i},t_{i};x_{i+1},t_{i+1})=K(x_{1},t_{1};x_{2},t_{2})K(x_{2},t_{2};x_{1},t_{1})=\left|K(x_{1},t_{1};x_{2},t_{2})\right|^{2}.
\]
When $k=3$, let
\[
\vec{t}=(t_{1},...,t_{4})=(t_{1},t_{2},t_{3},t_{2}),\ t_{5}:=t_{1},\qquad x_{1},x_{2},x_{3},x_{4}\in\R,\ x_{5}:=x_{1},\ x_{1}':=x_{1},\ x_{2}':=x_{4},\ x_{3}':=x_{3}.
\]
Then we have
\begin{align*}
\kappa_{4}(\vec{x},\vec{t})=\prod_{i=1}^{4}K(x_{i},t_{i};x_{i+1},t_{i+1}) & =K(x_{1},t_{1};x_{2},t_{2})K(x_{2},t_{2};x_{3},t_{3})K(x_{3},t_{3};x_{4},t_{2})K(x_{4},t_{2};x_{1},t_{1})\\
 & =K(x_{1},t_{1};x_{2},t_{2})K(x_{2},t_{2};x_{3},t_{3})K(x_{3}',t_{3};x_{2}',t_{2})K(x_{2}',t_{2};x_{1}',t_{1})\\
 & =\prod_{i=1}^{2}K(x_{i},t_{i};x_{i+1},t_{i+1})\ol{\prod_{i=1}^{2}K(x_{i}',t_{i};x_{i+1}',t_{i+1})}
\end{align*}
Generally, for $k\ge3$, assume (\ref{eq:t(2k-2)}) and let
\[
x_{1},...,x_{2k-2}\in\R,\ x_{2k-1}:=x_{1},\ x_{1}':=x_{1},\ x_{k}':=x_{k},\ x_{i}':=x_{2k-i}\text{ for }i=2,...,k-1.
\]
Then we have
\begin{equation}
\kappa_{2k-2}(\vec{x},\vec{t})=\prod_{i=1}^{2k-2}K(x_{i},t_{i};x_{i+1},t_{i+1})=\prod_{i=1}^{k-1}K_{i}\ol{K_{i}'},\label{eq:kappa(2k-2)KK}
\end{equation}
where
\[
K_{i}:=K(x_{i},t_{i};x_{i+1},t_{i+1}),\qquad K_{i}':=K(x_{i}',t_{i};x_{i+1}',t_{i+1}).
\]

\begin{prop}
\label{prop:kappa24}Let $\kappa$ be defined by (\ref{eq:def:kappa}),
and suppose (\ref{eq:t(2k-2)}). Let
\[
\kappa_{2}^{(i)}:=\kappa_{2}((x_{i},x_{i+1}'),(t_{i},t_{i+1})),\qquad\kappa_{4}^{(i)}:=\kappa_{4}((x_{i},x_{i+1},x_{i+2}',x_{i+1}'),(t_{i},t_{i+1},t_{i+2},t_{i+1})).
\]
Then we have
\begin{equation}
\kappa_{2k-2}(\vec{x},\vec{t})=\frac{\prod_{i=1}^{k-2}\kappa_{4}^{(i)}}{\prod_{j=2}^{k-2}\kappa_{2}^{(j)}},\label{eq:kappa(2k-2)=00003Dk/k}
\end{equation}
and hence the two functions $\kappa_{2}$ and $\kappa_{4}$ determine
$\kappa_{2k-2}$ for all $k\ge4$.%
\end{prop}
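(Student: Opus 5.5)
The plan is to start from the factorization (\ref{eq:kappa(2k-2)KK}), $\kappa_{2k-2}(\vec{x},\vec{t})=\prod_{i=1}^{k-1}K_{i}\ol{K_{i}'}$. I will expand each $\kappa_{4}^{(i)}$ and each $\kappa_{2}^{(j)}$ into propagators, using only the definition (\ref{eq:def:kappa}) and the hermiticity condition (3) of Assumption \ref{assu:propagator}. Then I will show that the ratio in (\ref{eq:kappa(2k-2)=00003Dk/k}) telescopes. Throughout, the values $x_{i}'$ are those fixed before (\ref{eq:kappa(2k-2)KK}), in particular $x_{1}'=x_{1}$ and $x_{k}'=x_{k}$. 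The times are given by (\ref{eq:t(2k-2)}).

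\textbf{Step 1: the two-point factors.} By definition,
\[
\kappa_{2}^{(j)}=K(x_{j},t_{j};x_{j+1}',t_{j+1})\,K(x_{j+1}',t_{j+1};x_{j},t_{j})=\bigl|K(x_{j},t_{j};x_{j+1}',t_{j+1})\bigr|^{2}.
\]
The second equality uses condition (3) of Assumption \ref{assu:propagator}. These are "mixed" factors, linking the unprimed chain to the primed chain.

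\textbf{Step 2: the four-point factors.} By the cyclic product (\ref{eq:def:kappa}) with four points,
\[
\kappa_{4}^{(i)}=K_{i}\cdot K(x_{i+1},t_{i+1};x_{i+2}',t_{i+2})\cdot K(x_{i+2}',t_{i+2};x_{i+1}',t_{i+1})\cdot K(x_{i+1}',t_{i+1};x_{i},t_{i}).
\]
Applying (3) to the last two factors turns them into $\ol{K_{i+1}'}$ and $\ol{K(x_{i},t_{i};x_{i+1}',t_{i+1})}$. So each $\kappa_{4}^{(i)}$ contributes the following:
\begin{itemize}
\item one "pure" factor $K_{i}$ and one "pure" factor $\ol{K_{i+1}'}$;
\item one mixed factor $M_{i+1}:=K(x_{i+1},t_{i+1};x_{i+2}',t_{i+2})$;
\item one conjugated mixed factor $\ol{M_{i}}$, where $M_{j}:=K(x_{j},t_{j};x_{j+1}',t_{j+1})$.
\end{itemize}

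\textbf{Step 3: taking the product over $i=1,\dots,k-2$.} The pure factors give $\prod_{i=1}^{k-2}K_{i}\cdot\prod_{i=2}^{k-1}\ol{K_{i}'}$. The mixed factors give $\prod_{j=2}^{k-1}M_{j}\cdot\prod_{j=1}^{k-2}\ol{M_{j}}$. Pairing equal indices, this equals
\[
\Bigl(\prod_{j=2}^{k-2}|M_{j}|^{2}\Bigr)\,M_{k-1}\,\ol{M_{1}}.
\]
The product $\prod_{j=2}^{k-2}|M_{j}|^{2}$ is exactly $\prod_{j=2}^{k-2}\kappa_{2}^{(j)}$ by Step 1. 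The two boundary terms are where the endpoint conventions enter. Since $x_{k}'=x_{k}$, we get $M_{k-1}=K_{k-1}$. Since $x_{1}'=x_{1}$, we get $M_{1}=K_{1}'$, so $\ol{M_{1}}=\ol{K_{1}'}$.

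Collecting everything gives
\[
\prod_{i=1}^{k-2}\kappa_{4}^{(i)}=\Bigl(\prod_{i=1}^{k-1}K_{i}\ol{K_{i}'}\Bigr)\prod_{j=2}^{k-2}\kappa_{2}^{(j)}.
\]
By (\ref{eq:kappa(2k-2)KK}), this is $\kappa_{2k-2}\prod_{j=2}^{k-2}\kappa_{2}^{(j)}$. Dividing yields (\ref{eq:kappa(2k-2)=00003Dk/k}).

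\textbf{Main obstacle.} The main obstacle is the index bookkeeping in Step 3. The shifted products have to be aligned correctly, and the boundary identifications $x_{1}'=x_{1}$ and $x_{k}'=x_{k}$ are what make the leftover mixed factors collapse to $K_{k-1}$ and $\ol{K_{1}'}$.

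A secondary point is the division itself. The identity is to be read as holding wherever $\prod_{j=2}^{k-2}\kappa_{2}^{(j)}\neq0$. Otherwise one should state it in the multiplied-out form above, which holds unconditionally. The closing assertion, that $\kappa_{2}$ and $\kappa_{4}$ determine $\kappa_{2k-2}$ for $k\ge4$, follows immediately from (\ref{eq:kappa(2k-2)=00003Dk/k}) at every point where the denominator is nonzero.
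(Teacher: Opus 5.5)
Your proof is correct and is essentially the paper's argument. Your $M_j$ is exactly the paper's $L_j:=K(x_j,t_j;x_{j+1}',t_{j+1})$, and the paper likewise writes $\kappa_4^{(i)}=K_i\ol{K_{i+1}'}L_{i+1}\ol{L_i}$ and telescopes; the one difference is that it treats $i=1$ and $i=k-2$ as separate cases, whereas you use the uniform formula together with $M_1=K_1'$ and $M_{k-1}=K_{k-1}$ (and your caveat about the denominator being nonzero is one the paper leaves implicit).
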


\begin{proof}
Let
\[
L_{i}:=K(x_{i},t_{i};x_{i+1}',t_{i+1}).
\]
Then we find
\[
\kappa_{4}^{(1)}=K_{1}\ol{K_{1}'K_{2}'}L_{2},\qquad\kappa_{4}^{(k-2)}=K_{k-2}K_{k-1}\ol{K_{k-1}'}\ol{L_{k-2}},
\]
and
\[
\kappa_{4}^{(i)}=K_{i}\ol{K_{i+1}'}L_{i+1}\ol{L_{i}},\qquad\text{for }i=2,...,k-3.
\]
In fact, we have
\begin{align*}
\kappa_{4}^{(1)} & =\kappa((x_{1},x_{2},x_{3}',x_{2}'),(t_{1},t_{2},t_{3},t_{2}))\\
 & =K(x_{1},t_{1};x_{2},t_{2})K(x_{2},t_{2};x_{3}',t_{3})K(x_{3}',t_{3};x_{2}',t_{2})K(x_{2}',t_{2};x_{1}',t_{1})\\
 & =K_{1}\ol{K_{1}'K_{2}'}L_{2},
\end{align*}
and similarly for $\kappa_{4}^{(k-2)}$ and $\kappa_{4}^{(i)}$ ($i=2,...,k-3$).%
{} Thus we find
\begin{align*}
\prod_{i=1}^{k-2}\kappa_{4}^{(i)} & =\kappa_{4}^{(1)}\left(\prod_{i=2}^{k-3}\kappa_{4}^{(i)}\right)\kappa_{4}^{(k-2)}\\
 & =K_{1}\ol{K_{1}'K_{2}'}L_{2}\left(\prod_{i=2}^{k-3}K_{i}\ol{K_{i+1}'}L_{i+1}\ol{L_{i}}\right)K_{k-2}K_{k-1}\ol{K_{k-1}'}\ol{L_{k-2}}\\
 & =\left(\prod_{i=1}^{k-1}K_{i}\ol{K_{i}'}\right)\left(\prod_{j=2}^{k-2}L_{j}\ol{L_{j}}\right)\\
 & =\kappa_{2k-2}(\vec{x},\vec{t})\left(\prod_{j=2}^{k-2}\kappa_{2}^{(j)}\right).\qquad\text{(by (\ref{eq:kappa(2k-2)KK})})
\end{align*}
Thus (\ref{eq:kappa(2k-2)=00003Dk/k}) follows.%
\end{proof}
Considering Proposition \ref{prop:kappa24}, let us temporarily forget
Assumption \ref{assu:propagator}, and instead examine the following
Assumption \ref{assu:kappa24}. This seems considerably more complicated
and less intuitive than Assumption \ref{assu:kappa24}. Of course,
it is desirable for a fundamental axiom of quantum mechanics to be
simple and intuitive, and so I do not intend to argue that Assumption
\ref{assu:kappa24} is a fundamental axiom. Our main purpose is to
reduce the ``empirical redundancy'' of the notion of propagator
in Assumption \ref{assu:propagator}, to search for a better means
of describing the empirical law.%
{} %

\begin{assumption}
\label{assu:kappa24}There exist two functions $\kappa_{2;t_{1},t_{2}}$
and $\kappa_{4;t_{1},t_{2},t_{3}}$ satisfying the following conditions:
\begin{enumerate}
\item $\kappa_{2;t_{1},t_{2}}(x_{1},x_{2})\equiv\kappa_{2}(x_{1},x_{2};t_{1},t_{2})$
is a $[0,\infty)$-valued function defined for almost all $(t_{1},t_{2})\in\R^{2}$
with $t_{1}<t_{2}$ and for all $(x_{1},x_{2})\in\R^{2}$ such that
\[
\Tr\left(F_{1}F_{2}\right)=\int_{X_{1}}\d x_{1}\int_{X_{2}}\d x_{2}\ \kappa_{2;t_{1},t_{2}}(x_{1},x_{2}),\qquad F_{i}:=E_{t_{i}}(X_{i}),\ X_{i}\in\Borel(\R):\text{bounded, }i=1,2.
\]
\item $\kappa_{4;t_{1},t_{2},t_{3}}(x_{1},x_{2},x_{3},x_{2}')\equiv\kappa_{4}(x_{1},x_{2},x_{3},x_{2}';t_{1},t_{2},t_{3},t_{2})$
is a $\C$-valued function defined for almost all $(t_{1},t_{2},t_{3})\in\R^{3}$
with $t_{1}<t_{2}<t_{3}$ and for all $(x_{1},x_{2},x_{3},x_{2}')\in\R^{4}$
such that
\begin{align}
\Tr\left(F_{1}F_{2}F_{3}\right)^{*}\left(F_{1}F_{2}F_{3}\right) & \equiv\Tr F_{1}F_{2}F_{3}F_{2}\nonumber \\
 & =\int_{X_{1}}\d x_{1}\int_{X_{2}}\d x_{2}\int_{X_{3}}\d x_{3}\int_{X_{2}}\d x_{2}'\ \kappa_{4;t_{1},t_{2},t_{3}}(x_{1},x_{2},x_{3},x_{2}'),\label{eq:F1F2F3F2=00003Dkappa4}
\end{align}
where $F_{i}:=E_{t_{i}}(X_{i})$, and each $X_{i}\in\Borel(\R)$ is
bounded.
\item The following holds for all $k\ge4$:
\begin{align}
 & \Tr\left(F_{1}\cdots F_{k}\right)^{*}\left(F_{1}\cdots F_{k}\right)=\Tr F_{1}\cdots F_{k}F_{k-1}\cdots F_{2}\nonumber \\
 & \qquad=\int_{X_{1}}\d x_{1}\cdots\int_{X_{k}}\d x_{k}\int_{X_{k-1}}\d x_{k-1}'\cdots\int_{X_{2}}\d x_{2}'\frac{\prod_{i=1}^{k-2}\kappa_{4;t_{i},t_{i+1},t_{i+2}}(x_{i},x_{i+1},x_{i+2}',x_{i+1}')}{\prod_{i=2}^{k-2}\kappa_{2;t_{1},t_{2}}(x_{i},x_{i+1}')}.\label{eq:TrF1Fk=00003Dkappa}
\end{align}
\end{enumerate}
\end{assumption}

Note that the imaginary part of $\kappa_{4;t_{1},t_{2},t_{3}}$ cancels
in the integral in (\ref{eq:F1F2F3F2=00003Dkappa4}), and hence only
the real part of $\kappa_{4;t_{1},t_{2},t_{3}}$ contributes to the
value of $\Tr F_{1}F_{2}F_{3}F_{2}$. This implies that the real part
$\Re\kappa_{4;t_{1},t_{2},t_{3}}$, together with $\kappa_{2;t_{1},t_{2}}$,
is essentially an observable quantity. However, $\kappa_{4;t_{1},t_{2},t_{3}}$
should be taken to be $\C$-valued rather than $\R$-valued, because
the imaginary part of $\kappa_{4;t_{1},t_{2},t_{3}}$ contributes
to the r.h.s.~of (\ref{eq:TrF1Fk=00003Dkappa}).

Let us employ the shorthand notation
\[
\int_{\vec{X}}^{(k)}\d\vec{\x}:=\int_{X_{1}}\d x_{1}\cdots\int_{X_{k}}\d x_{k}\int_{X_{k-1}}\d x_{k-1}'\cdots\int_{X_{2}}\d x_{2}'.
\]
Then from (\ref{eq:TrF1Fk=00003Dkappa}), we have the following formula
to express the prior conditional probability:%
\begin{align}
 & \Prob(X_{k+1},...,X_{n};t_{k+1},...,t_{n}|X_{1},...,X_{k};t_{1},...,t_{k})\nonumber \\
 & \qquad=\left(\int_{\vec{X}}^{(n)}\d\vec{\x}\ \frac{\prod_{i=1}^{n-2}\kappa_{4;t_{i},t_{i+1},t_{i+2}}(x_{i},x_{i+1},x_{i+2}',x_{i+1}')}{\prod_{i=2}^{k-2}\kappa_{2;t_{1},t_{2}}(x_{i},x_{i+1}')}\right)\nonumber \\
 & \qquad\quad\times\left(\int_{\vec{X}}^{(k)}\d\vec{\x}\ \frac{\prod_{i=1}^{k-2}\kappa_{4;t_{i},t_{i+1},t_{i+2}}(x_{i},x_{i+1},x_{i+2}',x_{i+1}')}{\prod_{i=2}^{k-2}\kappa_{2;t_{1},t_{2}}(x_{i},x_{i+1}')}\right)^{-1}.\label{eq:priorCondProb=00003Dkappa}
\end{align}
Especially, for $n=3$ and $k=2$, we find the much simpler expression:
\begin{equation}
\Prob(X_{3};t_{3}|X_{1},X_{2};t_{1},t_{2})=\frac{\int_{\vec{X}}^{(3)}\d\vec{\x}\ \kappa_{4;t_{1},t_{2},t_{3}}(x_{1},x_{2},x_{3},x_{2}')}{\int_{X_{1}}\d x_{1}\int_{X_{2}}\d x_{2}\ \kappa_{2;t_{1},t_{2}}(x_{1},x_{2})}.\label{eq:P(X3|X1X2)=00003Dkappa}
\end{equation}
For example, we can employ (\ref{eq:P(X3|X1X2)=00003Dkappa}) to describe
the result of a double-slit experiment. (Set $X_{2}=I_{1}\cup I_{2}$,
where $I_{1}$ and $I_{2}$ are disjoint intervals in $\R$.%
) For a free particle, the observable law on a double-slit experiment
ought to be Galilei invariant (or covariant). The use of (\ref{eq:P(X3|X1X2)=00003Dkappa})
makes this invariance clear, because both $\kappa_{2}$ and $\kappa_{4}$
are Galilei invariant. In this case, we find the following explicit
forms of $\kappa_{2}$ and $\kappa_{4}$ by (\ref{eq:K-free}) and
straightforward calculations.
\begin{align*}
\kappa_{2;t_{1},t_{2}}(x_{1},x_{2}) & =\frac{m}{2\pi\hbar T_{21}},\\
\kappa_{4;t_{1},t_{2},t_{3}}(x_{1},x_{2},x_{3},x_{2}') & =\frac{m^{2}}{(2\pi\hbar)^{2}T_{21}T_{32}}\exp\left(\frac{2\im m}{\hbar T_{21}T_{32}T_{31}}\left[\left(S(A_{1},A_{2},A_{3})\right)^{2}-\left(S(A_{1},A_{2}',A_{3})\right)^{2}\right]\right),
\end{align*}
where $T_{kl}:=t_{k}-t_{l}$, $A_{i}:=(x_{i},t_{i})\in\R^{2}$, $i=1,2,3$
with $A_{2}':=(x_{2}',t_{2})$, and $S(A_{1},A_{2},A_{3})$ denotes
the area of the triangle $A_{1}A_{2}A_{3}$. See Yamashita \cite{Yam2025}.

Notice that the formula (\ref{eq:priorCondProb=00003Dkappa}) refers
neither to any vector in a Hilbert space $\cH$ (usually interpreted
as a pure state), nor to any operator on $\cH$, including any state
operator (aka density matrix). Thus (\ref{eq:priorCondProb=00003Dkappa})
can be seen as an example of a description of an empirical law in
``quantum mechanics without states''.

\subsection{The POVM picture of QM}

\label{subsec:The-POVM-picture}

For each time $t$, I considered only a single quantity $\quantity_{t}$,
the position of a particle at $t$, and the corresponding projection-valued
measure (PVM) $E_{t}(\cdot)$, in Subsection \ref{subsec:A-spinless-particle}.
In this subsection, instead I consider a single positive-operator-valued
measure (POVM) for each time. A single PVM generates a commutative
$W^{*}$-algebra, where any quantum (non-commutative) aspects are
ignored. Contrary, a single noncommutative POVM has a rich quantum
implications. The spin (and more generally the angular momentum) of
a particle is a typical example whose non-commutative aspects should
be understood even when we ignore the time evolution. It turns out
that the quantum spin is suitable for the POVM formalism. Generally,
many quantum systems concerning the square-integrable irreducible
unitary representations of a Lie group appear to be suitable for the
POVM formalism. (Note that any irreducible unitary representation
of a compact Lie group (e.g., $SO(3)$, $SU(2)$ or ${\rm Spin}(n)$)
is square-integrable.) However, I shall not go into the matter of
unitary representations in this paper.

Logically, the notion of POVM is a generalization of that of PVM.
However, in this subsection I consider the POVM's which are not PVM's,
and hence the content of this subsection is not a generalization of
Subsection \ref{subsec:A-spinless-particle}. 

Note that the content of this subsection overlaps considerably with
Yamashita \cite[Sec.5]{Yam2026b}.

\subsubsection{POVM and Prior conditional probability}

Fix a scope $\alpha\in\cI$ and the local algebra $\fA_{\alpha}$,
and let $\Tr=\Tr_{\alpha}$ denote the canonical trace on $\fA_{\alpha}$.
Assume that $\fA_{\alpha}$ is a $\sigma$-finite type I factor, that
is, $\fA_{\alpha}\cong\Bdd(\cH_{\alpha})$ for some separable Hilbert
space $\cH_{\alpha}$, and so identify $\fA_{\alpha}$ with $\Bdd(\cH_{\alpha})$.

Let $\SYM{\Proj(\cH_{\alpha})}{Proj(H)}:=\Proj(\fA_{\alpha})$ and
$\SYM{\AProj(\cH_{\alpha})}{AProj(H)}:=\AProj(\fA_{\alpha})$, that
is,
\[
\AProj(\cH_{\alpha})=\{|v\rangle\langle v|:\ v\in\cH_{\alpha},\,\|v\|=1\}.
\]
Consider the subset $\{\SYM{P_{x}}{Px}|x\in\Manifold\}$ of $\AProj(\cA)$,
where $\Manifold$ is an index set, which can be interpreted as a
physically meaningful parameter space, so that each $P_{x}$ is given
an empirical meaning as a yes-no type measurement. Assume that $\fA_{\alpha}$
is generated by $\{P_{x}|x\in\Manifold\}$. Practically $\Manifold$
is often assumed to be a differential manifold, since many physical
laws are expressed as differential equations. First, I only assume
that $\Manifold$ is given a $\sigma$-compact topology, i.e., $\Manifold$
is a topological space such that $\Manifold$ is the union of a countable
number of compact subsets. Further assumptions will be added later.

Let $\mu$ be a Borel measure on $\Manifold$, and assume the following.
\begin{enumerate}
\item $\mu(S)<\infty$ for all $S\in\Borel_{\cpt}(\Manifold)$ (relatively
compact Borel sets).
\item the map $\POVM:\Borel(\Manifold)\to\fA_{\alpha}$ defined by
\[
\SYM{\POVM(S)}{E(S)}\equiv\SYM{\POVM_{S}}{ES}:=\int_{S}P_{x}\,\d\mu(x),\qquad S\subset\Borel(\Manifold),
\]
is a POVM.
\item %
If $n\in\N$, $S_{1},...,S_{n}\in\Borel(\Manifold)$, and if $\mu(S_{i})>0$
for all $i=1,...,n$, then $\POVM(S_{1})\cdots\POVM(S_{n})\neq0$.
\end{enumerate}
Note that $\POVM(S)$ is of trace-class for all $S\in\Borel_{\cpt}(\Manifold)$;
In fact, $\Tr\POVM(S)=\mu(S)<\infty$. The condition (3) is not essential,
but it simplifies the description of the theory.

Although it is not very clear that each $\POVM_{S}$ has an empirical
meaning, we assume that $\POVM_{S}$ can be interpreted as a yes-no
type measurement with some error; If $S_{1},S_{2}\in\Borel(\Manifold)$
are compact, I wish to consider the prior conditional probability
$\Prob(S_{2}|S_{1})$, which is interpreted as ``the probability
of the value of the physical quantity $\Manifold$ being measured
to be in $S_{2}$, conditioned that the value of $\Manifold$ was
measured to be in $S_{1}$''. The existence of some error in the
measurement means that $\Prob(S_{2}|S_{1})<1$ can hold even if $S_{1}\subset S_{2}$,
and $\Prob(S_{2}|S_{1})>0$ can hold even if $S_{1}\cap S_{2}=\emptyset$.

Let us define the \termi{prior conditional probability} $\Prob(S_{k+1},...,S_{N}|S_{1},...,S_{k})$
for $S_{1},...,S_{n}\in\Borel_{\cpt}(\Manifold)$ as follows:
\begin{equation}
\Prob(S_{k+1},...,S_{N}|S_{1},...,S_{k})\equiv\Prob(B|A):=\frac{\Tr(AB)^{*}AB}{\Tr A^{*}A},\label{eq:def:P(S|S)}
\end{equation}
\[
A:=\POVM(S_{1})\cdots\POVM(S_{k}),\qquad B:=\POVM(S_{k+1})\cdots\POVM(S_{N}),\qquad1\le k<N,
\]
if $\Tr A^{*}A\neq0$, i.e., $A\neq0$. Here, the time evolution is
not considered yet; It shall be discussed in Subsubsec.~\ref{subsec:Time-evolution}.
We see
\[
\Prob(S_{k+1},...,S_{N}|S_{1},...,S_{k})=\frac{\Prob(S_{2},...,S_{N}|S_{1})}{\Prob(S_{2},...,S_{k}|S_{1})},
\]
and hence it suffices to consider the cases where $k=1$.%

The definition of the prior conditional probability (\ref{eq:def:P(S|S)})
will require some explanations. Although (\ref{eq:def:P(S|S)}) does
not refer to any quantum state, here I need to refer to quantum states
for the explanation in terms of the conventional QM.

Let $\rho$ denote a density matrix on $\cH_{\alpha}$. I interpret
the unnormalized state transformation $\rho\mapsto\POVM(S)\rho\POVM(S)$
as an observation (possibly with some error) that the $\Manifold$-valued
physical quantity is in $S\subset\Manifold$.%
{} However in this interpretation, the probability of the observation
is not ${\rm Tr}(\rho\POVM(S))$ but ${\rm Tr}(\rho\POVM(S)^{2})$.
Hence this probability is not additive w.r.t.~$S$ in general, that
is, even if $S_{1},S_{2}\in\Borel(\Manifold)$ and $S_{1}\cap S_{2}=\emptyset$,
generally we have
\[
\Prob_{\rho}(S_{1}\cup S_{2})\neq\Prob_{\rho}(S_{1})+\Prob_{\rho}(S_{2}),\qquad\Prob_{\rho}(S):=\Tr(\rho\POVM(S)^{2}).
\]
One may feel that this interpretation seems strange, so I will explain
this as follows. Let $S\mapsto\tilde{\POVM}(S)$ be a Naimark dilation
of $S\mapsto\POVM(S)$; that is, $\tilde{\POVM}(\cdot)$ is a PVM
on some extended Hilbert space $\cH':=\cH_{\alpha}\oplus\cK$, so
that $E_{\alpha}\tilde{\POVM}(S)\phi=\POVM(S)\phi$ for all $S\in\Borel(\Manifold)$
and $\phi\in\cH$, where $E_{\alpha}$ is the orthogonal projection
from $\cH'$ onto $\cH_{\alpha}$. If this POVM measurement is realized
as the composition of the PVM measurement $\tilde{\POVM}(S)$ and
a projective measurement $E_{\alpha}$, on $\cH$, then the resulting
joint probability is
\[
\Tr_{\cH'}(E_{\alpha}\tilde{\POVM}(S)(\rho\oplus0)\tilde{\POVM}(S)E_{\alpha})=\Tr_{\alpha}(\POVM(S)\rho\POVM(S))=\Tr_{\alpha}(\rho\POVM(S)^{2}),
\]
and hence not additive w.r.t.~$S$, but this fact causes no contradiction
or paradox. In the literature, often the unnormalized state transformation
$\rho\mapsto\POVM(S)^{1/2}\rho\POVM(S)^{1/2}$ is assumed instead
of $\rho\mapsto\POVM(S)\rho\POVM(S)$, to maintain the additivity
of the probability $\Prob_{\rho}(S):=\Tr(\rho\POVM(S))$. However,
this assumption seems to lack a physical ground, and somewhat rather
strange.

\subsubsection{Sorkin additivity and Sorkin density}

For $S\in\Borel(\Manifold)$, again let $\Prob_{\rho}(S):=\Tr(\rho\POVM(S)^{2})$.
Let $X_{k}\in\Borel(\Manifold)$ ($k=1,...,K$, $K\in\N$) and $i\neq j\then X_{i}\cap X_{j}=\emptyset$.%
{} If each $\POVM(X_{i})$ is a projection, we have the usual (classical)
additivity $\Prob_{\rho}(\bigcup_{i=1}^{K}X_{i})=\sum_{i=1}^{K}\Prob_{\rho}(X_{i})$,
and so the computation of $\Prob_{\rho}(\bigcup_{i=1}^{K}X_{i})$
is reduced to that of each $\Prob_{\rho}(X_{i})$. However, if $\POVM(X_{i})$'s
are not projections, how can we reduce the computation of $\Prob_{\rho}(\bigcup_{i=1}^{K}X_{i})$?
The answer is given by (\ref{eq:Sorkin-S}) below. First, let $\SYM{\Bdd_{1}(\cH)}{B1()}$
denote the set of bounded operators $A$ on $\cH$ such that $\|A\|\le1$.
Then we find the following.
\begin{lem}
[Sorkin additivity]\label{lem:Sorkin0} Let $\cH$ be a (finite or
infinite-dimensional) Hilbert space. For any $A\in\Bdd_{1}(\cH)$,
set $\SYM{\Prob_{\rho}(A)}{Prho(A)}:=\Tr A\rho A^{*}$. Let $K\in\N$,
and $A_{i}\in\Bdd_{1}(\cH)$ ($i=1,...,K$). Assume $A_{i}+A_{j}\in\Bdd_{1}(\cH)$
($i\neq j$), and $A_{1}+\cdots+A_{K}\in\Bdd_{1}(\cH)$. Then we have%
\begin{equation}
\Prob_{\rho}\left(A_{1}+\cdots+A_{K}\right)=\sum_{1\le i<j\le K}\Prob_{\rho}(A_{i}+A_{j})-(K-2)\sum_{i=1}^{K}\Prob_{\rho}(A_{i}).\label{eq:Sorkin-A}
\end{equation}
\end{lem}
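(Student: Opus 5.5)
The identity is purely algebraic: $\Prob_{\rho}$ is a quadratic form in its argument. So the plan is to expand every term into ``diagonal'' and ``cross'' pieces and compare coefficients.

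First I will introduce the sesquilinear form $B(X,Y):=\Tr X\rho Y^{*}$ for bounded $X,Y$. It is well defined since $\rho$ is trace class, and $\Prob_{\rho}(X)=B(X,X)$. The assumption that all the relevant sums lie in $\Bdd_{1}(\cH)$ only ensures that $\Prob_{\rho}$ of them is defined in the paper's sense. It plays no role in the computation, which holds for arbitrary bounded operators.

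Next, writing $D_{i}:=B(A_{i},A_{i})=\Prob_{\rho}(A_{i})$ and $C_{ij}:=B(A_{i},A_{j})+B(A_{j},A_{i})$ for $i<j$, I expand both sides using sesquilinearity. The left-hand side is
\[
\Prob_{\rho}\Bigl(\sum_{i}A_{i}\Bigr)=\sum_{i}D_{i}+\sum_{i<j}C_{ij}.
\]
For each pair $i<j$ we have $\Prob_{\rho}(A_{i}+A_{j})=D_{i}+D_{j}+C_{ij}$. Summing over the $\binom{K}{2}$ pairs, each index $i$ occurs in exactly $K-1$ pairs, so
\[
\sum_{i<j}\Prob_{\rho}(A_{i}+A_{j})=(K-1)\sum_{i}D_{i}+\sum_{i<j}C_{ij}.
\]
Subtracting $(K-2)\sum_{i}D_{i}$ leaves $\sum_{i}D_{i}+\sum_{i<j}C_{ij}$, which matches the left-hand side and proves (\ref{eq:Sorkin-A}).

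There is no real obstacle. The only points needing care are the counting fact that each index lies in $K-1$ pairs, and the degenerate case $K=1$. For $K=1$ the pair sum is empty and the coefficient $-(K-2)=1$, so the identity reads $\Prob_{\rho}(A_{1})=\Prob_{\rho}(A_{1})$ and holds trivially. The case $K=2$ is immediate from the expansion.
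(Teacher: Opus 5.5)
Your proof is correct. You expand $\Prob_{\rho}$ as the quadratic form of the sesquilinear form $(X,Y)\mapsto\Tr X\rho Y^{*}$ and count that each index lies in exactly $K-1$ pairs; this is exactly the ``easy'' argument the paper has in mind, since it states the lemma without writing out a proof, and your observation that the $\Bdd_{1}(\cH)$ hypotheses play no role in the identity itself is accurate.
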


While the proof is easy, it seems that the significance of (\ref{eq:Sorkin-A})
as a quantum-probabilistic law is emphasized for the first time in
1994 by Sorkin \cite{Sor94} (see also \cite{Tab2009,Bar-Mul-Udu-2014,Mue2021}).
Thus we refer the law (\ref{eq:Sorkin-A}) as the \termi{Sorkin additivity}.
Let $A_{k}:=\POVM(X_{k})$, $k=1,...,K$, then we see that they satisfy
the assumptions of Lemma \ref{lem:Sorkin0}. Hence we have the following
quantum-probabilistic law in our setting:
\begin{equation}
\Prob_{0}\left(\bigcup_{i=1}^{K}X_{i}\right)=\sum_{1\le i<j\le K}\Prob_{0}(X_{i}\cup X_{j})-(K-2)\sum_{i=1}^{K}\Prob_{0}(X_{i}),\label{eq:Sorkin-S}
\end{equation}
where $\SYM{\Prob_{0}(X)}{P0()}:=\Prob(X|S_{0}),\ S_{0}\in\Borel_{\cpt}(\Manifold),\ \mu(S_{0})>0.$%
{} More generally, we have
\begin{cor}
We write $\Prob_{A}(B):=\Prob(B|A)$ in (\ref{eq:def:P(S|S)}). Let
$S_{j}\in\Borel(\Manifold)$ for $j=1,...,k$, and $S_{j}^{(i)}\in\Borel(\Manifold)$
for $i=1,...,K$, $j=k+1,...,N$. Let $A:=\POVM(S_{1})\cdots\POVM(S_{k}),$
and $B_{i}:=\POVM(S_{k+1}^{(i)})\cdots\POVM(S_{N}^{(i)})$. Assume
$B_{i}+B_{j}\in{}_{1}(\cH)$ ($i\neq j$), and $B_{1}+\cdots+B_{K}\in{}_{1}(\cH)$.
Then
\begin{equation}
\Prob_{A}\left(\sum_{i=1}^{K}B_{i}\right)=\sum_{1\le i<j\le K}\Prob_{A}(B_{i}+B_{j})-(K-2)\sum_{i=1}^{K}\Prob_{A}(B_{i}).\label{eq:def:Sorkin-B}
\end{equation}
\end{cor}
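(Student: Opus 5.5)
The plan is to reduce the corollary to Lemma \ref{lem:Sorkin0}, taking for the ``state'' the normalized operator built from the conditioning product $A$. By (\ref{eq:def:P(S|S)}), with $B$ replaced by an arbitrary $C\in\Bdd_{1}(\cH)$ (here $C$ ranges over $B_{i}$, $B_{i}+B_{j}$ and $\sum_{i}B_{i}$), we have
\[
\Prob_{A}(C)=\frac{\Tr(AC)^{*}AC}{\Tr A^{*}A}=\frac{\Tr C^{*}A^{*}AC}{\Tr A^{*}A}.
\]
Since $A$ is a product of the trace-class operators $\POVM(S_{j})$, the operator $A^{*}A$ is trace-class and positive; if the $S_{j}$ are not relatively compact, the trace may be infinite, and then one simply restricts to the case where $0<\Tr A^{*}A<\infty$, which is the case in which $\Prob_{A}$ is defined. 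Set $\rho:=A^{*}A/\Tr A^{*}A$. This is a density matrix, and by cyclicity of the trace (valid because $\rho$ is trace-class and $C$ is bounded) $\Tr C^{*}\rho C=\Tr C\rho C^{*}\cdot$(after rearranging) is the quantity the lemma needs. More carefully, the lemma is stated with $\Tr A\rho A^{*}$, whereas here we have $\Tr C^{*}\rho C$. So we apply the lemma to the operators $C^{*}$, that is, to $A_{i}:=B_{i}^{*}$. These lie in $\Bdd_{1}(\cH)$ because the adjoint preserves the norm, and likewise $(B_{i}+B_{j})^{*}$ and $(\sum B_{i})^{*}$ lie in $\Bdd_{1}(\cH)$. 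Then $\Prob_{\rho}(B_{i}^{*})=\Tr B_{i}^{*}\rho B_{i}=\Prob_{A}(B_{i})$, and similarly for the sums. Substituting these identities into (\ref{eq:Sorkin-A}) yields (\ref{eq:def:Sorkin-B}) directly.

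As a safeguard, and because it is essentially the proof of the lemma itself, I would also verify the identity directly. The map $C\mapsto q(C):=\Tr C^{*}\rho C$ is a Hermitian quadratic form, $q(C)=\langle C,C\rangle$ with $\langle C,D\rangle:=\Tr C^{*}\rho D$. Expanding gives $q(\sum_{i}B_{i})=\sum_{i}q(B_{i})+\sum_{i<j}2\Re\langle B_{i},B_{j}\rangle$, and also $q(B_{i}+B_{j})=q(B_{i})+q(B_{j})+2\Re\langle B_{i},B_{j}\rangle$. Summing the second identity over $i<j$ counts each $q(B_{i})$ exactly $K-1$ times, so subtracting $(K-2)\sum_{i}q(B_{i})$ recovers the first expression. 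The norm bounds play no role in the algebra; they only ensure that each term is a legitimate probability in $[0,1]$.

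I expect the only real obstacle to be bookkeeping rather than mathematics: matching the lemma's ordering convention $A\rho A^{*}$ against $C^{*}\rho C$ (resolved by passing to adjoints as above), and confirming that the traces involved are finite so that $\rho$ is a genuine density matrix and cyclicity applies. Finiteness follows from $\Tr\POVM(S)=\mu(S)<\infty$ for $S$ relatively compact, together with the fact that the trace-class operators form an ideal in the bounded operators.
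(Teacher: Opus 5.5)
Your proposal is correct and follows the route the paper intends: the paper gives no separate proof, presenting the corollary as an immediate consequence of Lemma~\ref{lem:Sorkin0} with $\rho:=A^{*}A/\Tr A^{*}A$, and your substitution $A_{i}:=B_{i}^{*}$ (which the paper leaves implicit) correctly turns $\Tr B^{*}\rho B$ into the lemma's form $\Tr A\rho A^{*}$, while your quadratic-form expansion supplies the easy proof the paper omits for the lemma itself. The one blemish is the stray claim that $\Tr C^{*}\rho C=\Tr C\rho C^{*}$ by cyclicity, which is false in general (cyclicity gives $\Tr\rho CC^{*}$ versus $\Tr\rho C^{*}C$) and should simply be deleted, since the adjoint substitution is what actually does the work.
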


However the above Corollary is not very convenient. %
{} Instead we want the ``integral version'' of the Sorkin additivity,
where the summations are replaced by integrals. This is given by the
Sorkin density functions in Proposition \ref{prop:SorkinDensity}
below. %
First notice the following relation, which is analogous to (\ref{eq:Sorkin-A}):
\begin{lem}
Let $\mu$ be a measure on a set $X$. Let $f:X\to\C$ be measurable.
If $S\subset X$ is measurable and $\mu(S)<\infty$, we have 
\begin{align*}
\left|\int_{S}f(x)\d\mu(x)\right|^{2} & =\frac{1}{2}\int_{S}\d\mu(x)\int_{S}\d\mu(y)\left|f(x)+f(y)\right|^{2}-\mu(S)\int_{S}\left|f(x)\right|^{2}\d\mu(x)\\
 & =\frac{1}{2}\int_{S}\d\mu(x)\int_{S}\d\mu(y)\left(\left|f(x)+f(y)\right|^{2}-\left|f(x)\right|^{2}-\left|f(y)\right|^{2}\right).
\end{align*}
\end{lem}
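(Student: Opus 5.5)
We expand the square on the double integral and integrate term by term; the main point needing care is integrability, since $f$ is only assumed measurable. So we first assume $f\in L^{2}(S,\mu)$. Then $\mu(S)<\infty$ and Cauchy--Schwarz give $f\in L^{1}(S,\mu)$ with $\left|\int_S f\,\d\mu\right|^2\le\mu(S)\int_S|f|^2\d\mu$. Hence every integrand below is integrable on $S\times S$ with respect to $\mu\otimes\mu$. If instead $\int_S|f|^2\d\mu=\infty$, the right-hand sides contain infinite terms or undefined expressions such as $\infty-\infty$. We therefore either interpret the identity only for $f\in L^{2}(S,\mu)$ or note that it then holds only formally. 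I would state this restriction explicitly, since the lemma is silent on it.

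For the second line, the key step is the pointwise identity
\[
|f(x)+f(y)|^{2}-|f(x)|^{2}-|f(y)|^{2}=2\Re\bigl(f(x)\ol{f(y)}\bigr)=f(x)\ol{f(y)}+\ol{f(x)}f(y).
\]
We integrate it over $S\times S$. The product $f(x)\ol{f(y)}$ lies in $L^{1}(\mu\otimes\mu)$ because $f\in L^{1}(S)$, so Fubini factorizes its integral:
\[
\int_S\d\mu(x)\int_S\d\mu(y)\,f(x)\ol{f(y)}=\Bigl(\int_S f\,\d\mu\Bigr)\ol{\Bigl(\int_S f\,\d\mu\Bigr)}=\left|\int_S f\,\d\mu\right|^2 .
\]
The conjugate term gives the same value. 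Multiplying by $\tfrac12$ yields the second expression.

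For the first line, we split the integrand of the second line into three separately integrable pieces. The terms $|f(x)|^2$ and $|f(y)|^2$ each integrate over $S\times S$ to $\mu(S)\int_S|f|^2\d\mu$. After the factor $\tfrac12$, together they contribute exactly $-\mu(S)\int_S|f|^2\d\mu$. This gives the first expression.

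This is the continuous analogue of the Sorkin additivity in Lemma~\ref{lem:Sorkin0}: the finite identity comes from the same expansion of $\|\sum_i A_i v\|^2$, with the sum over pairs replaced by the double integral.
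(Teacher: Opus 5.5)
The paper states this lemma without proof, but your argument---the pointwise identity $|a+b|^{2}-|a|^{2}-|b|^{2}=2\Re(a\ol{b})$ followed by Fubini factorization of $\int_S\int_S f(x)\ol{f(y)}\,\d\mu(x)\,\d\mu(y)$---is the same mechanism the paper uses in the proof of Proposition~\ref{prop:SorkinDensity}, and it is correct, including your observation that the first line needs $f\in L^{2}(S,\mu)$ to avoid $\infty-\infty$. One small refinement: the second line does not need $L^{2}$, since its integrand equals $2\Re\bigl(f(x)\ol{f(y)}\bigr)$ pointwise, which is $\mu\otimes\mu$-integrable on $S\times S$ as soon as $f\in L^{1}(S,\mu)$, so your remark that both right-hand sides break down when $\int_S|f|^{2}\d\mu=\infty$ is slightly too pessimistic.
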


\begin{prop}
[Sorkin density function]\label{prop:SorkinDensity} %
Let $j\in\N$, and
\[
\vec{x}^{(j)}=(x_{1}^{(j)},...,x_{j}^{(j)})\in\Manifold^{j},\qquad\vec{X}^{(j)}=(x_{0}^{(j)},...,x_{j+1}^{(j)})\in\Manifold^{j+2},\qquad j\in\N,
\]
and similarly for $\vec{y}^{(j)}$ and $\vec{Y}^{(j)}$, but let $y_{0}^{(j)}:=x_{0}^{(j)}$
and $y_{j+1}^{(j)}:=x_{j+1}^{(j)}$. We write briefly
\[
\int_{\vec{S}}\d\vec{x}^{(j)}:=\int_{S_{1}}\d\mu(x_{1}^{(j)})\cdots\int_{S_{j}}\d\mu(x_{j}^{(j)}).
\]
Then there exist two functions $\rho_{1}^{(j+2)}:\Manifold^{j+2}\to[0,\infty)$
and $\rho_{2}^{(j+2)}:\Manifold^{j+2}\times\Manifold^{j+2}\to[0,\infty)$,
called \termi{1-path and 2-path Sorkin density functions} respectively,
such that for all $S_{i}\in\Borel_{\cpt}(\Manifold)$, $i=1,...,N$,
\begin{align*}
\Prob(S_{2},...,S_{j}|S_{1}) & =C_{1}^{-1}\int_{\Manifold}\d\mu(x_{0}^{(j)})\int_{\Manifold}\d\mu(x_{j+1}^{(j)})\\
 & \quad\times\left[\int_{\vec{S}}\d\vec{x}^{(j)}\int_{\vec{S}}\d\vec{y}^{(j)}\,\rho_{2}^{(j)}(\vec{X}^{(j)},\vec{Y}^{(j)})-\mu(\vec{S})\int_{\vec{S}}\d\vec{x}^{(j)}\,\rho_{1}^{(j)}(\vec{X}^{(j)})\right],
\end{align*}
where $\mu(\vec{S}):=\prod_{i=1}^{j}\mu(S_{i})$ and $\SYM{C_{1}}{C1}:=\int_{S_{1}}\d\mu(x)\int_{S_{1}}\d\mu(y)\,\rho_{1}^{(2)}(x,y)$;
Or equivalently
\begin{align}
\Prob(S_{2},...,S_{j}|S_{1}) & =C_{1}^{-1}\int_{\Manifold}\d\mu(x_{0}^{(j)})\int_{\Manifold}\d\mu(x_{j+1}^{(j)})\int_{\vec{S}}\d\vec{x}^{(j)}\int_{\vec{S}}\d\vec{y}^{(j)}\nonumber \\
 & \qquad\qquad\times\left(\rho_{2}^{(j)}(\vec{X}^{(j)},\vec{Y}^{(j)})-\rho_{1}^{(j)}(\vec{X}^{(j)})-\rho_{1}^{(j)}(\vec{Y}^{(j)})\right).\label{eq:SorkinDensity}
\end{align}
In fact, $\rho_{1}^{(j+2)}$ and $\rho_{2}^{(j+2)}$ are given as
follows:
\begin{align}
\SYM{\rho_{1}^{(j+2)}(\vec{X}^{(j)})}{rho1} & :=\Tr A(\vec{X}^{(j)})A(\vec{X}^{(j)})^{*},\label{eq:def:rho1}\\
\SYM{\rho_{2}^{(j+2)}(\vec{X}^{(j)},\vec{Y}^{(j)})}{rho2} & :=\Tr\left(A(\vec{X}^{(j)})+A(\vec{Y}^{(j)})\right)\left(A(\vec{X}^{(j)})+A(\vec{Y}^{(j)})\right)^{*},\label{eq:def:rho2}
\end{align}
where $A(\vec{X}^{(j)}):=P_{x_{0}}\cdots P_{x_{j+1}}$.
\end{prop}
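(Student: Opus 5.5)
The plan is to write the probability as a single trace of an operator integral, insert a resolution of the identity at both ends, and then apply the preceding integral lemma (the ``$\left|\int_S f\right|^2$'' identity) pointwise in the endpoints $x_0,x_{j+1}$.

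By (\ref{eq:def:P(S|S)}) with $k=1$, $\Prob(S_2,\dots,S_j|S_1)=\Tr(AB)^*AB/\Tr A^*A$, where $A=\POVM(S_1)$ and $B=\POVM(S_2)\cdots\POVM(S_j)$. Set $M:=\POVM(S_1)\cdots\POVM(S_j)=\int_{\vec S}\d\vec x^{(j)}\,P_{x_1}\cdots P_{x_j}$. The denominator $\Tr\POVM(S_1)^2=\int_{S_1}\int_{S_1}\Tr P_xP_y\,\d\mu\,\d\mu$ is exactly $C_1$, once one checks that $\rho_1^{(2)}(x,y)=\Tr P_xP_y$ (the degenerate case with no interior points). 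So everything reduces to expressing $\Tr M^*M=\Tr MM^*$.

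To produce the end variables $x_0,x_{j+1}$, I will use a resolution of the identity. The natural choice is to normalise $\mu$ so that $\POVM(\Manifold)=\int_\Manifold P_x\,\d\mu(x)=\bbOne$; otherwise I would rescale $\mu$ and absorb the constant. Writing $\Tr MM^*=\Tr(\POVM(\Manifold)M\POVM(\Manifold)M^*)$ and using cyclicity gives
\[
\Tr MM^*=\int_\Manifold\d\mu(x_0)\int_\Manifold\d\mu(x_{j+1})\,\Tr P_{x_0}MP_{x_{j+1}}M^*.
\]
Because $P_{x_0}$ and $P_{x_{j+1}}$ are rank-one projections, $P_{x_0}MP_{x_{j+1}}=f(\vec X)\,|v_{x_0}\rangle\langle v_{x_{j+1}}|$, where
\[
f(\vec X):=\int_{\vec S}\d\vec x^{(j)}\,c(\vec X),\qquad c(\vec X):=\langle v_{x_0},P_{x_1}\cdots P_{x_j}v_{x_{j+1}}\rangle\,\text{(scalar)}.
\]
Hence $\Tr P_{x_0}MP_{x_{j+1}}M^*=|f|^2$. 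Similarly $A(\vec X)=P_{x_0}\cdots P_{x_{j+1}}$ is $c(\vec X)$ times the rank-one operator $|v_{x_0}\rangle\langle v_{x_{j+1}}|$, so $\rho_1=|c(\vec X)|^2$ and $\rho_2=|c(\vec X)+c(\vec Y)|^2$, since $\vec X$ and $\vec Y$ share their endpoints. This shared-endpoint structure is what makes the sum $A(\vec X)+A(\vec Y)$ collapse to a scalar multiple of one rank-one operator.

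Next I apply the preceding lemma on the product measure space $\vec S$ (measure $\mu^{\otimes j}$, total mass $\mu(\vec S)$) to the function $\vec x\mapsto c(\vec X)$. This gives $|f|^2=\frac12\iint(\rho_2-\rho_1-\rho_1)$, or equivalently the version with $-\mu(\vec S)\int\rho_1$. Integrating over $x_0,x_{j+1}$ and dividing by $C_1$ yields both displayed formulas. The factor $\tfrac12$ appearing in the lemma has to be reconciled with the statement: either it is absorbed into the definition of $\rho_2$, or the formula is meant to be read up to that normalisation. I would check this and adjust the constants accordingly.

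The main obstacle is justifying the interchanges of trace and integrals, and Fubini. I would handle these using the trace-class property of $\POVM(S)$ for compact $S$ (since $\Tr\POVM(S)=\mu(S)<\infty$) together with the bound $|c|\le1$, which makes every integrand bounded on sets of finite measure. The insertion of $\POVM(\Manifold)=\bbOne$ is only weakly convergent, and I would handle it by monotone convergence over an exhaustion of $\Manifold$ by compact sets, which is available because $\Manifold$ is $\sigma$-compact. Nonnegativity of $\rho_1$ and $\rho_2$ is immediate from the formulas above.
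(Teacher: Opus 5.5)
Your proposal is correct and follows the paper's proof essentially step for step. The paper also inserts $\POVM(\Manifold)=\bbOne$ at both ends and writes $A(\vec X)$ as the scalar $F(\vec X)=\prod_i\langle v_{x_i}|v_{x_{i+1}}\rangle$ (your $c(\vec X)$) times $|v_{x_0}\rangle\langle v_{x_{j+1}}|$; it then uses $\rho_2-\rho_1(\vec X)-\rho_1(\vec Y)=2\Re\bigl(F(\vec X)\overline{F(\vec Y)}\bigr)$, which is the same polarization identity that underlies the lemma you invoke.

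The factor $\frac12$ you flagged is genuine: the paper's own computation ends with it as well. So the displayed formulas in the statement are missing a $\frac12$, namely on the $\rho_2$-term in the first version and as an overall factor in the second.
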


\begin{proof}
Fix $j\in\N$, and omit ``$(j)$'' in $\vec{x}^{(j)}$, $\vec{X}^{(j)}$,
etc., and ``$(j+2)$'' in $\rho_{1}^{(j+2)}$ and $\rho_{2}^{(j+2)}$.
For each $x\in\Manifold$, let $v_{x}\in\cH_{\alpha}$ be such that
$P_{x}=|v_{x}\rangle\langle v_{x}|$. %
Let $F(\vec{X}):=\prod_{i=0}^{N}\langle v_{x_{i}}|v_{x_{i+1}}\rangle$.
{} Then we find
\[
\rho_{2}(\vec{X},\vec{Y})-\rho_{1}(\vec{X})-\rho_{1}(\vec{Y})=\Tr\left(A(\vec{X})A(\vec{Y})^{*}+A(\vec{Y})A(\vec{X})^{*}\right)=2\Re\left(F(\vec{X})\ol{F(\vec{Y})}\right),
\]
and hence
\begin{align*}
 & \Tr\POVM(S_{1})\cdots\POVM(S_{j})\POVM(S_{j})\cdots\POVM(S_{1})\\
 & \qquad=\int_{\Manifold}\d\mu(x_{0})\int_{\Manifold}\d\mu(x_{j+1})\ \Tr P_{x_{0}}\POVM(S_{1})\cdots\POVM(S_{j})P_{x_{j+1}}\POVM(S_{j})\cdots\POVM(S_{1})\\
 & \qquad=\int_{\Manifold}\d\mu(x_{0})\int_{\Manifold}\d\mu(x_{j+1})\int_{\vec{S}}\d\vec{x}\int_{\vec{S}}\d\vec{y}\ F(\vec{X})\ol{F(\vec{Y})}\\
 & \qquad=\int_{\Manifold}\d\mu(x_{0})\int_{\Manifold}\d\mu(x_{j+1})\int_{\vec{S}}\d\vec{x}\int_{\vec{S}}\d\vec{y}\ \Re\left(F(\vec{X})\ol{F(\vec{Y})}\right)\\
 & \qquad=\frac{1}{2}\int_{\Manifold}\d\mu(x_{0})\int_{\Manifold}\d\mu(x_{j+1})\int_{\vec{S}}\d\vec{x}\int_{\vec{S}}\d\vec{y}\ \left(\rho_{2}(\vec{X},\vec{Y})-\rho_{1}(\vec{X})-\rho_{1}(\vec{Y})\right).
\end{align*}
Thus (\ref{eq:SorkinDensity}) follows.%
\end{proof}

\subsubsection{Three-point function}

A explicit formula to calculate the prior conditional probability
has given by (\ref{eq:SorkinDensity}), in terms of Sorkin density
functions. Next we consider another formula (\ref{eq:priorCondProd=00003Dzeta})
with (\ref{eq:zetan}) below; We will see that its empirical significance
is less clear, but it is simpler and more convenient for practical
calculations, than (\ref{eq:SorkinDensity}). For $n\in\N$, define
the function $\zeta_{n}:\Manifold^{n}\to\C$ by
\[
\zeta_{n}(x_{1},...,x_{n}):=\Tr P_{x_{1}}\cdots P_{x_{2}},\qquad x_{1},...,x_{n}\in\Manifold.
\]
Then (\ref{eq:priorCondProd=00003Dzeta}) and (\ref{eq:zetan}) say
that it is sufficient to know the ``\termi{three-point function}''
$\zeta_{3}$ to calculate all the prior conditional probabilities
of the form $\Prob(S_{2},...,S_{N}|S_{1})$. Note that $\zeta_{2}(x_{1},x_{2})=\zeta_{3}(x_{1},x_{2},x_{2})$,
and so $\zeta_{2}$ is derived from $\zeta_{3}$. We see $\zeta_{2}(x_{1},x_{2})=\Prob(P_{x_{2}}|P_{x_{1}})$,
and hence $\zeta_{2}(x_{1},x_{2})$ is an observable quantity as a
probability. Contrary, $\zeta_{3}$ is only partially observable.
For example, the absolute value%
\[
|\zeta_{n}(x_{1},...,x_{n})|=\prod_{i=1}^{n}\sqrt{\zeta_{2}(x_{i},x_{i+1})},\qquad x_{n+1}:=x_{1},
\]
is an observable quantity for all $n\ge2$. However, the complex conjugate
$\zeta_{3}'(x_{1},x_{2},x_{2}):=\ol{\zeta_{3}(x_{1},x_{2},x_{2})}$
gives the same prior conditional probabilities, and hence the two
three-point functions $\zeta_{3}$ and $\zeta_{3}'$ cannot be distinguished
experimentally.

By (\ref{eq:def:rho1}), the 1-path Sorkin density function $\rho_{1}^{(n)}$
is expressed by $\zeta_{2}$ as
\[
\rho_{1}^{(n)}(x_{1},...,x_{n})=\zeta_{2n}(x_{1},...,x_{n},x_{n},...,x_{1})=\prod_{i=1}^{n}\zeta_{2}(x_{i},x_{i+1}),\qquad x_{n+1}:=x_{1}.
\]
By (\ref{eq:def:rho2}), the 2-path Sorkin density function $\rho_{2}^{(n)}(\vec{x},\vec{y})$,
$\vec{x}=(x_{1},...,x_{n})$, $\vec{y}=(y_{1},...,y_{n})$, $y_{1}=x_{1}$,
$y_{n}=x_{n}$, is expressed by $\zeta_{2n-2}$ as
\begin{align*}
\rho_{2}^{(n)}(\vec{x},\vec{y}) & =\Tr(X+Y)^{*}(X+Y),\qquad X:=P_{x_{1}}\cdots P_{x_{n}},\ Y:=P_{y_{1}}\cdots P_{y_{n}}\\
 & =\Tr X^{*}X+\Tr Y^{*}Y+2\Re X^{*}Y\\
 & =\rho_{1}^{(n)}(\vec{x})+\rho_{1}^{(n)}(\vec{y})+2\Re\zeta_{2n}(x_{n},...,x_{1},y_{1},...,y_{n})\\
 & =\rho_{1}^{(n)}(\vec{x})+\rho_{1}^{(n)}(\vec{y})+2\Re\zeta_{2n-2}(x_{n},...,x_{1},y_{2},...,y_{n-1}).
\end{align*}
Thus the real part of $\zeta_{j}(x_{1},...,x_{j})$ ($j\ge2$) is
considered to be an observable quantity. Since both the absolute value
$|\zeta_{j}|$ and the real part $\Re\zeta_{j}$ are observable, the
value of $|\Im\zeta_{j}|$ is also observable. In other words, the
only ``empirical indefiniteness'' of $\zeta_{j}$ is in ${\rm sgn}(\Im\zeta_{j})$,
and so the only ``gauge transformation'' is the complex conjugation
$\zeta_{j}\mapsto\ol{\zeta_{j}}$.

With the shorthand notations
\begin{align*}
\vec{\x}^{(j)} & :=(\x_{1},...,\x_{2j})=(x_{1},...,x_{j},x_{j}',...,x_{1}'),\\
\int_{\vec{S}}\d\vec{\x}^{(j)} & :=\int_{S_{1}}\d\mu(x_{1})\cdots\int_{S_{j}}\d\mu(x_{j})\int_{S_{j}}\d\mu(x_{j}')\cdots\int_{S_{1}}\d\mu(x_{1}').
\end{align*}
we find the following expression:
\begin{equation}
\Prob(S_{2},...,S_{N}|S_{1})=C_{1}^{-1}\int_{\vec{S}}\d\vec{\x}^{(j)}\ \zeta_{2N}(\vec{\x}^{(j)}),\label{eq:priorCondProd=00003Dzeta}
\end{equation}
where
\[
C_{1}:=\Tr\POVM(S_{1})^{2}=\int_{S_{1}}\d\mu(x)\int_{S_{1}}\d\mu(y)\ \zeta_{2}(x,y).
\]

\begin{lem}
Let $x_{1},...,x_{n}\in\Manifold$, $n\ge4$. Then we have
\begin{equation}
\zeta_{n}(x_{1},...,x_{n})=\frac{\prod_{i=2}^{n-1}\zeta_{3}(x_{1},x_{i},x_{i+1})}{\prod_{i=3}^{n-1}\zeta_{2}(x_{1},x_{i})},\label{eq:zetan}
\end{equation}
if the denominator is nonzero.
\end{lem}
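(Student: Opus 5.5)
We write each $P_{x}$ as a rank-one projection $P_{x}=|v_{x}\rangle\langle v_{x}|$ with $\|v_{x}\|=1$, as in the proof of Proposition \ref{prop:SorkinDensity}. Then every $\zeta_{n}$ becomes an explicit cyclic product of inner products. Here we read $\zeta_{n}(x_{1},...,x_{n})$ as $\Tr P_{x_{1}}\cdots P_{x_{n}}$; the ``$P_{x_{2}}$'' in the definition is evidently a typo for $P_{x_{n}}$. Since $\Tr|a\rangle\langle b|=\langle b|a\rangle$, we get
\[
\zeta_{n}(x_{1},...,x_{n})=\prod_{i=1}^{n}\langle v_{x_{i}}|v_{x_{i+1}}\rangle,\qquad x_{n+1}:=x_{1}.
\]
In particular, $\zeta_{2}(x,y)=|\langle v_{x}|v_{y}\rangle|^{2}$ and $\zeta_{3}(x,y,z)=\langle v_{x}|v_{y}\rangle\langle v_{y}|v_{z}\rangle\langle v_{z}|v_{x}\rangle$. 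This representation is independent of the phase choice of each $v_{x}$, so no gauge issue arises.

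Abbreviate $g_{ij}:=\langle v_{x_{i}}|v_{x_{j}}\rangle$, so that $g_{ji}=\ol{g_{ij}}$ and $g_{ii}=1$. The numerator of (\ref{eq:zetan}) is then
\[
\prod_{i=2}^{n-1}g_{1i}\,g_{i,i+1}\,g_{i+1,1}.
\]
The plan is to split it into three telescoping pieces:
\begin{itemize}
\item the middle factors give $\prod_{i=2}^{n-1}g_{i,i+1}$;
\item the outer factors give $\prod_{i=2}^{n-1}g_{1i}\cdot\prod_{i=3}^{n}g_{i1}$.
\end{itemize}
Pairing $g_{1i}$ with $g_{i1}=\ol{g_{1i}}$ for $i=3,...,n-1$ produces $\prod_{i=3}^{n-1}|g_{1i}|^{2}=\prod_{i=3}^{n-1}\zeta_{2}(x_{1},x_{i})$, which is exactly the denominator. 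The unpaired factors left over are $g_{12}$ and $g_{n1}$. Together with the middle product, they give
\[
g_{12}\,g_{23}\cdots g_{n-1,n}\,g_{n1}=\zeta_{n}(x_{1},...,x_{n}).
\]

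The only point requiring care is the division. The identity is first proved in multiplied-out form,
\[
\zeta_{n}\cdot\prod_{i=3}^{n-1}\zeta_{2}(x_{1},x_{i})=\prod_{i=2}^{n-1}\zeta_{3}(x_{1},x_{i},x_{i+1}),
\]
which holds unconditionally. One then divides, using the hypothesis that the denominator is nonzero. I expect no real obstacle beyond correct index bookkeeping at the boundary terms $i=2$ and $i=n-1$. As a check, for $n=4$ the formula reads $\zeta_{3}(x_{1},x_{2},x_{3})\zeta_{3}(x_{1},x_{3},x_{4})/\zeta_{2}(x_{1},x_{3})$, consistent with the computation above.
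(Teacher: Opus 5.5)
Your proof is correct and follows essentially the same route as the paper: write $P_{x}=|v_{x}\rangle\langle v_{x}|$, express each $\zeta_{k}$ as a cyclic product of inner products, and pair $\langle v_{x_{1}}|v_{x_{i}}\rangle$ with its conjugate for $i=3,\dots,n-1$ to extract the denominator, leaving exactly $\zeta_{n}$. Your remark that the multiplied-out identity holds unconditionally, with division justified only by the nonzero-denominator hypothesis, makes explicit a point the paper leaves implicit.
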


\begin{rem}
In most cases, $\zeta_{2}(x_{1},x_{2})>0$ holds for almost all $(x_{1},x_{2})\in\Manifold^{2}$,
and hence $\zeta_{n}(x_{1},...,x_{n})$ can be calculated by (\ref{eq:zetan})
for almost all $(x_{1},...,x_{n})\in\Manifold^{n}$.
\end{rem}

\begin{proof}
For each $x\in\Manifold$, let $v_{x}\in\cH_{\alpha}$ be such that
$P_{x}=|v_{x}\rangle\langle v_{x}|$. Let $x_{1},...,x_{n}\in\Manifold$%
, and $z_{kl}:=\langle v_{x_{k}}|v_{x_{l}}\rangle$, $k,l=1,...,n$.
We see 
\[
\zeta_{n}(x_{1},...,x_{n})=\prod_{i=1}^{n}z_{i,i+1},\qquad z_{n,n+1}:=z_{n,1},\qquad\text{especially }\zeta_{2}(x_{1},x_{2})=|z_{12}|^{2}.
\]
Thus we have
\begin{align*}
\prod_{i=2}^{n-1}\zeta_{3}(x_{1},x_{i},x_{i+1}) & =\prod_{i=2}^{n-1}z_{1,i}z_{i,i+1}z_{i+1,1}\\
 & =\left(\prod_{i=3}^{n-1}\ol{z_{1,i}}\right)\left(\prod_{i=3}^{n-1}z_{1,i}\right)\left(\prod_{i=1}^{n}z_{i,i+1}\right)\\
 & =\left(\prod_{i=3}^{n-1}\zeta_{2}(x_{1},x_{i})\right)\zeta_{n}(x_{1},...,x_{n}).
\end{align*}
Thus (\ref{eq:zetan}) follows.%
\end{proof}

\subsubsection{K\"{a}hler manifold and holonomy}

\global\long\def\projspace{\scP}%

In many practical cases, $\Manifold$ is a differential manifold,
and further is a K\"{a}hler manifold.\footnote{Usually, the term ``K\"{a}hler manifold'' refers to a manifold which
is both complex and Riemannian, and further those two structures are
``compatible''. Although Woodhouse \cite[p.93]{Woo92} uses the
term to refer to a compatibly complex and \emph{pseudo}-Riemannian
manifold, in this paper the term is used in the usual sense.} If $\cH_{\alpha}\cong\C^{d}$, $d\in\N$, the space $\AProj(\cH_{\alpha})$
can be viewed as the%
{} projective space $\bP^{d-1}\C$, and especially as a complex $(d-1)$-dimensional
compact K\"{a}hler manifold. Even if $\dim\cH_{\alpha}=\infty$, $\AProj(\cH_{\alpha})$
can be viewed a kind of infinite-dimensional K\"{a}hler manifold.
If $\Manifold$ is a (finite-dimensional) K\"{a}hler manifold, the
map $\Manifold\to\AProj(\cH_{\alpha})$, $x\mapsto P_{x}$ is assumed
to be a (regular) embedding, so that $\Manifold$ can be identified
with the submanifold $P_{\Manifold}:=\{P_{x}|x\in\Manifold\}$ of
$\AProj(\cH_{\alpha})$ as K\"{a}hler manifolds.

If $\SYM{\projspace_{\alpha}}{Palp}:=\AProj(\cH_{\alpha})$ is seen
as a possibly infinite-dimensional K\"{a}hler manifold, the tangent
space $T_{P}\projspace_{\alpha}$ of $\projspace_{\alpha}$ at $P\in\projspace_{\alpha}$
is canonically identified with the linear subspace
\[
\SYM{\cT_{P}}{TP}:=\{|u\rangle\langle v|:v\in\ran(P),\ u\in\ker(P)\}
\]
of $\Bdd(\cH_{\alpha})$%
, which is given an inner product $\langle X|Y\rangle=\Tr X^{*}Y$,
$X,Y\in\cT_{P}$, together with the Riemannian metric (real inner
product) $(\cdot|\cdot):=\Re\langle\cdot|\cdot\rangle$ and the symplectic
form $\omega(\cdot,\cdot):=\Im\langle\cdot|\cdot\rangle$. (Precisely,
if $P,P'\in\projspace_{\alpha}$ and $P\neq P'$, the intersection
of two tangent spaces $T_{P}\projspace_{\alpha}\cap T_{P'}\projspace_{\alpha}$
should be empty, whereas $\cT_{P}\cap\cT_{P'}=\{0\}$. Thus the above
$|u\rangle\langle v|$ should be replaced by $(P,|u\rangle\langle v|)$.)

Let $\SYM{\bbS(\cH_{\alpha})}{S(H)}:=\{v\in\cH_{\alpha}:\|v\|=1\}$.
Let $\curve:[0,1]\to\projspace_{\alpha}$ (resp.~$c:[0,1]\to\bbS(\cH_{\alpha})$)
be a piecewise smooth curve on $\projspace_{\alpha}$ (resp.~$\bbS(\cH_{\alpha})$).%
{} The curve $c$ is called a \termi{horizontal lift} of $\curve$
(or a parallel transport along $\curve$) if $c(t)\in\ran(\curve(t))$
and $\langle c(t)|\dot{c}(t)\rangle=0$ for $t\in[0,1]$. If $\curve$
is closed, i.e., $\curve(0)=\curve(1)$, and $c$ is a horizontal
lift of $\curve$, $\SYM{\hol(\curve)}{hol}:=\langle c(0)|c(1)\rangle\in U(1)$
is called the holonomy along $\curve$. 

Let $u,v\in\bbS(\cH_{\alpha})$, $\langle u|v\rangle=0$. Then the
curve $c(t):=v\cos t+u\sin t$ ($t\in\R$) is a horizontal lift of
$\curve(t):=|c(t)\rangle\langle c(t)|$. In this case, let us call
$\curve(t)$ a \termi{great circle} in $\projspace_{\alpha}$. Any
great circle is a geodesic line in $\projspace_{\alpha}$. For $P\in\projspace_{\alpha}$
and $X=|u\rangle\langle v|\in\cT_{P}$ ($v\in\ran(P),\ u\in\ker(P)$,
$\|u\|=\|v\|=1$), this great circle is expressed by
\[
\curve(t)=U(t)E_{v}U(-t),\qquad E_{v}:=|v\rangle\langle v|,\quad U(t):=\exp(t(X-X^{*})).
\]

If $\Manifold$ is identified with $P_{\Manifold}:=\{P_{x}|x\in\Manifold\}$,
then $\Manifold$ becomes a K\"{a}hler submanifold of $\projspace_{\alpha}$.
For a piecewise smooth closed curve $\curve:[0,1]\to\Manifold$, $\hol(\curve)$
is defined to be the holonomy along the closed curve on $\projspace_{\alpha}$. 

Next, let us examine the empirical significance of the above geometric
notion of holonomy, considering a conceptual double-slit experiment.

\paragraph{Interference.}

Assume that the (experimental) operation of a half mirror is simply
expressed by $L_{1/2}:=2^{-1/2}\bbOne\in\fA_{\alpha}$. Let $U,V\in\fA_{\alpha}$
satisfy $\|U\|,\|V\|\le1$, and interpret them as two operations.
Then the probability
\[
\Prob(B|A)=\frac{\Tr(AB)^{*}AB}{\Tr A^{*}A},\qquad A:=\POVM(S_{1}),\quad B:=L_{1/2}\left(U+V\right)L_{1/2}\POVM(S_{2})=\frac{1}{2}\left(U+V\right)\POVM(S_{2})
\]
can be understood to be a description of the interference between
the two operations $U$ and $V$. %

Let $a_{i},b_{i}\in\Manifold$, $i=1,...,n$, and assume $a_{1}=b_{1}$
and $a_{n}=b_{n}$. Roughly speaking, I wish to consider the experiment
of the interference between the two paths $a_{1}\to\cdots\to a_{n}$
and $b_{1}\to\cdots\to b_{n}$. Note that the K\"{a}hler manifold
$\Manifold$ is given a canonical metric $d$, determined by its Riemannian
metric. For $x\in\Manifold$ and $\epsilon>0$, let $\SYM{B_{\epsilon}(x)}{Bep()}\subset\Manifold$
denote the $\epsilon$-ball in $\Manifold$ centered at $x$, w.r.t.~the
metric $d$. For simplicity, assume that the volume $\mu(B_{\epsilon}(x))$
of the ball $B_{\epsilon}(x)$ depends on $\epsilon$ only, and $\mu(B_{\epsilon}(x))\approx C\epsilon^{N}$
for $\epsilon\approx0$, $N:=\dim\Manifold$, $C>0$. Let $S_{i}:=B_{\epsilon}(a_{i}),\ T_{i}:=B_{\epsilon}(b_{i})$,
$i=1,...,n$ (and so $S_{1}=T_{1}$, $S_{n}=T_{n}$). The interference
between the two paths will be described by the following probability:
\begin{align}
 & \Prob(B|A):=\frac{\Tr(AB)^{*}(AB)}{\Tr A^{*}A},\qquad A:=\POVM(S_{1}),\quad B:=\frac{1}{2}\left(\POVM(S_{2})\cdots\POVM(S_{n})+\POVM(T_{2})\cdots\POVM(T_{n})\right).\label{eq:P(M(S)+M(T))}
\end{align}
We see
\begin{align*}
\Prob(B|A) & =\frac{1}{2}C_{1}^{-1}\Tr\left(X+Y\right)^{*}\left(X+Y\right),\qquad X:=\POVM(S_{1})\cdots\POVM(S_{n}),\ Y:=\POVM(T_{1})\cdots\POVM(T_{n})\\
 & =\frac{1}{2}C_{1}^{-1}\left[\Tr X^{*}X+\Tr Y^{*}Y+2\Re\Tr X^{*}Y\right]\\
 & =\frac{1}{2}\left[\Prob(S_{2},...,S_{n}|S_{1})+\Prob(T_{2},...,T_{n}|T_{1})+2C_{1}^{-1}\Re\Tr X^{*}Y\right].
\end{align*}
and
\[
\Tr X^{*}Y=\int_{S_{1}}\d x_{1}\cdots\int_{S_{n}}\d x_{n}\int_{T_{1}}\d y_{1}\cdots\int_{T_{n}}\d y_{n}\ \zeta_{2n}(x_{n},...,x_{1},y_{1},...,y_{n}),
\]
but when $\epsilon\approx0$, this is approximated by $\zeta_{2n}$
as
\[
\Tr X^{*}Y\approx\left(C\epsilon^{N}\right)^{2n}\zeta_{2n}(a_{n},...,a_{1},b_{1},...,b_{n}).
\]
Alternatively, we have the following more direct approximation by
the 2-path Sorkin density function $\rho_{2}^{(n)}$:
\[
\Prob(B|A)=\frac{1}{2}C_{1}^{-1}\Tr\left(X+Y\right)^{*}\left(X+Y\right)\approx\frac{1}{2}\left(C\epsilon^{N}\right)^{2n}C_{1}^{-1}\rho_{2}^{(n)}((a_{1},...,a_{n}),(b_{1},...,b_{n}))
\]

\paragraph{Interference between two continuous paths.}

Next, consider a more idealized and simplified interference experiment.
Let $\curve:[0,1]\to\Manifold$ be a piecewise smooth curve. Let
\[
\SYM{A(\curve,n)}{A(ga,L)}:=P_{a_{0}}\cdots P_{a_{n}},\qquad a_{i}:=\curve(i/n),\quad i=0,...,n,\qquad n\in\N,
\]
and
\begin{equation}
\SYM{A(\curve)}{A(gam)}:=\lim_{n\to\infty}A(\curve,n).\label{eq:def:A(C)}
\end{equation}
{} %
We see $A(\curve)A(\curve)^{*}=P_{\curve(0)}$ and $A(\curve)^{*}A(\curve)=P_{\curve(1)}$.
If $\curve_{1},\curve_{2}:[0,1]\to\Manifold$ are two piecewise smooth
curves such that $\curve_{1}(0)=\curve_{2}(0)$, $\curve_{1}(1)=\curve_{2}(1)$.
Then $A(\curve_{1})=e^{\im t}A(\curve_{2})$ for some $t\in\R$. %
{} %
When $P_{\curve(0)}P_{\curve(1)}\neq0$ , let
\[
T_{\curve}:=\frac{P_{\curve(0)}P_{\curve(1)}}{\sqrt{\Tr P_{\curve(0)}P_{\curve(1)}}},
\]
then we see $A(\curve,L)=zT_{\curve}$ for some $z\in\C,\ |z|\le1$,
and $A(\curve)=zT_{\curve}$ for some $z\in U(1)$, i.e., $z\in\C,\ |z|=1.$

If $\curve(0)=\curve(1)$, we find
\begin{equation}
A(\curve)=\hol(\curve)^{-1}P_{\curve(0)}.\label{eq:A(C)=00003Dhol}
\end{equation}
Let us check this when $\curve$ is a small geodesic triangle in $\projspace_{\alpha}:=\AProj(\cH_{\alpha})$.
(Although a geodesic line on $\Manifold$ is not a geodesic line on
$\projspace_{\alpha}$ in general (and vice versa), any \emph{small}
geodesic line segment on $\Manifold$ will be well approximated by
a geodesic line segment on $\projspace_{\alpha}$. Furthermore, recall
that when the holonomy group is $U(1)$, the full information of the
holonomy $\hol(\cdot)$ can be reconstructed from the ``infinitesimal
holonomy'', that is, $\hol(\curve)$ for the infinitesimal loops
$\curve$, in the following sense. When the holonomy group is $U(1)$,
we can define the holonomy $\hol(\curve)$ of the oriented but non-parameterized
loop $\curve$ (i.e., $\curve$ is an oriented submanifold of $\Manifold$
which is diffeomorphic to $S^{1}$)%
, and furthermore, if a bounded surface $D\subset\Manifold$ is decomposed
as $D=D_{1}\cup\cdots\cup D_{n}$, then we have $\hol(\di D)=\prod_{i=1}^{n}\hol(\di D_{i})$,
where $\di$ denotes the boundary.)

Let $u,v\in\bbS(\cH_{\alpha})$, $\langle u|v\rangle=0$, and consider
the segment of a great circle $\curve(t):=E_{t}:=|c(t)\rangle\langle c(t)|$,
$c(t):=v\cos\epsilon t+u\sin\epsilon t$ ($t\in[0,1]$), where $\epsilon>0$
is a small constant.%

Let $t_{i}:=i/n$, $i=0,...,n$, then we have
\[
A(\curve,n)=E_{t_{0}}\cdots E_{t_{n}}=\left(\prod_{i=0}^{n-1}\cos\epsilon(t_{i}-t_{i+1})\right)|c(0)\rangle\langle c(1)|=\left(\cos\frac{\epsilon}{n}\right)^{n}|c(0)\rangle\langle c(1)|,
\]
and hence
\begin{equation}
A(\curve)=\lim_{n\to\infty}A(\curve,n)=|c(0)\rangle\langle c(1)|=\frac{E_{0}E_{1}}{\langle c(0)|c(1)\rangle}=\frac{E_{0}E_{1}}{\cos\epsilon}.\label{eq:A(C)=00003DE0E1}
\end{equation}

Next, let $\curve$ be a small geodesic triangle in $\projspace_{\alpha}$.
That is, assume that the curve $\curve$ is the concatenation of three
curves $\curve_{1},\curve_{2},\curve_{3}$, written as $\curve=\curve_{1}\bullet\curve_{2}\bullet\curve_{3}$,
and that each $\curve_{j}:[0,1]\to\projspace_{\alpha}$ ($j=1,2,3$)
is a small segment of a great circle. Of course, it is required that
$\curve_{j}(1)=\curve_{j+1}(0)$ for $j=1,2,3$ ($\curve_{4}:=\curve_{1}$).
Let
\[
\curve_{j}(t):=|c_{j}(t)\rangle\langle c_{j}(t)|,\quad c_{j}(t):=v_{j}\cos\epsilon_{j}t+u_{j}\sin\epsilon_{j}t,\quad v_{j},u_{j}\in\bbS(\cH_{\alpha}),\ v_{j}\perp u_{j}
\]
for $t\in[0,1],\ j=1,2,3.$ Furthermore assume $c_{j}(1)=c_{j+1}(0)$
for $j=1,2$, so that $c:=c_{1}\bullet c_{2}\bullet c_{3}$ is a horizontal
lift of $\curve$. Since $\langle c_{j}(0)|c_{j}(1)\rangle=\cos\epsilon_{j}>0$
for $j=1,2,3$, and $\left|\langle v_{1}|c_{3}(1)\rangle\right|=1$,
we have
\[
c_{3}(1)=\frac{|\langle v_{3}|v_{1}\rangle|}{\langle v_{3}|v_{1}\rangle}v_{1}=\frac{\cos\epsilon_{3}}{\langle v_{3}|v_{1}\rangle}v_{1},
\]
and hence $\hol(\curve)=(\cos\epsilon_{3})/\langle v_{3}|v_{1}\rangle$.

Let $F_{j}:=\curve_{j}(0)$, then from (\ref{eq:A(C)=00003DE0E1}),
we obtain (\ref{eq:A(C)=00003Dhol}) in the following form:
\[
A(\curve)=\frac{F_{1}F_{2}F_{3}F_{1}}{\cos\epsilon_{1}\cos\epsilon_{2}\cos\epsilon_{3}}=\frac{\langle v_{1}|v_{2}\rangle\langle v_{2}|v_{3}\rangle\langle v_{3}|v_{1}\rangle}{\cos\epsilon_{1}\cos\epsilon_{2}\cos\epsilon_{3}}F_{1}=\frac{\langle v_{3}|v_{1}\rangle}{\cos\epsilon_{3}}F_{1}=\hol(\curve)^{-1}F_{1}.
\]

For any curve $\curve:[0,1]\to\Manifold$, let $\curve^{-1}(t):=\curve(1-t)$.
Next consider the two curves $\curve_{1},\curve_{2}:[0,1]\to\Manifold$
satisfying $\curve_{1}(0)=\curve_{2}(0)$, $\curve_{1}(1)=\curve_{2}(1)$,
and the concatenation $\curve_{1}\bullet\curve_{2}^{-1}$. We find
that
\[
\Tr\left[A(\curve_{1})^{*}A(\curve_{2})\right]=\hol(\curve_{1}\bullet\curve_{2}^{-1}).
\]

The simplest way to see the interference between the two operations
$A(\curve_{1})$ and $A(\curve_{2})$ will be to consider the following
probability:
\[
\Prob(B|A):=\frac{\Tr(AB)^{*}AB}{\Tr A^{*}A},\quad A:=P_{a_{0}},\quad B:=\frac{1}{2}\left(A(\curve_{1})+A(\curve_{2})\right).
\]
But in this case simply we have
\begin{align*}
\Prob(B|A) & =\Tr B^{*}B=\frac{1}{4}\tilde{\rho}_{2}(\curve_{1},\curve_{2}),
\end{align*}
where $\tilde{\rho}_{2}$ is the \termi{2-path Sorkin density function for piecewise smooth paths}
defined by
\begin{align*}
\SYM{\tilde{\rho}_{2}(\curve_{1},\curve_{2})}{rho2}: & =\Tr\left(A(\curve_{1})+A(\curve_{2})\right)\left(A(\curve_{1})+A(\curve_{2})\right)^{*}\\
 & =2+2\Re\Tr A(\curve_{1})^{*}A(\curve_{2})\\
 & =2+2\Re\,\hol(\curve_{1}\bullet\curve_{2}^{-1}).
\end{align*}
Similarly, the 1-path Sorkin density function $\tilde{\rho}_{1}$
for piecewise smooth paths will be defined by $\SYM{\tilde{\rho}_{1}(\curve)}{rho1}:=\Tr A(\curve)A(\curve)^{*},$
but this is trivial: $\tilde{\rho}_{1}(\curve)\equiv1$.%

\paragraph{Path integral representation.}

If $\Manifold$ is a simply connected K\"{a}hler manifold, the 2-path
Sorkin density function $\tilde{\rho}_{2}$ is considered to determine
\emph{all} of empirical/observable laws; If a K\"{a}hler manifold
(or more generally a symplectic manifold) $\Manifold$ is simply connected,
the prequantization bundle over $\Manifold$ and the corresponding
the holonomy $\hol(\cdot)$ is unique (if it exists) \cite[p.161]{Woo92}.
Furthermore, a K\"{a}hler manifold $\Manifold$ has a canonical polarization,
that is, the holomorphic polarization \cite[p.93]{Woo92}, and hence
$\Manifold$ has a canonical quantization \cite[Sec.9.2]{Woo92}.
To summarize, the K\"{a}hler structure of $\Manifold$ uniquely determines
the ``prequantization holonomy'' $\hol(\cdot)$, and further $\hol(\cdot)$
canonically determines a quantization, where all physical laws, including
the probability $\Prob(S_{2},...,S_{N}|S_{1})$, are described.

Let $\Prob_{\hol}(S_{2},...,S_{N}|S_{1})$ denote the probability
determined by the holonomy $\hol(\cdot)$. I suggest to express $\Prob_{\hol}(S_{2},...,S_{N}|S_{1})$
by a formal ``path integral'': Consider the following formal expression,
analogous to Proposition \ref{prop:SorkinDensity}:
\begin{align}
\Prob_{\hol}(S_{2},...,S_{N}|S_{1}) & =\frac{1}{2C_{1}}\int_{\Paths(\vec{S})}\cD\curve_{1}\int_{\Paths(\vec{S})}\cD\curve_{2}\left(\rho_{2}(\curve_{1},\curve_{2})-\rho_{1}(\curve_{1})-\rho_{1}(\curve_{2})\right)\\
 & =C_{1}^{-1}\int_{\Paths(\vec{S})}\cD\curve_{1}\int_{\Paths(\vec{S})}\cD\curve_{2}\ \Re\,\hol(\curve_{1}\bullet\curve_{2}^{-1}),\label{eq:P(SS)=00003DpathInt}
\end{align}
where we assume $\curve_{1}(0)=\curve_{2}(0)$, $\curve_{1}(1)=\curve_{2}(1)$,
and 
\[
\Paths(\vec{S}):=\{\curve\in C([0,1],\Manifold)|\,\curve(j\epsilon)\in S_{j},\ j=1,...,N\},\qquad\epsilon:=\frac{1}{N+1},\qquad C_{1}:=\Tr\POVM(S_{1}).
\]

Although the formal path-integral expression (\ref{eq:P(SS)=00003DpathInt})
itself can be useful, it is more desirable for (\ref{eq:P(SS)=00003DpathInt})
to be given a rigorous measure-theoretical ground. Generally, rigorous
justifications of path integrals used in quantum physics are difficult.
However it is well-known that some rigorous formulations of path integrals
are possible in terms of \emph{Brownian motions}, mainly with the
Feynman\textendash Kac formula and its generalizations. Recall that
we can define the Brownian motion on a manifold $\Manifold$ only
if $\Manifold$ is furnished with a Riemannian metric. Since a classical-mechanical
phase space is represented as a symplectic manifold, we cannot define
the Brownian motion on a phase space in general. However sometimes
a phase space has a canonical K{\"a}hler structure, which contains
a Riemannian structure.

For the rigorous considerations for the path integrals on phase spaces,
see Daubechies and Klauder \cite{DK85}, Charles \cite{Cha1999},
and Yamashita \cite{Yam11,Yam18,Yam22a,Yam22b}. See G{\"u}neysu
\cite{Gun10,Gun17} for the generalized Feynman\textendash Kac formula
on a vector bundle over a Riemannian manifold.

\subsubsection{Time evolution}

\label{subsec:Time-evolution}

\global\long\def\K{{\bf K}}%

Let $\{U(t)|t\in\R\}$ be a one-parameter unitary group in $\fA_{\alpha}=\Bdd(\cH_{\alpha})$,
which is understood as the time evolution. Let
\[
\POVM_{t}(S):=U(t)\POVM(S)U(-t),\qquad t\in\R,\ S\in\Borel(\Manifold),
\]
and
\[
A:=\POVM_{t_{1}}(S_{1}),\quad B:=\POVM_{t_{2}}(S_{2})\cdots\POVM_{t_{n}}(S_{n}),\quad C_{1}:=\Tr\POVM_{t_{1}}(S_{1})^{2}.
\]
Consider the prior conditional probability
\begin{equation}
\Prob(S_{2},...,S_{n};t_{2},...,t_{n}|S_{1};t_{1}):=\Prob(B|A)=\frac{\Tr(AB)^{*}(AB)}{\Tr A^{*}A}=C_{1}^{-1}\Tr(AB)^{*}(AB),\label{eq:P(SSTT)-POVM}
\end{equation}
where $S_{i}\in\Borel_{\cpt}(\Manifold)$, $t_{i}\in\R$, $x_{i}\in\Manifold$,
$i=1,...,n$, and usually either $t_{1}<\cdots<t_{n}$ or $t_{1}>\cdots>t_{n}$
is assumed.

For each $x\in\Manifold$, choose $v_{x}\in\cH_{\alpha}$ such that
$P_{x}=|v_{x}\rangle\langle v_{x}|$. %
Then we find
\[
\Prob(S_{2},...,S_{n};t_{2},...,t_{n}|S_{1};t_{1})=C_{1}^{-1}\int_{S_{1}}\d\mu(x_{1})\cdots\int_{S_{N}}\d\mu(x_{N})\ \prod_{i=1}^{N}\K(x_{i},x_{i+1};t_{i},t_{i+1}),
\]
where 
\[
\K(x,x';t,t'):=\langle v_{x}|U(t'-t)v_{x'}\rangle,\quad\text{and}\quad x_{N+1}:=x_{1},\ t_{N+1}:=t_{1}.
\]
This is analogous to the propagator $K$ for a spinless particle appeared
in (\ref{eq:TrEE=00003DintintK}), and so I again call $\K$ a \termi{propagator}.
Similarly to the propagator $K$ in (\ref{eq:TrEE=00003DintintK}),
$\K$ has some ``empirically redundant information'', namely, the
choice of $v_{x}$ for each $x\in\Manifold$. %

Similarly to (\ref{eq:def:kappa}), let
\begin{equation}
\SYM{\kappa(\vec{x},\vec{t})}{kappa}\equiv\kappa_{n}(\vec{x},\vec{t}):=\prod_{i=1}^{n}\K(x_{i},t_{i};x_{i+1},t_{i+1}),\qquad\vec{x}:=(x_{1},...,x_{n}),\quad\vec{t}:=(t_{1},...,t_{n}).\label{eq:def:kappa-POVM}
\end{equation}
Then we see
\[
\kappa_{n}(\vec{x},\vec{t})=\Tr P_{x_{1},t_{1}}\cdots P_{x_{n},t_{n}},\qquad P_{x,t}:=U(t)P_{x}U(-t),
\]
and hence $\kappa_{n}$ is independent of the choice of $\{v_{x}\}_{x}$.
So it is considered to be empirically less redundant. Hence an analysis
similar to that of Subsec.~\ref{subsec:A-spinless-particle} will
be possible. However, unlike Subsec.~\ref{subsec:A-spinless-particle},
the K\"{a}hler (and so symplectic) manifold $\Manifold$ may be viewed
as a (classical-mechanical) phase space, and the function $\kappa_{n}$
can be related to a classical Hamiltonian $H:\Manifold\to\R$. For
example, we may expect that $\kappa_{n}$ (and also the prior conditional
probability (\ref{eq:P(SSTT)-POVM})) can be expressed as a path integral,
whose integrand contains the classical Hamiltonian $H$. See the literature
on geometric quantization (e.g., \cite{Woo92}), and also \cite{DK85,Cha1999,Yam11,Yam18,Yam22a,Yam22b}.

\section{QFTM and QFTCS without states}

\label{sec:QFTM-and-QFTCS}

If Working Hypothesis \ref{hyp:causal-typeIfactor} (or \ref{hyp:causal-atomic})
is assumed also in QFTM and QFTCS, it may be expected that a considerable
part of the contents of Section \ref{sec:NRQM-withoutStates} can
be also applied to them. However, if we wish to reject the use of
the concept of state (especially of vacuum state), we will need an
alternative to the Wightman formalism \cite{SW64} of QFTM, and furthermore
it should be generalizable to QFTCS. Hence this might be an extremely
hard task.

Since currently we have no example of rigorous interacting QFT in
the Minkowski spacetime $\R^{1,3}$, or in $(3+1)$-dimensional curved
spacetimes, we are obliged to guess the possible formalism of general
theory mainly from the examples of the free fields, as well as the
low dimensional nontrivial models. In this section, only the free
fields are discussed; Actually I shall examine only the general structure
of the CCR and CAR algebras, regardless of the spacetime structure.
A brief outline of the rigorous definition of the massive free scalar
(Klein\textendash Gordon) QFTCS is presented in \cite{Yam2025}, and
that of the Dirac/Majorana QFTCS is presented in \cite{Yam2026b}.
Those definitions imply that the CCR (resp.~CAR) algebra is not only
a convenient algebraic tool for the free scalar (resp.~Dirac/Majorana)
QFTCS; Rather, a free scalar (resp.~Dirac/Majorana) QFTCS is defined
as a CCR (resp.~CAR) algebra itself, in which the spacetime structure
is encoded.

\subsection{Free scalar fields}

Consider the common algebraic structure of free scalar fields in a
Minkowski spacetime $\R^{1,d}$ ($d\ge1$) or a curved spacetime (i.e.,
a globally hyperbolic Lorentzian manifold), or even in a non-relativistic
spacetime. This structure is nothing but the \emph{CCR algebra}.

This subsection is based on Yamashita \cite{Yam2025}. For the rigorous
algebraic formalism of free fields in curved spacetime, see \cite{BGP2007,BF2009,HW2015,KM2015,BD2015}.

Let $V$ be a pre-symplectic vector space, that is, a real linear
space equipped with an %
antisymmetric bilinear form $\sigma$. (A pre-symplectic vector space
$V$ is called symplectic if $\sigma$ is nondegenerate.) 
\begin{defn}
(e.g., \cite[Sec.5.2.2.2]{BR97})\label{def:CCR-C} Let $\fA$ be
a $C^{*}$-algebra generated by nonzero elements $\weyl(v),$ $v\in V$,
satisfying
\begin{enumerate}
\item $\weyl(-v)=\weyl(v)^{*},$
\item $\weyl(v)\weyl(u)=e^{-\im\sigma(v,u)/2}\weyl(v+u)$ 
\item $\weyl(v)=\bbOne$ if $\sigma(w,v)=0$ for all $w\in V$.
\end{enumerate}
for all $v,u\in V.$ It is shown that such $\fA$ is unique up to
{*}-isomorphism. So write $\fA=\SYM{\CCR(V)}{CCR()}$ and call it
the \termi{CCR $C^*$-algebra} over $V$.
\end{defn}

Let $N:=\{v\in V|\sigma(u,v)=0,\ \forall u\in V\}$. Then $V/N$ naturally
becomes a symplectic vector space. Thus it seems to suffice to consider
only the cases where $V$ is a symplectic vector space. However, for
a free scalar field, I would like to set $V=C_{\cpt}^{\infty}(\STime,\R)$,
the set of compactly supported smooth real functions on the spacetime
manifold $\STime$. In this case, the antisymmetric form $\sigma$
is defined by the causal propagator, and $\sigma$ is degenerate.
Note that the structure of the spacetime $\STime$ is encoded in the
space $C_{\cpt}^{\infty}(\STime,\R)$, not in the quotient space $C_{\cpt}^{\infty}(\STime,\R)/N$.
Thus if we set $V=C_{\cpt}^{\infty}(\STime,\R)$, it is also encoded
in the CCR algebra $\CCR(V)$.

Let $V$ be a pre-symplectic vector space, and $W\subset V$ a pre-symplectic
subspace of $V$. Then we see $\CCR(W)\subset\CCR(V)$, or more precisely,
$\CCR(W)$ is canonically embedded into $\CCR(V)$. Notice the following
fact:
\begin{prop}[{See \cite[ Proposition 5.2.9.]{BR97}}]
\label{prop:CCR(V)=00003DCCR(W)} $\CCR(V)=\CCR(W)$ if and only
if $V=W.$
\end{prop}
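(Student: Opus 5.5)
The direction ``$V=W\Rightarrow\CCR(V)=\CCR(W)$'' is trivial, so the content is: if $W\subsetneq V$, then the canonical image of $\CCR(W)$ in $\CCR(V)$ is a proper $C^{*}$-subalgebra. The plan is to exhibit a Weyl element $\weyl(v)$ with $v\in V\setminus W$ and show it lies at a positive norm distance from $\CCR(W)$. Two cases are immediate from the definition. If $v\in N$, then $\weyl(v)=\bbOne\in\CCR(W)$, so the statement can only hold once we work modulo $N$. I would therefore read the Weyl generators as indexed by $V/N$, which is the setting of \cite[Prop.~5.2.9]{BR97}, and assume $v\notin W+N$.

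The key tool is the family of automorphisms induced by characters. For any group homomorphism $\chi:V\to U(1)$ that is trivial on $N$, the assignment $\weyl(u)\mapsto\chi(u)\weyl(u)$ preserves relations (1)--(3). By the uniqueness of $\CCR(V)$ up to $*$-isomorphism, it therefore extends to a $*$-automorphism $\alpha_{\chi}$ of $\CCR(V)$. I would choose $\chi$ to be trivial on $W+N$ with $\chi(v)=-1$. Such a $\chi$ exists because $(W+N)$ is a subgroup of the abelian group $V$, $v\notin W+N$, and characters of the discrete quotient group $V/(W+N)$ separate points (divisibility of $U(1)$). Then $\alpha_{\chi}$ fixes $\CCR(W)$ pointwise, since it fixes the generators and is continuous. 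On the other hand, $\alpha_{\chi}(\weyl(v))=-\weyl(v)\neq\weyl(v)$.

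Suppose $\weyl(v)\in\CCR(W)$. Then $\weyl(v)=\alpha_{\chi}(\weyl(v))=-\weyl(v)$, which forces $\weyl(v)=0$. This contradicts that $\weyl(v)$ is unitary and hence nonzero. So $\weyl(v)\notin\CCR(W)$, and $\CCR(W)\neq\CCR(V)$. Quantitatively one even gets $\|\weyl(v)-x\|\ge1$ for every $x\in\CCR(W)$, by averaging $x\mapsto\frac12(x+\alpha_{\chi}(x))$.

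The main obstacle is justifying rigorously that $\alpha_{\chi}$ exists on the $C^{*}$-completion and not just on the $*$-algebra spanned by Weyl elements. This is where the uniqueness clause of Definition \ref{def:CCR-C} is used: one needs uniqueness of the $C^{*}$-norm, i.e.\ $\CCR(V)$ is simple and its norm is the minimal regular norm. Since the images $\chi(u)\weyl(u)$ are nonzero and satisfy the same relations, they generate a copy of $\CCR(V)$ isometrically. A second subtle point is the degenerate case: the statement must be read modulo $N$, i.e.\ $V=W+N$. I would state this caveat explicitly rather than claim literal equality of the pre-symplectic spaces.
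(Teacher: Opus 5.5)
Your proof is correct, and your degeneracy caveat is a genuine correction rather than pedantry. By relation (3) of Definition \ref{def:CCR-C}, $\weyl(w+n)=\weyl(w)$ for $n\in N$, so the $C^{*}$-algebra generated by $\{\weyl(w):w\in W\}$ is all of $\CCR(V)$ whenever $W+N=V$. In the paper's own example $V=C_{\cpt}^{\infty}(\STime,\R)$ this is exactly the time-slice property, with $W=C_{\cpt}^{\infty}(\cO,\R)$ for $\cO$ a causally convex neighbourhood of a Cauchy surface. So the literal ``iff $V=W$'' holds only for nondegenerate $\sigma$, which is the setting of \cite{BR97}. The statement you actually prove, $\CCR(W)=\CCR(V)\iff W+N=V$, is the right one.

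There are two small slips. First, for the distance bound you should apply $\frac{1}{2}(\id-\alpha_{\chi})$, which sends $\weyl(v)-x$ to $\weyl(v)$. The map $\frac{1}{2}(\id+\alpha_{\chi})$ you wrote sends it to $-x$ and only gives $\|\weyl(v)-x\|\ge\|x\|$. Second, point separation only yields $\chi(v)\neq1$. That already suffices for your argument, and $\chi(v)=-1$ is most simply obtained as $\chi=e^{\im\pi\ell}$, with $\ell$ a real linear functional vanishing on $W+N$ and $\ell(v)=1$.

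The paper gives no argument of its own, only the citation. The standard proof of this fact goes through the tracial state $\tau$ on $\CCR(V)$, defined by $\tau(\weyl(u))=1$ for $u\in N$ and $0$ otherwise. For $v\notin W+N$ one has $\tau(\weyl(v)^{*}x)=0$ on the span of $\{\weyl(w):w\in W\}$, hence on its closure, while $\tau(\weyl(v)^{*}\weyl(v))=1$; so $\|\weyl(v)-x\|\ge1$. That route requires constructing $\tau$ (e.g.\ from the regular representation on $\ell^{2}(V/N)$). In return it yields more: continuity of all coefficient functionals $x\mapsto\tau(\weyl(u)^{*}x)$ and linear independence of the Weyl elements.

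Your route needs nothing beyond the uniqueness clause, applied to the rescaled generators $\chi(u)\weyl(u)$. This requires taking uniqueness in its usual form (an isomorphism carrying generators to generators), and then the ``main obstacle'' you name is no obstacle. The two arguments are close relatives: averaging $\alpha_{\chi}$ over the compact dual group of $V/N$ gives exactly $\tau(\cdot)\bbOne$. So you use a single character where the trace argument uses all of them.

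Allowing arbitrary, generally outer, characters instead of the inner automorphisms $\Ad\weyl(g)$ is essential for your route. In infinite dimensions a proper subspace $W$ can satisfy $\{g\in V:\sigma(g,W)=0\}=\{0\}$ even when $\sigma$ is nondegenerate; take $W=\ker\lambda$ with $\lambda$ a linear functional not of the form $\sigma(g,\cdot)$. Then no nontrivial $\Ad\weyl(g)$ fixes $\CCR(W)$.
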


Note that if $V$ is finite-dimensional and $\sigma$ is nondegenerate,
the irreducible representation $(\pi,\cH)$ of $\CCR(V)$ is unique
up to unitary equivalence. Thus the notion of the \termi{CCR $W^{*}$-algebra}
{} $\SYM{\CCRW(V)}{CCRW}$ over $V$ may be defined as a canonical extension
of $\CCR(V)$, determined by $(\pi,\cH)$.%
{} However, since $\pi(\CCR(V))''=\Bdd(\cH)$, the $W^{*}$-algebraic
structure of $\pi(\CCR(V))''$ is always the (unique) $\sigma$-finite
factor of type I$_{\infty}$, independently of the dimension $\dim V\in2\N$,
despite of Proposition \ref{prop:CCR(V)=00003DCCR(W)}.%
{} Hence the CCR $W^{*}$-algebra $\CCRW(V)$ should be understood as
the ordered pair $(\fA,\weyl)$ such that
\begin{enumerate}
\item $\fA$ is a $W^{*}$-algebra (specifically a $\sigma$-finite factor
of type I$_{\infty}$),
\item $\weyl:V\to\fA$ satisfies (1)\textendash (3) in Definition \ref{def:CCR-C},
\item $\{\weyl(v)|v\in V\}$ generates $\fA$.
\end{enumerate}
Note that $\CCRW(V)$ is not defined if $V$ is infinite-dimensional.

Assume $\dim V=\infty$ and further $\dim(V/N)=\infty$. A canonical
causal net of $W^{*}$-algebras (factors of type I) is constructed
from the CCR $C^{*}$-algebra $\CCR(V)$ as follows. Let $\SYM{\Fin(V)}{Fin(V)}$
denote the set of finite-dimensional \emph{symplectic} subspaces of
$V$, that is, $Z\in\Fin(V)$ if and only if $Z$ is a finite-dimensional
subspace of $V$, and $\sigma|_{Z\times Z}$ is a symplectic (i.e.,
nondegenerate antisymmetric) form on $Z$. For each $Z\in\Fin(V)$,
consider the CCR $W^{*}$-algebra $\CCRW(Z)=(\fA_{Z},\weyl|_{Z})$,
where $\fA_{Z}$ is the $W^{*}$-algebra generated by $\CCR(Z)$.
For $Z,Z'\in\Fin(V)$, write $Z\cauindep Z'$ if $\sigma(u,v)=0$
for all $u\in Z$ and $v\in Z'$. We see that if $Z,Z'\in\Fin(V)$
and $Z\subset Z'$, then $\fA_{Z}\subset\fA_{Z'}$. Let $\fA=\fA_{V}$
be the $C^{*}$-algebra generated by $\fA_{Z}$'s, that is, $\fA$
is the completion of $\bigcup_{Z\in\Fin(V)}\fA_{Z}$ with respect
to the norm $\|\cdot\|$.

Now the following is evident (recall Definition \ref{def:causalNet}):
\begin{lem}
If $Z,Z'\in\Fin(V)$ and $Z\cauindep Z'$, then $[\fA_{Z},\fA_{Z'}]=\{0\}$
in $\fA_{V}$.
\end{lem}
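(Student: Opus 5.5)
The plan is to reduce the claim to commutation of Weyl generators and then pass to the generated $W^{*}$-algebras by density and continuity. First I need to make precise what the $W^{*}$-algebras $\fA_{Z}$ and $\fA_{Z'}$ are as subsets of the single $C^{*}$-algebra $\fA_{V}$. This requires that the inclusions $\fA_{Z}\subset\fA_{Z''}$ for $Z\subset Z''$ be compatible. Since $Z\cauindep Z'$ with both symplectic, the sum $Z+Z'$ is direct. It is finite-dimensional, and $\sigma$ restricted to it is the orthogonal sum $\sigma|_{Z}\oplus\sigma|_{Z'}$, hence nondegenerate. So $Z'':=Z+Z'\in\Fin(V)$. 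It therefore suffices to prove $[\fA_{Z},\fA_{Z'}]=\{0\}$ inside $\fA_{Z''}$, and then transport this via the embedding $\fA_{Z''}\subset\fA_{V}$.

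Step 1 (generators). For $u\in Z$ and $v\in Z'$, rule (2) of Definition \ref{def:CCR-C} gives
\[
\weyl(u)\weyl(v)=e^{-\im\sigma(u,v)/2}\weyl(u+v)=\weyl(u+v)=e^{-\im\sigma(v,u)/2}\weyl(v+u)=\weyl(v)\weyl(u),
\]
because $\sigma(u,v)=0=\sigma(v,u)$. So the Weyl generators of $Z$ commute with those of $Z'$.

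Step 2 ($*$-algebras and $C^{*}$-closures). By linearity and rule (1), the $*$-algebra spanned by $\{\weyl(u)\}_{u\in Z}$ commutes with the one spanned by $\{\weyl(v)\}_{v\in Z'}$. By norm continuity of multiplication, $\CCR(Z)$ commutes with $\CCR(Z')$.

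Step 3 ($W^{*}$-closures), which is the main obstacle. Here $\fA_{Z}$ is the $W^{*}$-algebra generated by $\CCR(Z)$ in the irreducible (Schrödinger) representation, and one must check that it sits inside $\fA_{Z''}$ in a way that preserves commutation. I would realize $\fA_{Z''}$ concretely as $\Bdd(\cH_{Z}\otimes\cH_{Z'})$. This is possible because the tensor product of the irreducible representations of $\CCR(Z)$ and $\CCR(Z')$ is an irreducible representation of $\CCR(Z\oplus Z')$, and by Stone--von Neumann uniqueness it is the canonical one. Under this identification $\fA_{Z}=\Bdd(\cH_{Z})\otimes\bbOne$ and $\fA_{Z'}=\bbOne\otimes\Bdd(\cH_{Z'})$, which manifestly commute.

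If one prefers to avoid the tensor picture, there is an alternative. The commutant of $\pi(\CCR(Z'))$ is weakly closed, and by Step 2 it contains $\pi(\CCR(Z))$. By the bicommutant theorem it therefore contains $\pi(\CCR(Z))''=\fA_{Z}$. Taking commutants once more gives that $\fA_{Z'}$ lies in $\fA_{Z}'$, which is the desired $[\fA_{Z},\fA_{Z'}]=\{0\}$.
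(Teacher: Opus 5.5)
Your proposal is correct. It matches what the paper intends: the paper gives no proof and simply calls the lemma evident, and your steps are the routine verification behind that claim. You reduce to $Z+Z'\in\Fin(V)$, use the Weyl relation with $\sigma(u,v)=0$, and pass to weak closures via $\pi(\CCR(Z))\subset\pi(\CCR(Z'))'\Rightarrow\pi(\CCR(Z'))''\subset\pi(\CCR(Z))'$. The commutant argument alone already suffices, so the Stone--von Neumann tensor-product identification is optional; it is really only needed for what the paper tacitly assumes, namely that the inclusions $\fA_{Z}\subset\fA_{Z+Z'}$ are well defined.
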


\begin{lem}
$\{\fA_{Z}|Z\in\Fin(V)\}$ is a causal net indexed by the directed
set $(\Fin(V),\subset)$ with the orthogonality relation $\cauindep$.
\end{lem}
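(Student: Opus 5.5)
We check the axioms of Definition \ref{def:causalNet} one at a time, but first we must verify that $(\Fin(V),\subset)$ is a directed set and that $\cauindep$ is an orthogonality relation on it. Both facts reduce to one construction, and that construction is the main point of the argument.

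\emph{Key construction.} Let $F\subset V$ be a finite-dimensional subspace. We want some $Z\in\Fin(V)$ with $F\subset Z$. Put $R:=F\cap F^{\sigma}$, the radical of $\sigma|_{F}$, and choose a complement $F=F_{0}\oplus R$, so that $F_{0}$ is symplectic. Take a basis $r_{1},\dots,r_{m}$ of $R$. Because $F_{0}$ is a finite-dimensional symplectic subspace, $V=F_{0}\oplus F_{0}^{\sigma}$, and $R\subset F_{0}^{\sigma}$.

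We now need $s_{1},\dots,s_{m}\in F_{0}^{\sigma}$ with $\sigma(r_{i},s_{j})=\delta_{ij}$ and $\sigma(s_{i},s_{j})=0$. The $r_{i}$ are linearly independent modulo $N$: a combination lying in $N$ pairs trivially with $F_{0}$, while pairing with $R$ is automatic, so it must vanish. Hence the functionals $\sigma(r_{i},\cdot)$ are linearly independent on $V$. They also vanish on $F_{0}$, so they remain independent on $F_{0}^{\sigma}$, and duals $s_{j}$ exist there. A Gram--Schmidt-type correction $s_{j}\mapsto s_{j}+\sum_{k}c_{jk}r_{k}$ then makes the $s_{j}$ pairwise $\sigma$-orthogonal. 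With these choices, $Z:=F_{0}\oplus\Span\{r_{i},s_{i}\}$ is symplectic and contains $F$. This step is the main obstacle: it is exactly where we need the radical of $F$ to meet $N$ trivially. If $F\cap N\neq\{0\}$, no symplectic $Z\supset F$ exists, and one must instead work with $F$ modulo $N$. In that case we use the fact that $\weyl(v)=\bbOne$ for $v\in N$ (Definition \ref{def:CCR-C}(3)), so only $Z$ modulo $N$ matters, and we replace $F$ by a complement of $F\cap N$.

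\emph{Directedness and orthogonality.} Directedness follows by applying the construction to $F=Z_{1}+Z_{2}$. For axiom (1) of an orthogonality relation, given $Z$, we use $\dim(V/N)=\infty$: the space $Z^{\sigma}$ has finite codimension, so $\sigma$ restricted to $Z^{\sigma}$ has radical $N$ and $Z^{\sigma}/N$ is nonzero. We can therefore pick $u,w\in Z^{\sigma}$ with $\sigma(u,w)=1$ and take $\Span\{u,w\}$. Axiom (2) is immediate, since a subspace of a space $\sigma$-orthogonal to $Z$ is again orthogonal to $Z$. For axiom (3), given $Z\cauindep Z_{1}$ and $Z\cauindep Z_{2}$, the space $Z_{1}+Z_{2}$ lies in $Z^{\sigma}=:V'$. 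Since $V'$ is itself a pre-symplectic space with $\dim(V'/N)=\infty$, we run the key construction inside $V'$ and obtain a symplectic $\delta\supset Z_{1}+Z_{2}$ contained in $Z^{\sigma}$.

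\emph{Causal net axioms.} It remains to check the three conditions of Definition \ref{def:causalNet}. Isotony was noted above: $Z\subset Z'$ gives $\CCR(Z)\subset\CCR(Z')$. For the weak closures, we observe that for symplectic $Z\subset Z'$ we have $Z'=Z\oplus(Z^{\sigma}\cap Z')$, so the irreducible representation factorizes as a tensor product, and $\fA_{Z}$ embeds normally in $\fA_{Z'}$. A common identity holds since $\weyl(0)=\bbOne$ lies in every $\fA_{Z}$. Locality is the preceding lemma.
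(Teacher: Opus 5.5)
Your checks of axioms (1) and (2) of the orthogonality relation, of isotony via $Z'=Z\oplus(Z^{\sigma}\cap Z')$, of the common unit and of locality are fine. When $N=\{0\}$, your key construction is the standard one and proves the lemma.

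\textbf{The gap is the case $N\neq\{0\}$, and it cannot be closed, because the statement itself fails there.}

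First, your justification that the $r_{i}$ are linearly independent modulo $N$ does not follow. An element of $N$ pairs trivially with everything, so nothing forces a combination of the $r_{i}$ lying in $N$ to vanish. In fact $F\cap N\subset R$, so $R\cap N=F\cap N$, and this can be nonzero, as you concede in the next sentence.

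It is nonzero even for $F=Z_{1}+Z_{2}$ with $Z_{1},Z_{2}\in\Fin(V)$. Pick $e,f$ with $\sigma(e,f)=1$ (possible since $\dim(V/N)=\infty$) and $0\neq n\in N$, and set $Z_{1}=\Span\{e,f\}$, $Z_{2}=\Span\{e+n,f\}$. Both are symplectic, but every subspace containing $Z_{1}+Z_{2}$ contains $n=(e+n)-e$, hence is degenerate. So $Z_{1},Z_{2}$ have no upper bound in $(\Fin(V),\subset)$, i.e.\ this poset is not directed. Axiom (3) fails too: take any $Z\in\Fin(V)$ inside $\{e,f\}^{\sigma}$, which exists by your own argument for axiom (1). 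Then $Z\cauindep Z_{1}$ and $Z\cauindep Z_{2}$, yet no $\delta\in\Fin(V)$ contains both.

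Your fallback, replacing $F$ by a complement of $F\cap N$, does not repair this. Since $\weyl(v+n)=\weyl(v)$ for $n\in N$, it yields a symplectic $Z$ with $\fA_{Z_{1}},\fA_{Z_{2}}\subset\fA_{Z}$, but not with $Z_{1},Z_{2}\subset Z$. So you have shown directedness of a different index set, not of $(\Fin(V),\subset)$.

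To get a true statement for degenerate $\sigma$ (the case the paper has in mind, $V=C_{\cpt}^{\infty}(\STime,\R)$), you must change the index set. Options include:
\begin{itemize}
\item preorder $\Fin(V)$ by $Z_{1}\preceq Z_{2}$ iff $Z_{1}\subset Z_{2}+N$;
\item use $\Fin(V/N)$;
\item restrict to symplectic subspaces of a fixed complement of $N$.
\end{itemize}
For any of these, run your key construction in $V/N$ and lift back to $V$; this gives directedness. For axiom (3), do the same in $Z^{\sigma}/N$, which is symplectic. Isotony survives because $Z_{1}\subset Z_{2}+N$ yields a symplectic injection $p\colon Z_{1}\to Z_{2}$ with $\weyl(v)=\weyl(p(v))$.

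The paper gives no argument beyond calling the lemma evident, so it does not address this point either.
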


Let $\fR$ be an arbitrary $W^{*}$-algebra. Note that there exists
a one-to-one correspondence between the one-parameter unitary groups
$\{U(t)|t\in\R\}$ in $\fR$, and the PVM's $E:\Borel(\R)\to\Proj(\fR)$.
If $\fR\subset\Bdd(\cH)$, there exists a unique (possibly unbounded)
selfadjoint operator $A$ on $\cH$ such that $U(t)=e^{itA}$, and
hence the unique spectral PVM $E(\cdot)$ of $A$, and vice versa.

Assume that the CCR $C^{*}$-algebra $\CCR(V)$ and also the causal
net $\{\fA_{Z}|Z\in\Fin(V)\}$ are given a physical meaning as a free
scalar field, as follows. For each $v\in Z\in\Fin(V)$, let $E_{v}:\Borel(\R)\to\Proj(\fA_{Z})$
be the PVM corresponding to the one-parameter unitary group $\{\weyl(tv)|t\in\R\}$
in $\fA_{Z}$. Then each $E_{v}(S)$ ($S\in\Borel(\R)$) is interpreted
as a yes-no type measurement that the value of the $\R$-valued field
$\phi(v)$ is in $S$.

In this situation, I suggest that a considerable part of the physical/empirical
laws can be described on each ``local algebra'' $\fA_{Z}$, as the
prior conditional probability
\begin{equation}
\Prob_{Z}(B|A):=\frac{\Tr_{Z}(AB)^{*}AB}{\Tr_{Z}A^{*}A},\qquad A:=E_{v_{1}}(S_{1})\cdots E_{v_{k}}(S_{k}),\quad B:=E_{v_{k+1}}(S_{k+1})\cdots E_{v_{n}}(S_{n}),\label{eq:PZ}
\end{equation}
where $2\le k<n$, $v_{i}\in Z$, $S_{i}\in\Borel_{\cpt}(\R)$ ($i=1,...,n$),
and $\Tr_{Z}$ denotes the canonical trace on $\fA_{Z}$. Of course,
this probability is defined only when $0<\Tr_{Z}A^{*}A<\infty$ holds. 

For example, consider the cases where $\dim Z=2$. We find that if
$v_{1},v_{2}\in Z$ are linearly independent, and if both $S_{1}$
and $S_{2}$ are non-null (w.r.t.~the Lebesgue measure on $\R$),
then $A:=E_{v_{1}}(S_{1})E_{v_{2}}(S_{2})$ satisfy $0<\Tr_{Z}A^{*}A<\infty$.

More generally, consider the cases where $\dim Z=2N$, $N\in\N$.
If $L_{1}:=\Span\{v_{1},...,v_{N}\}\subset Z$ is a Lagrangian subspace
of $Z$, i.e., if $\sigma(u,v)=0$ for all $u,v\in L_{1}$ and $\dim L_{1}=\frac{1}{2}\dim Z\,(=N)$,
then we see that $E_{v_{1}}(S_{1}),\cdots,E_{v_{N}}(S_{N})$ are pairwise
commutative. Let $L_{2}:=\Span\{u_{1},...,u_{N}\}\subset Z$ be another
Lagrangian subspace of $Z$ such that $L_{1}\cap L_{2}=\{0\}$. Then
we find that
\[
A:=E_{v_{1}}(S_{1})\cdots E_{v_{N}}(S_{N})E_{u_{1}}(T_{1})\cdots E_{u_{N}}(T_{N}),\qquad S_{i},T_{i}\in\Borel_{\cpt}(\R),\ \text{non-null},\ i=1,...,N,
\]
satisfies $0<\Tr_{Z}A^{*}A<\infty$. See Yamashita \cite[Section 5]{Yam2025}.

However, for self-interacting scalar fields, currently I do not know
when $\Tr_{Z}A^{*}A<\infty$ holds, and even whether it can hold or
not. Indeed, it is not even clear how to construct a causal net of
type I factors (or more generally atomic $W^{*}$-algebras) for such
fields. Thus in this case my claim that a considerable part of the
empirical laws can be expressed as the prior conditional probability
of the form (\ref{eq:PZ}) is a mere guess without any strong evidence.
However, of course, currently no one knows whether there exists a
rigorous self-interacting scalar field in the Minkowski space $\R^{1,3}$,
or in a $(3+1)$-dimensional curved spacetime. (It seems likely that
the answer is no. See \cite{AD2021} and references therein.) Hence
it might not be very productive to study the empirical laws of self-interacting
scalar fields in $(3+1)$-dimension.

\subsubsection{Physical laws in a Fock space}

Consider a conventional free scalar QFTM, formulated on a boson Fock
space $\cF_{{\rm b}}(\cH)$, where $\cH$ is a Hilbert space (see
e.g., \cite[Section X.7]{RS75}). Consider the following
\begin{question}
What are the physical/empirical laws in the conventional free scalar
field theory which cannot be described in our ``stateless'' formalism
in a causal net of atomic $W^{*}$-algebras? Do such laws exist?
\end{question}

Of course, all the ``physical laws'' of the form ``if the system
is prepared in a state $\omega$, then ...'' are not described in
our stateless formalism in the literal form. However if $\omega$
refers to a local state on an atomic $W^{*}$-algebra%
, the law is rephrased as (\ref{eq:law-prepare-local-op}), employing
the notion of operation, instead of state. On the other hand, if $\omega$
refers to a global state, or a local state on a type III factor, I
doubt that this is really a physical/empirical law, as I argued in
Sections \ref{sec:Global-state} and \ref{sec:Local-state}.

In QFTM, there are several important ``global observables'' such
as the Hamiltonian $H$, and 4-momentum $(P_{0},...,P_{3})$, $P_{0}:=H$.
The operator $H$ is interpreted as the total energy or mass. For
a free scalar QFTM, we can also consider the total number of the particles,
expressed by the number operator $N$ on the Fock space $\cF_{{\rm b}}(\cH)$.
Probably, any ``physical law'' in terms of such global observables
cannot be described in our formalism. However, I again doubt that
such a law is really a physical law, since it is doubtful that a ``global
observable'' is really observable for us. Even if we can observe
such a global observable (e.g., the ``total mass'' of the Universe)
in QFTM in some sense, in a curved spacetime we cannot observe it
in general; Actually the cosmological event horizon exists in our
Universe, and we can get no information on the outside of the horizon
in principle. Probably, we will not even be able to give a mathematical
definition of the total mass in a generic curved spacetime, and so
such a notion will be almost meaningless both mathematically and empirically.

In a Fock space, we can define the Wick product of field operators.
Some important local observables such as the stress-energy tensor
are defined with the Wick products. Currently, I am not sure whether
the physical laws concerning such ``Wick-type local observables''
can be described in our framework. Notice that in QFTCS, stress-energy
tensors are defined in an algebraic formalism without Fock spaces.
For the rigorous constructions of stress-energy tensor in QFTCS, see
\cite[Subsec.3.1]{HW2015} and references therein.

\subsection{Free fermion fields}

Consider the common algebraic structure of free fermion fields in
a Minkowski spacetime or a curved spacetime, or even in a non-relativistic
spacetime. This structure is nothing but the \emph{CAR algebra}. 

This subsection is based on Yamashita \cite{Yam2026b}. For the rigorous
algebraic formalism of Dirac fields in curved spacetime, see \cite{Dim1982,DH2006,San2008,DHP2009,BD2015}. 

I use the symbol $(\cdot|\cdot)$ to denote a \emph{real} (pre-)inner
product on some (real or complex) vector space.
\begin{defn}
\label{def:CAR-real}%
{} Let $V$ be a real vector space, and $(\cdot|\cdot)$ a (real) pre-inner
product (i.e., a positive-semidefinite bilinear form) on $V$.  %
{} The \termi{CAR algebra} $\SYM{\CAR(V)}{CAR()}=\CAR(V,(\cdot|\cdot))$
over $V$ is defined to be the $C^{*}$-algebra generated by $\bbOne$
and $\phi(v),v\in V$, satisfying
\begin{enumerate}
\item $v\mapsto\phi(v)$ is $\R$-linear,
\item $\phi(v)^{*}=\phi(v),$ for all $v\in V$,
\item $\{\phi(v),\phi(u)\}=2(v|u)\bbOne,$ $v,u\in V$, where $\{X,Y\}:=XY+YX$.
\item If $(v|v)=0$, then $\phi(v)=0$.
\end{enumerate}
\end{defn}

If $\CAR(V)$ is interpreted as a fermion field, it is supposed that
$\CAR(V)$ has the $\Z_{2}$-gauge transformation $\phi(v)\mapsto-\phi(v)$,
$v\in V$. Thus any observable, which is $\Z_{2}$-gauge invariant,
belongs to the even subalgebra $\CAR(V)_{\even}$ of $\CAR(V)$. Furthermore,
assume that every selfadjoint element of $\CAR(V)_{\even}$ is an
observable, in other words, that there is no other gauge transformations.

Consider the finite-dimensional case $V=\R^{k}$, $k\in\N$. We have
\begin{align}
\CAR(\R^{2n}) & \cong\Mat(2^{n},\C),\label{eq:CAR(R2n)}
\end{align}
\[
\CAR(\R^{2n+1})\cong\Mat(2^{n},\C)\oplus\Mat(2^{n},\C)\cong\CAR(\R^{2n})\oplus\CAR(\R^{2n}),
\]
for all $n\in\N$. 

Let $\SYM{\Cl_{n}}{Cln}\equiv\Cl(\R^{n})$ denote the real Clifford
algebra over $\R^{n}$ with the usual inner product, whose generators
are $\bbOne$ and the elements of $\R^{n}$ subject to the relation
\[
v\cdot w+w\cdot v=-2(v|w),\qquad v,w\in\R^{n}.
\]
Then $\CAR(\R^{n})$ is nothing but the complexification of $\Cl_{n}$
with $\phi(v):=\im v$ and $v^{*}:=-v$. By \cite[Chapter 1, Theorem 3.7]{LM1989},
we have $\Cl_{n}\cong\Cl_{n+1}^{\even}$, and hence
\begin{equation}
\CAR(\R^{n+1})_{\even}\cong\CAR(\R^{n}).\label{eq:CAR(Rn+1)even}
\end{equation}
By (\ref{eq:CAR(R2n)}) and (\ref{eq:CAR(Rn+1)even}), we have
\begin{equation}
\CAR(\R^{2n+1})_{\even}\cong\Mat(2^{n},\C).\label{eq:CAR(R2n+1)=00003DMat(2n)}
\end{equation}

Let $N:=\{v\in V:(v|v)=0\}$. Assume $\dim V=\infty$ and further
$\dim(V/N)=\infty$. Let $\SYM{\Odd(V)}{Odd}$ denote the the set
of (finite) odd-dimensional subspaces $W$ of $V$ where $(\cdot|\cdot)$
is positive-definite on $W$. Then by (\ref{eq:CAR(R2n+1)=00003DMat(2n)}),
$\CAR(W)$ is a finite-dimensional factor $W^{*}$-algebra (of type
I) for all $W\in\Odd(V)$.

For $W_{1},W_{2}\in\Odd(V)$, we write $W_{1}\cauindep W_{2}$ if
$W_{1}$ and $W_{2}$ are orthogonal.

Now the following is evident:
\begin{lem}
Let $\fA_{W}:=\CAR(W)_{\even}$. Then $\{\fA_{W}|W\in\Odd(V)\}$ is
a causal net of finite-dimensional factors, indexed by the directed
set $(\Odd(V),\subset)$ with the orthogonality relation $\cauindep$.
\end{lem}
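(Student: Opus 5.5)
The plan is to verify directly the three conditions of Definition \ref{def:causalNet}, after first checking that $(\Odd(V),\subset)$ is a directed set and that $\cauindep$ is an orthogonality relation on it. That each $\fA_{W}$ is a finite-dimensional factor is already given by (\ref{eq:CAR(R2n+1)=00003DMat(2n)}), since $(\cdot|\cdot)$ is positive-definite on $W$ and so $W\cong\R^{2n+1}$ isometrically.

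\emph{Directedness and the orthogonality axioms.} Given $W_{1},W_{2}\in\Odd(V)$, I would take $W_{1}+W_{2}$. If its dimension is even, I would add one vector orthogonal to it. Such a vector, with $(v|v)>0$, exists by Gram--Schmidt because $\dim(V/N)=\infty$. The same device handles axiom (1): given $W$, choose $v\perp W$ with $(v|v)>0$ and set $W'=\R v$. Axiom (2) is immediate, since a subspace of a space orthogonal to $W_{3}$ is itself orthogonal to $W_{3}$. For axiom (3), if $\beta,\gamma\perp\alpha$ then $\beta+\gamma\perp\alpha$, and I would enlarge $\beta+\gamma$ to odd dimension by a vector orthogonal to $\alpha+\beta+\gamma$.

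\emph{The main obstacle.} The sum $W_{1}+W_{2}$ need not be positive-definite when $N\neq\{0\}$: a nonzero $w_{1}-w_{2}$ may lie in $N$. I would handle this by observing that $\phi$ vanishes on $N$, so $\CAR(W)\subset\CAR(V)$ depends only on the image of $W$ in $V/N$. Accordingly I would work with the images of the $W_{i}$ in the inner product space $V/N$, where sums of positive-definite subspaces are positive-definite. Equivalently, I would replace $W_{1}+W_{2}$ by a positive-definite subspace having the same image in $V/N$. I would state explicitly that the index set is understood in this way, or else assume $N=\{0\}$.

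\emph{Net conditions.} For $W\subset W'$, the algebra $\CAR(W)$ embeds in $\CAR(W')\subset\CAR(V)$, and the embedding is injective because the CAR algebra over a nondegenerate space is simple, so it is determined up to $*$-isomorphism. Taking even parts, the embedding respects the $\Z_{2}$-grading, hence $\fA_{W}\subset\fA_{W'}$. All these algebras contain the common unit $\bbOne$ of $\CAR(V)$. For commutation, take $W_{1}\perp W_{2}$ and $u\in W_{1}$, $v\in W_{2}$. Then $\{\phi(u),\phi(v)\}=0$, so a monomial of degree $2k$ in $W_{1}$ and a monomial of degree $2l$ in $W_{2}$ commute up to the sign $(-1)^{(2k)(2l)}=1$. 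Since such even monomials span a dense subspace (here the whole, by finite dimensionality) of $\fA_{W_{i}}$, linearity gives $[\fA_{W_{1}},\fA_{W_{2}}]=\{0\}$.
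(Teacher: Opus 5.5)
Your route is the direct verification that the paper dismisses as ``evident'', and on the index set you are more careful than the paper. When $N\neq\{0\}$, $(\Odd(V),\subset)$ really is not directed: for $n\in N\setminus\{0\}$ and $(w|w)>0$, the lines $\R w$ and $\R(w+n)$ both lie in $\Odd(V)$, yet every subspace containing both contains $n$. So the statement has to be read with each $W$ replaced by its image in $V/N$, or under $N=\{0\}$, as you propose. Since $\phi$ vanishes on $N$ and $(\cdot|\cdot)$ descends to $V/N$, both the algebras $\fA_W$ and the relation $\cauindep$ depend only on these images, so nothing else in your argument changes.

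The one step that fails as written is your justification of injectivity of $\CAR(W)\to\CAR(V)$. $\CAR(W)$ is \emph{not} simple for odd-dimensional $W$. By the paper's own formula, $\CAR(\R^{2n+1})\cong\Mat(2^{n},\C)\oplus\Mat(2^{n},\C)$, and the two summands are cut out by the central element $\phi(e_{1})\cdots\phi(e_{2n+1})$, where $e_{1},\dots,e_{2n+1}$ is an orthonormal basis of $W$. This is exactly the case the lemma concerns.

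The lemma only needs the even part, and $\fA_W=\CAR(W)_{\even}\cong\Mat(2^{n},\C)$ is simple. Any unital $*$-homomorphism out of it is therefore injective, so apply simplicity there instead. If you also want the full $\CAR(W)$ to embed, argue as follows. Take a unit vector $e\perp W$. The central element anticommutes with $\phi(e)$, and $\phi(e)^{2}=\bbOne$. Hence its image in $\CAR(V)$ is not a scalar, and neither central summand lies in the kernel.

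With that repair, the rest of your argument is fine: the inclusions, the common unit, and the commutation step, including the sign count $(-1)^{(2k)(2l)}=1$ for even monomials from orthogonal subspaces.
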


Here, I considered only the odd-dimensional subspaces of $V$ so that
Working Hypothesis \ref{hyp:causal-finitist} (finitist version)%
{} holds. However, if we adopt the weakest Working Hypothesis \ref{hyp:causal-atomic}
(atomic version), it is not necessary to confine ourselves to odd
dimensional subspaces.

For $W\in\Odd(V)$, let $\Tr_{W}$ denote the canonical trace on $\fA_{W}$,
that is, if we identify $\fA_{W}$ with $\Mat(2^{n},\C)$ ($\dim W=2n+1$),
$\Tr_{W}$ is the usual trace $\Tr$ on the matrix algebra. Let
\begin{equation}
\ON_{2}(V):=\{(u,v)\in V\times V;\,\|u\|=\|v\|=1,\ (u|v)=0\}.\label{eq:def:ON2}
\end{equation}
For $(u,v)\in\ON_{2}(V)$, let
\begin{equation}
\SYM{X_{u,v}}{Xuv}:=\im\phi(u)\phi(v).\label{eq:def:Xuv}
\end{equation}
Then $X_{u,v}$ is selfadjoint and $X_{u,v}^{2}=\bbOne$. Hence
\begin{equation}
\SYM{P_{u,v}^{\pm}}{Puv+-}:=\frac{1}{2}\left(\bbOne\pm X_{u,v}\right)\label{eq:def:Puv+-}
\end{equation}
is the spectral projections of $X_{u,v}$, which is observable. Since
$P_{u,v}^{+}=P_{v,u}^{-}$, we may consider only $P_{u,v}^{+}$.

For $W\in\Odd(V)$ and $(u,v),(u',v')\in\ON_{2}(W)$, consider the
prior conditional probability
\begin{equation}
\Prob_{W}(u',v'|u,v)\equiv\Prob_{W}\left(P_{u',v'}^{+}|P_{u,v}^{+}\right):=\frac{\Tr_{W}P_{u,v}^{+}P_{u',v'}^{+}}{\Tr_{W}P_{u,v}^{+}}.\label{eq:def:P(u'v'|uv)}
\end{equation}
In fact, we can calculate the following explicit formula: 
\begin{equation}
\Prob_{W}(u',v'|u,v)=\frac{1}{2}\left(1+(u|u')(v|v')-(u|v')(v|u')\right).\label{eq:P(u'v'|uv)-law}
\end{equation}
Eq.~(\ref{eq:P(u'v'|uv)-law}) is derived from $\Tr_{W}P_{u,v}^{+}=2^{n-1}$
($\dim W=2n+1$) and the following
\begin{lem}
For any $(u,v),(u',v')\in\ON_{2}(W)$, 
\begin{equation}
\Tr_{W}X_{u,v}X_{u',v'}=2^{n}\left[(u|u')(v|v')-(u|v')(v|u')\right],\qquad(\dim W=2n+1).\label{eq:TrXuvXuv}
\end{equation}
\end{lem}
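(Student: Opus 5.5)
The plan is to expand $X_{u,v}X_{u',v'}=-\phi(u)\phi(v)\phi(u')\phi(v')$ using $\R$-linearity of $\phi$ and the anticommutation relations. We then evaluate $\Tr_W$ term by term, using the fact that the canonical trace on $\fA_W=\CAR(W)_{\even}\cong\Mat(2^{n},\C)$ kills every nontrivial even product of generators.

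\emph{Step 1: choose a basis.} Pick an orthonormal basis $e_{1},\dots,e_{2n+1}$ of $W$. Then the monomials $\phi(e_{I}):=\phi(e_{i_{1}})\cdots\phi(e_{i_{m}})$ with $i_{1}<\cdots<i_{m}$ span $\CAR(W)$, and those with $|I|$ even span $\fA_W$.

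\emph{Step 2: traces of monomials.} We show that $\Tr_W\phi(e_I)=0$ for every nonempty even $I$, and of course $\Tr_W\bbOne=2^{n}$. For such $I$, choose an index $j$ as follows: if $j\in I$, then $\phi(e_j)$ anticommutes with $\phi(e_I)$ (since $|I|-1$ is odd); if $j\notin I$, then $\phi(e_j)$ anticommutes with $\phi(e_I)$ as well (since $|I|$ is even)... Taking this literally, $\phi(e_j)$ with $j\notin I$ would in fact commute, so the correct choice is to conjugate by an \emph{even} unitary. Take $i\ne k$ with $i\in I$, $k\notin I$ (possible because $|I|\le 2n<2n+1$ when $I\neq\{1,\dots,2n+1\}$, and $|I|$ even excludes the full set). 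Then $w:=\phi(e_i)\phi(e_k)\in\fA_W$ is unitary up to sign and anticommutes with $\phi(e_I)$. Hence $\Tr_W\phi(e_I)=\Tr_W w\phi(e_I)w^{-1}=-\Tr_W\phi(e_I)$, which gives $0$.

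\emph{Step 3: expand the product.} Write $u',v'$ in terms of the components of $u,v$ plus orthogonal remainders. Equivalently, by multilinearity, it suffices to compute $\Tr_W\phi(a)\phi(b)\phi(c)\phi(d)$ for arbitrary vectors. Step 2 together with the anticommutation relations gives the Wick-type formula
\[
\Tr_W\phi(a)\phi(b)\phi(c)\phi(d)=2^{n}\left[(a|b)(c|d)-(a|c)(b|d)+(a|d)(b|c)\right].
\]
This holds for basis vectors by direct inspection: the product is a scalar multiple of $\bbOne$ exactly when the indices pair up, and otherwise its trace vanishes by Step 2. It then extends multilinearly. Apply it with $(a,b,c,d)=(u,v,u',v')$, using $(u|v)=(u'|v')=0$, so only the last two terms survive. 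With the prefactor $\im^{2}=-1$ this gives
\[
\Tr_W X_{u,v}X_{u',v'}=-2^{n}\left[-(u|u')(v|v')+(u|v')(v|u')\right],
\]
which is (\ref{eq:TrXuvXuv}).

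\emph{Main obstacle.} The main obstacle is Step 2, in particular making sure that the trace used is the canonical trace of $\fA_W$ (the even part), normalized so that $\Tr_W\bbOne=2^{n}$, rather than a trace on all of $\CAR(W)$, and that the conjugating element lies in $\fA_W$. The odd dimension guarantees that a spare index $k\notin I$ always exists.
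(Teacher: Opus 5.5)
Your proof is correct. The paper gives no proof of this lemma: it is stated and used at once, together with $\Tr_W P^{+}_{u,v}=2^{n-1}$, to obtain \eqref{eq:P(u'v'|uv)-law}, with the details presumably left to \cite{Yam2026b}. The only tools the surrounding text supplies are the isomorphisms \eqref{eq:CAR(R2n)} and \eqref{eq:CAR(Rn+1)even}. A proof in that spirit would transport $X_{u,v}$ into an explicit model of $\Mat(2^{n},\C)$ and compute there.

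Your argument is intrinsic and needs no model. It uses only three things: cyclicity of $\Tr_W$, the normalization $\Tr_W\bbOne=2^{n}$, and the odd dimension of $W$, which supplies the spare index $k\notin I$. It also yields the general four-point formula $\Tr_W\phi(a)\phi(b)\phi(c)\phi(d)=2^{n}[(a|b)(c|d)-(a|c)(b|d)+(a|d)(b|c)]$. The lemma is the case $(a,b,c,d)=(u,v,u',v')$ multiplied by $\im^{2}=-1$, and the signs land exactly on the stated right-hand side. Your Step 2 also gives $\Tr_W X_{u,v}=2^{n}\im(u|v)=0$, which is the other ingredient needed for \eqref{eq:P(u'v'|uv)-law}.

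For the write-up, a few clean-ups.

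In Step 2, remove the false start. State directly that, for $|I|$ even, $\phi(e_j)$ anticommutes with $\phi(e_I)$ when $j\in I$ and commutes with it when $j\notin I$. Hence $w=\phi(e_i)\phi(e_k)$ with $i\in I$, $k\notin I$ anticommutes with $\phi(e_I)$.

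Also, $w$ is genuinely unitary, not only up to sign: $w^{*}=\phi(e_k)\phi(e_i)=-w$ and $w^{2}=-\bbOne$, so $w^{*}w=\bbOne$. Alternatively, since $\fA_W$ is a factor, $\Tr_W$ is invariant under every ${}^{*}$-automorphism of $\fA_W$, including $\Ad\phi(e_i)$, so the odd unitary $\phi(e_i)$ would also work. Your even conjugator sidesteps the need for this.

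In Step 3, drop the unused sentence about decomposing $u',v'$. Replace ``direct inspection'' by the explicit check. If the indices pair up as $p=q=r=s$, $p=q\neq r=s$, $p=r\neq q=s$ or $p=s\neq q=r$, the product $\phi(e_p)\phi(e_q)\phi(e_r)\phi(e_s)$ equals $\bbOne,\bbOne,-\bbOne,\bbOne$ respectively, matching the formula. Otherwise the product is $\pm$ a nonempty even monomial, and both sides vanish.
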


If the space $V$ (and $W\subset V$) is given a physical meaning,
Eq.~(\ref{eq:P(u'v'|uv)-law}) can be viewed as an example of a physical/empirical
law which is described without the concept of quantum state. Notice
that the r.h.s.~of (\ref{eq:P(u'v'|uv)-law}) is independent of $W$,
despite the r.h.s.~of (\ref{eq:TrXuvXuv}) being dependent on the
dimension of $W$. Generally, in our framework of causal nets, the
description of a physical law is more or less ``scope-dependent''.
However, it is desirable for a universal physical law to be less scope-dependent.

Group-theoretically, a finite-dimensional CCR algebra $\CAR(W)$ is
related to a projective unitary representation of $SO(N)$ (or a unitary
representations of $\Spin(N)$), called the \termi{spin representation}.
Since the POVM formalism of Subsection \ref{subsec:The-POVM-picture}
is suitable for the quantum systems concerning the unitary representations
of a compact Lie group, we may expect that the POVM formalism is also
suitable to describe the physical laws of the system expressed by
$\CAR(W)$. This point of view is examined in \cite{Yam2026b}.

\appendix

\section{Appendix: Proof of Proposition \ref{prop:sigma-finite-atomic}}

Although in the preceding sections, ``a $W^{*}$-algebra'' has been
referring to a $\sigma$-finite $W^{*}$-algebra, in this section
it refers to a general (i.e., possibly non-$\sigma$-finite) $W^{*}$-algebra.
Also note that when we consider a von Neumann algebra $\fR\subset\Bdd(\cH)$,
$\cH$ is not assumed to be separable.
\begin{lem}[{\cite[Chapter V, Proposition 1.1]{Tak2002}}]
\label{prop:vonNeumannComplete} If $\fR$ is a von Neumann algebra,
then the set $\Proj(\fR)$ of all projections of $\fR$ is a complete
lattice. If $\fR$ is abelian, $\Proj(\fR)$ is a complete Boolean
algebra.
\end{lem}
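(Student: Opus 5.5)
The plan is to reduce everything to the commutant. We may regard $\fR$ as a von Neumann algebra on some Hilbert space $\cH$, and write $\fR'$ for its commutant.

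\emph{Complete lattice.} Let $\{e_i\}_{i\in I}\subset\Proj(\fR)$ be an arbitrary family.
\begin{enumerate}
\item Define $e:=$ the orthogonal projection onto the closed linear span $\overline{\Span}\bigcup_i e_i\cH$. In $\Proj(\Bdd(\cH))$ this is clearly the least upper bound of the family.
\item Show $e\in\fR$. For any unitary $u\in\fR'$ we have $ue_iu^*=e_i$, so $u$ leaves each $e_i\cH$ invariant. Hence $u$ leaves their closed span invariant, and the same holds for $u^*$. Therefore $ueu^*=e$, i.e.\ $e$ commutes with every unitary in $\fR'$.
\item Since $\fR'$ is a $C^*$-algebra, it is linearly spanned by its unitaries. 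So $e\in\fR''=\fR$ by the bicommutant theorem.
\item Being the supremum in the larger lattice $\Proj(\Bdd(\cH))$ and lying in $\Proj(\fR)$, $e$ is also the supremum in $\Proj(\fR)$.
\item For the infimum, use $\bigwedge_i e_i=\bbOne-\bigvee_i(\bbOne-e_i)$. Alternatively, take the projection onto $\bigcap_i e_i\cH$, which lies in $\fR$ by the same invariance argument.
\end{enumerate}
Thus $\Proj(\fR)$ is a complete lattice.

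\emph{Abelian case.} Now suppose $\fR$ is abelian.
\begin{enumerate}
\item All projections in $\fR$ commute. For commuting projections $e,f$ we have $e\wedge f=ef$ and $e\vee f=e+f-ef$, and $e\mapsto\bbOne-e$ is a complement.
\item To get a Boolean algebra, it remains to verify distributivity. The finite distributive law $e\wedge(f\vee g)=(e\wedge f)\vee(e\wedge g)$ is the polynomial identity $e(f+g-fg)=ef+eg-efg$.
\item Completeness was already proved above, so $\Proj(\fR)$ is a complete Boolean algebra.
\end{enumerate}

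\emph{Expected obstacle.} The only delicate point is confirming that the lattice supremum really belongs to $\fR$ rather than merely to $\Bdd(\cH)$. The commutant and unitary argument above handles this.

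If one also wants the infinite distributive law $e\wedge\bigvee_i f_i=\bigvee_i(e\wedge f_i)$, I would argue as follows. The projection $e$ commutes with all $f_i$, hence with $\bigvee_i f_i$, which lies in the abelian algebra $\fR$. Then $e\bigvee_i f_i$ is the projection onto $e\,\overline{\Span}\bigcup_i f_i\cH=\overline{\Span}\bigcup_i ef_i\cH$. This equality holds because $e$ is continuous and commutes with each $f_i$.
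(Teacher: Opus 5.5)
Your proof is correct. The paper gives no argument of its own and simply cites \cite[Chapter V, Proposition 1.1]{Tak2002}, and your argument is the standard proof behind that citation: the projection onto the closed span (resp.\ intersection) of the ranges $e_i\cH$ commutes with every unitary of $\fR'$, hence lies in $\fR''=\fR$, and in the abelian case $e\wedge f=ef$, $e\vee f=e+f-ef$ and $e\mapsto\bbOne-e$ give a complemented distributive lattice. Your closing paragraph on the infinite distributive law is not needed, since $e\wedge\bigvee_i f_i=\bigvee_i(e\wedge f_i)$ holds automatically in any Boolean algebra whenever $\bigvee_i f_i$ exists.
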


\begin{defn}[{e.g., \cite[Chapter 14]{GH2009}}]
 Let $\cB$ be a Boolean algebra. $q\in\cB$ is an \termi{atom}
of $\cB$ if $q\neq0$ and %
$\{p\in\cB|p\leq q\}=\{0,q\}$. The $\cB$ is said to be \termi{atomic}
if every non-zero element dominates at least one atom.
\end{defn}

\begin{lem}[{\cite[Chapter 14, Theorem 6]{GH2009}}]
\label{lem:atomicBool} Let $\cB$ be an atomic Boolean algebra,
and $X$ the set of its atoms. Let $X_{p}:=\{q\in X:q\leq p\}$. Then
the map $p\mapsto X_{p}$ is an embedding (i.e., an injective homomorphism)
of $\cB$ into $\cP(X)$ (the power set of $X$). In other words,
$\cB$ is isomorphic to $\{X_{p}|p\in\cB\}\subset\cP(X)$ by the map
$p\mapsto X_{p}$. Furthermore, $\cB$ is complete if and only if
$\{X_{p}|p\in\cB\}=\cP(X)$.
\end{lem}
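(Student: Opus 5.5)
The statement is the standard structure theorem for atomic Boolean algebras (\cite[Chapter 14, Theorem 6]{GH2009}), so the plan is to argue directly with the set $X$ of atoms and the map $p\mapsto X_{p}$. The proof has four steps: the map is a homomorphism, it is injective, completeness gives surjectivity onto $\cP(X)$, and surjectivity gives completeness.

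\textbf{Step 1: the map is a homomorphism.} Two facts about an atom $q$ do all the work.
\begin{itemize}
\item \emph{Meets.} For any $p$, $q\wedge p\in\{0,q\}$, since $q\wedge p\le q$. Hence $q\le p\wedge p'$ iff $q\le p$ and $q\le p'$, which gives $X_{p\wedge p'}=X_{p}\cap X_{p'}$.
\item \emph{Complements.} Exactly one of $q\le p$ and $q\le\neg p$ holds. Both cannot hold, since then $q\le p\wedge\neg p=0$. If $q\not\le p$, then $q\wedge p=0$, so $q\le\neg p$. This gives $X_{\neg p}=X\setminus X_{p}$.
\end{itemize}
Joins follow by De Morgan, and clearly $X_{0}=\emptyset$ and $X_{1}=X$.

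\textbf{Step 2: injectivity.} Suppose $p\neq p'$. Then, say, $p\wedge\neg p'\neq0$. By atomicity there is an atom $q\le p\wedge\neg p'$. This $q$ lies in $X_{p}$ but not in $X_{p'}$, so $X_{p}\neq X_{p'}$.

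\textbf{Step 3: complete implies surjective.} Let $\cB$ be complete and $Y\subset X$. Put $p:=\sup Y$. Clearly $Y\subset X_{p}$. For the reverse inclusion, take an atom $q\notin Y$. Every $y\in Y$ is an atom different from $q$, so $y\wedge q=0$ and hence $y\le\neg q$. Thus $\neg q$ is an upper bound of $Y$, so $p\le\neg q$ and $q\not\le p$. Therefore $X_{p}=Y$.

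\textbf{Step 4: surjective implies complete.} If the image is all of $\cP(X)$, then $\cB\cong\cP(X)$ as Boolean algebras, because a bijective homomorphism is an isomorphism. Since $\cP(X)$ is complete, so is $\cB$.

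The main obstacle is Step 3, where one must check that the least upper bound picks up no extra atoms. The key point is the observation $y\le\neg q$ for distinct atoms $y$ and $q$. Notably, this step does not require infinite distributivity.
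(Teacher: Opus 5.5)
Your proof is correct: the paper gives no proof of this lemma and only cites Givant--Halmos, and your four steps (homomorphism via the atom dichotomy, injectivity via atomicity, $X_{\sup Y}=Y$ under completeness, completeness transported back from $\cP(X)$) are the standard argument for that theorem. The one small variation is in Step 3, where you use the fact that $\neg q$ is an upper bound of $Y$. This replaces the appeal to the infinite distributive law $q\wedge\bigvee Y=\bigvee_{y\in Y}(q\wedge y)$ that is often used at that point, and makes the step slightly more self-contained.
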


\begin{lem}
\label{lem:sigma-finite-atomic-abelian}Let $\fR$ be an $\sigma$-finite
atomic abelian $W^{*}$-algebra. Then $\fR$ is isomorphic to a countable
direct sum of $\C$, i.e., $\fR\cong\bigoplus^{n}\C$ for some $n\in\N\cup\{\infty\}$.
In other words, $\fR\cong\C^{X}$, the abelian $W^{*}$-algebra of
all functions from $X$ to $\C$, for some countable set $X$.
\end{lem}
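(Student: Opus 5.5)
We prove Lemma \ref{lem:sigma-finite-atomic-abelian} by combining Lemma \ref{prop:vonNeumannComplete} with Lemma \ref{lem:atomicBool}, and then using $\sigma$-finiteness to make the atom set countable. Regard $\fR$ as an abelian von Neumann algebra on some Hilbert space. By Lemma \ref{prop:vonNeumannComplete}, $\Proj(\fR)$ is a complete Boolean algebra. Since $\fR$ is atomic, every nonzero projection dominates an atomic projection, so $\Proj(\fR)$ is an atomic Boolean algebra in the sense of the definition above; its atoms are exactly $X:=\AProj(\fR)$. Lemma \ref{lem:atomicBool} then gives a Boolean isomorphism $\Proj(\fR)\cong\cP(X)$, $p\mapsto X_{p}$.

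Next we show that $X$ is countable. Distinct atoms $q\neq q'$ satisfy $qq'\le q$, and $qq'$ is a projection because $\fR$ is abelian. Minimality of $q$ forces $qq'\in\{0,q\}$. If $qq'=q$, then $q\le q'$, and minimality of $q'$ gives $q=q'$, a contradiction. Hence $X$ is a family of mutually orthogonal nonzero projections. By $\sigma$-finiteness, $X$ is countable; say $|X|=n\in\N\cup\{\infty\}$.

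We then define $\Phi:\C^{X}\to\fR$ by $\Phi(f):=\sum_{q\in X}f(q)q$ for bounded $f$, the sum converging strongly since the $q$'s are orthogonal. Here $\C^{X}$ means the bounded functions, i.e.\ $\ell^{\infty}(X)$, which is the direct sum $\bigoplus^{n}\C$ in the $W^{*}$ sense. Because the $q$'s are orthogonal and commute, $\Phi$ is a $*$-homomorphism and $\|\Phi(f)\|=\sup|f|$, so $\Phi$ is isometric and injective.

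The main obstacle is surjectivity. We first show $\sum_{q\in X}q=\bbOne$. If $e:=\bbOne-\sum_{q}q\neq0$, atomicity yields an atom $q_{0}\le e$. But $q_{0}\in X$, so $q_{0}\le\sum_q q$, giving $q_{0}=q_{0}e=0$, a contradiction. Now let $p\in\Proj(\fR)$ and put $p':=\sum_{q\in X_{p}}q\le p$. If $p-p'\neq0$, it dominates an atom $q_{1}$. Then $q_{1}\le p$, so $q_{1}\in X_{p}$ and hence $q_{1}\le p'$, which is impossible. Thus $p=p'=\Phi(\chi_{X_{p}})$, so every projection lies in the range of $\Phi$.

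The range of $\Phi$ is a norm-closed $*$-subalgebra, since $\Phi$ is isometric and $\ell^{\infty}(X)$ is complete. An abelian von Neumann algebra is the norm-closed span of its projections, by the spectral theorem with uniform approximation of self-adjoint elements by finite linear combinations of spectral projections. Therefore $\Phi$ is onto, and $\fR\cong\ell^{\infty}(X)\cong\bigoplus^{n}\C$. This establishes the lemma without using the completeness clause of Lemma \ref{lem:atomicBool} beyond the projection identification.
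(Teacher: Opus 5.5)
Your proof is correct and follows the same route as the paper's. Lemma~\ref{prop:vonNeumannComplete} makes $\Proj(\fR)$ a complete atomic Boolean algebra, Lemma~\ref{lem:atomicBool} identifies it with $\cP(X)$ for $X=\AProj(\fR)$, and $\sigma$-finiteness makes $X$ countable; you additionally supply the steps the paper leaves implicit, namely that distinct atoms are mutually orthogonal (needed for the countability claim), the passage from the lattice isomorphism to the $W^{*}$-isomorphism $\fR\cong\ell^{\infty}(X)$ via $\Phi$ and norm density of the span of projections, and the correct reading of $\C^{X}$ as the bounded functions on $X$.
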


\begin{proof}
Since $\fR$ is abelian, the atomicity of $\fR$ means that the complete
Boolean algebra $\Proj(\fR)$ is atomic by Lemma \ref{prop:vonNeumannComplete}.
Let $X:=\AProj(\fR)$ be the set of the atoms of $\Proj(\fR)$. We
have $\Proj(\fR)\cong\cP(X)$ as Boolean algebras by Lemma \ref{lem:atomicBool}.
Therefore $\fR$ is isomorphic to $\C^{X}$, and furthermore since
$\fR$ is $\sigma$-finite, $X$ is countable. 
\end{proof}

\begin{proof}[Proof of Proposition \ref{prop:sigma-finite-atomic}.]
\emph{} (2)$\then$(1) is clear. We show that (1)$\then$(2). Assume
that $\fR$ is atomic. Since $\fR$ of type I%
, we find from \cite[Chapter V, Theorem 1.27]{Tak2002} that $\fR$
has the following decomposition:
\[
\fR\cong\bigoplus_{\lambda\in\Lambda}\fA_{\lambda}\,\overline{\otimes}\,\Bdd(\cK_{\lambda}),
\]
where $\Lambda$ is an index set, $\fA_{\lambda}$ are atomic abelian
von Neumann algebras, and $\cK_{\lambda}$ are Hilbert spaces. Assume
further that $\fR$ is $\sigma$-finite. Then we find that (a) $\Lambda$
is countable, (b) $\fA_{\lambda}\cong\bigoplus^{n_{\lambda}}\C$ for
some $n_{\lambda}\in\N\cup\{\infty\}$, for all $\lambda\in\Lambda$,
by Lemma \ref{lem:sigma-finite-atomic-abelian}, and (c) each $\cK_{\lambda}$
is separable. Thus we can check $\fR\cong\bigoplus_{i=1}^{n}\Bdd(\cH_{i})$
for some separable Hilbert spaces $\cH_{i}$. 
\end{proof}

\providecommand{\noopsort}[1]{}\providecommand{\singleletter}[1]{#1}%


\begin{thebibliography}{DFPS21}

\bibitem[ADC21]{AD2021}
M.~Aizenman and H.~Duminil-Copin.
\newblock Marginal triviality of the scaling limits of critical 4{D} {I}sing
  and $\phi^4_4$ models.
\newblock {\em Annals Math.}, 194(1), 2021.
\newblock arXiv:1912.07973.

\bibitem[Ara99]{Araki99}
Huzihiro Araki.
\newblock {\em Mathematical Theory of Quantum Fields}.
\newblock Oxford University Press, Oxford, 1999.
\newblock Originally published in Japanese: Ryoshiba no Suri, Iwanami Shoten,
  Tokyo, 1993; Translated by Ursula Carow-Watamura.

\bibitem[BD]{BD2015}
Marco Benini and Claudio Dappiaggi.
\newblock Models of free quantum field theories on curved backgrounds.
\newblock In R.~Brunetti, C.~Dappiaggi, K.~Fredenhagen, and J.~Yngvason,
  editors, {\em Advances in Algebraic Quantum Field Theory}, Mathematical
  Physics Studies, pages 75--124, Berlin. Springer.
\newblock arXiv:1505.04298; DOI 10.1007/978-3-319-21353-8\_3.

\bibitem[BF09]{BF2009}
C.~B\"ar and K~Fredenhagen, editors.
\newblock {\em Quantum Field Theory on Curved Spacetimes: Concepts and
  Mathematical Foundations}.
\newblock Springer, 2009.

\bibitem[BFV03]{BFV2003}
Romeo Brunetti, Klaus Fredenhagen, and Rainer Verch.
\newblock The generally covariant locality principle - a new paradigm for local
  quantum field theory.
\newblock {\em Commun. Math. Phys.}, 237:31--68, 2003.
\newblock https://doi.org/10.1007/s00220-003-0815-7, arXiv:math-ph/0112041.

\bibitem[BGP07]{BGP2007}
Christian B{\"a}r, Nicolas Ginoux, and Frank Pfäffle.
\newblock {\em Wave Equations on {L}orentzian Manifolds and Quantization}.
\newblock European Mathematical Society, Z{\"u}rich, 2007.

\bibitem[BMU14]{Bar-Mul-Udu-2014}
H.~Barnum, M.~P. M{\"u}ller, and C.~Ududec.
\newblock Higher-order interference and single-system postulates characterizing
  quantum theory.
\newblock {\em New J. Phys.}, 16:123029, 2014.
\newblock arXiv:1403.4147.

\bibitem[BR87]{BR87}
Ola Bratteli and Derek~W. Robinson.
\newblock {\em Operator Algebras and Quantum Statistical Mechanics 1, 2nd Ed.}
\newblock Springer, Berlin, 1987.

\bibitem[BR97]{BR97}
Ola Bratteli and Derek~W. Robinson.
\newblock {\em Operator Algebras and Quantum Statistical Mechanics 2, 2nd Ed.}
\newblock Springer, Berlin, 1997.

\bibitem[Cha99]{Cha1999}
Laurent Charles.
\newblock {F}eynman path integral and {T}oeplitz quantization.
\newblock {\em Helvetica Physica Acta}, 72(5/6):341--355, 1999.
\newblock
  https://www.ipht.fr\nolinebreak[1]/Docspht/articles/t98\nolinebreak[1]/093/public/publi.pdf.

\bibitem[DFPS21]{DFPS2021}
John~B. DeBrota, Christopher~A. Fuchs, Jacques~L. Pienaar, and Blake~C. Stacey.
\newblock {B}orn's rule as a quantum extension of {B}ayesian coherence.
\newblock {\em Phys. Rev. A}, 104:022207, 2021.
\newblock https://doi.org/10.1103/PhysRevA.104.022207, arXiv:2012.14397.

\bibitem[DH06]{DH2006}
Claudio D'Antoni and Stefan Hollands.
\newblock Nuclearity, local quasiequivalence and split property for {D}irac
  quantum fields in curved spacetime.
\newblock {\em Commun. Math. Phys.}, 261:133--159, 2006.
\newblock https://doi.org/10.1007/s00220-005-1398-2; arXiv:math-ph/0106028.

\bibitem[DHP09]{DHP2009}
Claudio Dappiaggi, Thomas-Paul Hack, and Nicola Pinamonti.
\newblock The extended algebra of observables for {D}irac fields and the trace
  anomaly of their stress-energy tensor.
\newblock {\em Rev. Math. Phys.}, 21:1241--1312, 2009.
\newblock https://doi.org/10.1142/S0129055X09003864; arXiv:0904.0612.

\bibitem[Dim82]{Dim1982}
J.~Dimock.
\newblock {D}irac quantum ﬁelds on a manifold.
\newblock {\em Trans. Amer. Math. Soc.}, 269:133--147, 1982.

\bibitem[DK85]{DK85}
I.~Daubechies and J.~R. Klauder.
\newblock Quantum-mechanical path integrals with {W}iener measure for all
  polynomial {Hamiltonians} {II}.
\newblock {\em J. Math. Phys.}, 26(9):2239--2256, 1985.

\bibitem[Ear11]{Ear2011}
John Earman.
\newblock The {U}nruh effect for philosophers.
\newblock {\em Studies in History and Philosophy of Modern Physics}, 42:81--97,
  2011.

\bibitem[FR16]{FR2016}
Klaus Fredenhagen and Kasia Rejzner.
\newblock Quantum field theory on curved spacetimes: axiomatic framework and
  examples.
\newblock {\em J. Math. Phys.}, 57:031101, 2016.
\newblock arXiv:1412.5125; https://doi.org/\linebreak[1]10.1063/1.4939955.

\bibitem[Fuc02]{Fuc2002}
Christopher~A. Fuchs.
\newblock Quantum mechanics as quantum information (and only a little more).
\newblock In A.~Khrenikov, editor, {\em Quantum Theory: Reconsideration of
  Foundations}. V{\"a}xjo University Press, 2002.
\newblock arXiv:quant-ph/0205039.

\bibitem[G{\'e}r19]{Ger2019}
Christian G{\'e}rard.
\newblock {\em Microlocal Analysis of Quantum Fields on Curved Spacetimes}.
\newblock European Mathematical Society, Berlin, 2019.

\bibitem[GH09]{GH2009}
Steven Givant and Paul Halmos.
\newblock {\em Introduction to Boolean Algebras}.
\newblock Springer, Berlin, 2009.

\bibitem[GJ05a]{GJ2005b}
Christian G{\'e}rard and Christian~D. J{\"a}kel.
\newblock Thermal quantum fields without cut-offs in 1+1 space-time dimensions.
\newblock {\em Rev. Math. Phys.}, 17:113--174, 2005.
\newblock arXiv:math-ph/0403048, DOI:
  https://doi.org/10.1142/S0129055X05002303.

\bibitem[GJ05b]{GJ2005}
Christian G{\'e}rard and Christian~D. J{\"a}kel.
\newblock Thermal quantum ﬁelds with spatially cutoff interactions in 1+1
  space–time dimensions.
\newblock {\em J. Funct. Anal.}, 220:157--213, 2005.
\newblock arXiv:math-ph/0307053.

\bibitem[GT09]{Tab2009}
Yousef Ghazi-Tabatabai.
\newblock {\em Quantum measure theory: a new interpretation}.
\newblock PhD thesis, Imperial College, January 2009.
\newblock arXiv:0906.0294, A Thesis Submitted for the Degree of Doctor of
  Philosophy of the University of London and the Diploma of Imperial College.

\bibitem[G{\"u}n10]{Gun10}
Batu G{\"u}neysu.
\newblock {\em On the Feynman-Kac formula for Schr\"odinger semigroups on
  vector bundles}.
\newblock PhD thesis, Rheinischen Friedrich-Wilhelms-Universit\"at Bonn, 2010.
\newblock \texttt{https://hdl.handle.net/20.500.11811/4970}.

\bibitem[G{\"u}n17]{Gun17}
Batu G{\"u}neysu.
\newblock {\em Covariant Schr{\"o}dinger Semigroups on Riemannian Manifolds}.
\newblock Operator Theory: Advances and Applications Vol. 264. Birkh{\"a}user,
  Cham, 2017.

\bibitem[Haa96]{Haa96}
Rudolf Haag.
\newblock {\em Local Quantum Physics}.
\newblock Springer, Berlin, second revised and enlarged edition, 1996.

\bibitem[Hal06]{Hal06}
Hans Halvorson.
\newblock Algebraic quantum field theory.
\newblock In Jeremy Butterfield and John Earman, editors, {\em Handbook of the
  Philosophy of Physics}, pages 731--864. Elsevier, 2006.
\newblock An appendix by Michael M\"uger,
  https://doi.org/10.1016/B978-044451560-5/50011-7, arXiv:math-ph/0602036.

\bibitem[Hor90]{Hor1990}
S.~S. Horuzhy.
\newblock {\em Introduction to algebraic quantum field theory}.
\newblock Kluwer Academic Publishers, Dordrecht, 1990.

\bibitem[HW15]{HW2015}
Stefan Hollands and Robert~M. Wald.
\newblock Quantum fields in curved spacetime.
\newblock {\em Physics Reports}, 574:1--35, 2015.
\newblock arXiv:1401.2026; https://doi.org/10.1016/j.physrep.2015.02.001.

\bibitem[KM15]{KM2015}
Igor Khavkine and Valter Moretti.
\newblock Algebraic {QFT} in curved spacetime and quasifree {H}adamard states:
  an introduction.
\newblock In R.~Brunetti et~al., editor, {\em Advances in Algebraic Quantum
  Field Theory}. Springer, 2015.
\newblock Chapter 5, arXiv:1412.5945;
  https://doi.org/10.1007/978-3-319-21353-8\_5.

\bibitem[LM89]{LM1989}
H.~Blaine Lawson, Jr and Marie-Louise Michelsohn.
\newblock {\em Spin Geometry}.
\newblock Princeton University Press, Princeton, 1989.

\bibitem[M{\"u}l21]{Mue2021}
Markus~P. M{\"u}ller.
\newblock Probabilistic theories and reconstructions of quantum theory (les
  houches 2019 lecture notes).
\newblock {\em SciPost Phys. Lect. Notes}, 28, 2021.
\newblock arXiv:2011.01286.

\bibitem[Rov96]{Rov1996}
Carlo Rovelli.
\newblock Relational quantum mechanics.
\newblock {\em Int. J. of Theor. Phys.}, 35:1637, 1996.
\newblock https://doi.org/10.1007/BF02302261, arXiv:quant-ph/9609002.

\bibitem[RS75]{RS75}
M.~Reed and B.~Simon.
\newblock {\em Fourier Analysis, Self-adjointness}.
\newblock Methods of Modern Mathematical Physics II. Academic Press, San Diego,
  1975.

\bibitem[San08]{San2008}
Ko~Sanders.
\newblock {\em Aspects of locally covariant quantum field theory}.
\newblock PhD thesis, University of York (U.K.), 2008.
\newblock arXiv:0809.4828.

\bibitem[Sor94]{Sor94}
R.~D. Sorkin.
\newblock Quantum mechanics as quantum measure theory.
\newblock {\em Mod. Phys. Lett. A}, 9:3119--3128, 1994.
\newblock arXiv:gr-qc/9401003.

\bibitem[Sum90]{Sum1990}
Stephen~J. Summers.
\newblock On the independence of local algebras in quantum ﬁeld theory.
\newblock {\em Rev. Math. Phys.}, 2:201--247, 1990.

\bibitem[SW64]{SW64}
R.~F. Streater and A.~S. Wightman.
\newblock {\em {PCT},Spin and Statistics, and All That}.
\newblock W. A. Benjamin, INC., New York, 1964.

\bibitem[Tak02]{Tak2002}
M.~Takesaki.
\newblock {\em Theory of Operator Algebras {I}}.
\newblock Springer, Berlin, 2002.
\newblock 2nd printing of the First Edition 1979.

\bibitem[Wal94]{Wal1994}
Robert~M. Wald.
\newblock {\em Quantum Field Theory in Curved Spacetime and Black Hole
  Thermodynamics}.
\newblock University of Chicago Press, Chicago, 1994.

\bibitem[Woo92]{Woo92}
N.~M.~J. Woodhouse.
\newblock {\em Geometric Quantization}.
\newblock Clarendon Press, Oxford, 2nd edition edition, 1992.

\bibitem[Yam11]{Yam11}
Hideyasu Yamashita.
\newblock Phase-space path integral and {B}rownian motion.
\newblock {\em J. Math. Phys.}, 52:022101, 2011.

\bibitem[Yam18]{Yam18}
Hideyasu Yamashita.
\newblock {G}lauber-{S}udarshan-type quantizations and their path integral
  representations for compact {L}ie groups.
\newblock \texttt{arXiv:1811.08844}, 2018.

\bibitem[Yam22a]{Yam22b}
Hideyasu Yamashita.
\newblock Antinormally-ordered quantizations, phase space path integrals and
  the {O}lshanski semigroup of a symplectic group.
\newblock \texttt{arXiv:2209.04139}, 2022.

\bibitem[Yam22b]{Yam22a}
Hideyasu Yamashita.
\newblock The {B}erezin--{S}imon quantization for {K}\"ahler manifolds and
  their path integral representations.
\newblock \texttt{arXiv:2208.12446}, 2022.

\bibitem[Yam25]{Yam2025}
Hideyasu Yamashita.
\newblock The conditional probabilities and the empirical laws in a free scalar
  {QFT} in curved spacetime.
\newblock {\em arXiv:2511.12311}, 2025.

\bibitem[Yam26a]{Yam2026b}
Hideyasu Yamashita.
\newblock The empirical laws for majorana fields in a curved spacetime.
\newblock {\em arXiv:2602.16907}, 2026.

\bibitem[Yam26b]{Yam2026a}
Hideyasu Yamashita.
\newblock A note on the conceptual problems on the unruh effect.
\newblock {\em arXiv:2602.20347}, 2026.

\bibitem[Yng05]{Yng2005}
Jakob Yngvason.
\newblock The role of type {III} factors in quantum field theory.
\newblock {\em Rept. Math. Phys.}, 55:135--147, 2005.
\newblock arXiv:math-ph/0411058.

\end{thebibliography}
\end{document}